\keywords{Parikh automata, Infinite Words, Model Checking}
\crefname{lem}{Lemma}{Lemmas}
\crefname{thm}{Theorem}{Theorems}
\crefname{cor}{Corollary}{Corollaries}
\crefname{prop}{Proposition}{Propositions}
\crefname{defi}{Definition}{Definitions}
\crefname{clm}{Claim}{Claims}
\crefname{exa}{Example}{Examples}
\crefname{obs}{Observation}{Observations}
\theoremstyle{plain}\newtheorem*{remark*}{Remark}
\def\eg{{\em e.g.}}
\def\ie{{\em i.e.}}
\renewcommand{\epsilon}{\varepsilon}
\renewcommand{\phi}{\varphi}
\newcommand{\Amc}{\ensuremath{\mathcal{A}}\xspace}
\newcommand{\Bmc}{\ensuremath{\mathcal{B}}\xspace}
\newcommand{\Emc}{\ensuremath{\mathcal{E}}\xspace}
\newcommand{\Fmc}{\ensuremath{\mathcal{F}}\xspace}
\newcommand{\Kmc}{\ensuremath{\mathcal{K}}\xspace}
\newcommand{\Lmc}{\ensuremath{\mathcal{L}}\xspace}
\newcommand{\Mmc}{\ensuremath{\mathcal{M}}\xspace}
\newcommand{\Omc}{\ensuremath{\mathcal{O}}\xspace}
\newcommand{\Pmc}{\ensuremath{\mathcal{P}}\xspace}
\newcommand{\Tmc}{\ensuremath{\mathcal{T}}\xspace}
\newcommand{\Nbb}{\ensuremath{\mathbb{N}}\xspace}
\newcommand{\Zbb}{\ensuremath{\mathbb{Z}}\xspace}
\newcommand{\Nbbinfty}{\ensuremath{\Nbb_\infty}\xspace}
\newcommand{\bbf}{\ensuremath{\mathbf{b}}\xspace}
\newcommand{\ebf}{\ensuremath{\mathbf{e}}\xspace}
\newcommand{\ibf}{\ensuremath{\mathbf{i}}\xspace}
\newcommand{\pbf}{\ensuremath{\mathbf{p}}\xspace}
\newcommand{\ubf}{\ensuremath{\mathbf{u}}\xspace}
\newcommand{\vbf}{\ensuremath{\mathbf{v}}\xspace}
\newcommand{\0}{\ensuremath{\mathbf{0}}\xspace}
\newcommand{\1}{\ensuremath{\mathbf{1}}\xspace}
\newcommand{\Inf}{\ensuremath{\mathsf{Inf}}}
\newcommand{\LPAPA}{\ensuremath{\Lmc_\mathsf{PA, PA}^\omega}\xspace}
\newcommand{\LPAReg}{\ensuremath{\Lmc_\mathsf{PA, Reg}^\omega}\xspace}
\newcommand{\LRegPA}{\ensuremath{\Lmc_\mathsf{Reg, PA}^\omega}\xspace}
\newcommand{\LRegReg}{\ensuremath{\Lmc_\mathsf{Reg, Reg}^\omega}\xspace}
\newcommand{\xmark}{\ding{55}}
\newcommand{\cmark}{\ding{51}}%
\newcommand{\NO}{\textcolor{red}{\xmark}}
\newcommand{\YES}{\textcolor{green!60!black}{\cmark}}
\newcommand{\Isf}{\ensuremath{\mathsf{I}}\xspace}
\renewcommand{\P}{\texttt{P}\xspace}
\newcommand{\NP}{\texttt{NP}\xspace}
\newcommand{\coNP}{\texttt{coNP}\xspace}
\newcommand{\coNEXP}{\texttt{coNEXP}\xspace}
\newenvironment{claimproof}[1][\proofname]{%
	\begin{proof}[#1]%
	}{%
	\end{proof}%
}
\begin{document}

% If the title is longer than 55 characters, then specify a shorter running title as the optional argument to \title. The running title should be roughyl at most 55 characters:
\title[Parikh Automata on Finite and Infinite Words]{Parikh Automata on Finite and Infinite Words}	%optional

% affiliations are numbered automatically with a, b, c (see below)
% use the optional argument to indicate the affiliation(s) of each author
% omit the argument if there is only one author, or only one affiliation
\author[M. Grobler]{Mario Grobler\lmcsorcid{0000-0001-8103-6440}}[a]
\author[L. Sabellek]{Leif Sabellek\lmcsorcid{0000-0001-8051-5749}}[b]
\author[S. Siebertz]{Sebastian Siebertz\lmcsorcid{0000-0002-6347-1198}}[a]

% affiliation 1 (automatically numbered a)
\address{University of Bremen}	%optional
% write emails for all authors having that affiliation
\email{grobler@uni-bremen.de, siebertz@uni-bremen.de}  %optional

% affiliation 2 (automatically numbered b)
\address{CONTACT Software}	%optional
\email{leif.sabellek@contact-software.com}  %optional

%% etc.

%% required for running head on odd and even pages, use suitable
%% abbreviations in case of long titles and many authors:

%%%%%%%%%%%%%%%%%%%%%%%%%%%%%%%%%%%%%%%%%%%%%%%%%%%%%%%%%%%%%%%%%%%%%%%%%%%

%% the abstract has to PRECEDE the command \maketitle:
%% be sure not to issue the \maketitle command twice!
%
\begin{abstract}
 We study Parikh automata on finite and infinite words. First we establish some results for Parikh automata on finite words. Following, we present several definitions of Parikh automata on infinite words. We consider the deterministic as well as the non-deterministic variants and study closure properties, expressiveness, and common decision problems with applications to model checking. Furthermore, we compare our models to other models with counting mechanisms operating on infinite words.
\end{abstract}
\maketitle

\section{Introduction}

Finite automata find numerous applications in formal language theory, logic, verification, and many more, in particular due to their good closure properties and algorithmic properties. To enrich this spectrum of applications even more, it has been a fruitful direction to add features to finite automata to capture also situations beyond the regular realm. 

One such possible extension of finite automata with counting mechanisms has been introduced by Greibach in her study of blind and partially blind (one-way) multicounter machines~\cite{greibach}. Blind multicounter machines are generalized by weighted automata as introduced in~\cite{groupautomata}. 
Parikh automata (PA) were introduced by Klaedtke and Ruess in~\cite{klaedtkeruess}. A~Parikh automaton is a non-deterministic finite automaton 
that is additionally equipped with a semi-linear set~$C$, and every transition is equipped with a $d$-tuple of non-negative integers. 
Whenever an input word is read, $d$ counters are initialized with the values $0$ and every time a transition is used, the counters are incremented by the values in the tuple of the transition accordingly. 
An input word is accepted if the~PA ends in an accepting state and additionally, the resulting $d$-tuple of counter values lies in~$C$. 
Klaedtke and Ruess showed that~PA are equivalent to weighted automata over the group $(\Zbb^k, +, \0)$, and hence equivalent to Greibach's blind multicounter machines, as well as to reversal bounded multicounter machines~\cite{bakerbook,ibarra}.
Recently it was shown that these models can be translated into each other using only logarithmic space \cite{logspaceconversion}.
In this work we call the class of languages recognized by any of these models \emph{Parikh recognizable}. 
Klaedtke and Ruess~\cite{klaedtkeruess} showed that the class of Parikh recognizable languages is precisely the class of languages definable in weak existential monadic second-order logic of one successor extended with linear cardinality constraints. 
The class of Parikh recognizable languages contains all regular languages, but also many more, even languages that are not context-free, \eg, the language $\{a^nb^nc^n \mid n \in \mathbb{N}\}$. On the other hand, the language of palindromes is context-free, but not Parikh recognizable.
On finite words, blind multicounter automata, Parikh automata and related models have been investigated extensively, extending~\cite{greibach,klaedtkeruess} for example by affine~PA and~PA on letters \cite{cadilhac2011expressiveness,DBLP:journals/ita/CadilhacFM12},
bounded~PA~\cite{DBLP:journals/ijfcs/CadilhacFM12}, two-way~PA~\cite{FiliotGM19},~PA with a pushdown stack \cite{karianto2004parikh} as well as a combination of both \cite{DBLP:conf/fossacs/DartoisFT19}, history-deterministic~PA~\cite{erlich2022history}, automata and grammars with valences~\cite{valence,valenceRevisited}, and several algorithmic applications, \eg, in the context of path logics for querying graphs~\cite{emptynp}.

In the well-studied realm of verification of reactive systems, automata-related approaches provide a powerful framework to tackle important problems such as model checking problems \cite{Baier2008,Clarke,ClarkeHandbook}.
However, computations of such systems are generally represented as infinite objects, as we often expect them to not terminate (but rather interact with the environment).
Hence, automata processing infinite words are well-suited to approach the model checking problem. One common approach is the following: assume we are given a system, \eg, represented as a Kripke structure $\Kmc$, and a specification represented as an automaton~$\Amc$ (or any formalism that can be translated into one) accepting all correct computations.
Then we can verify the correctness of the system by solving the inclusion problem, that is, answering the question whether every computation of $\Kmc$ is also a computation of~$\Amc$.
However, solving the inclusion problem generally requires to complement the automaton~$\Amc$, which is often expensive or not even possible. Hence, another common counterexample-driven approach is the following.
Let~$\Amc$ be an automaton accepting all incorrect computations.
Then we can verify that the system has no incorrect computations by solving the intersection-emptiness problem of~$\Kmc$ and~$\Amc$, that is, answering the question whether their sets of computations are disjoint.
Büchi automata, recognizing $\omega$-regular languages, provide a natural extension of classical (finite word) automata to infinite words, enjoying closure under the Boolean operations and decidable decision problems. However, being one of the most basic models, their expressiveness is quite limited.

Let us consider two examples. In a three-user setting in an operating system we would like to ensure that none of the users gets a lot more resources than the other two. 
One way to represent a such a specification is considering the $\omega$-language 
\[\{\alpha \in \{a,b,c\}^\omega \mid \text{ there are infinitely many prefixes $w$ of $\alpha$ with $|w|_a = |w|_b = |w|_c$}\},\] 
stating that there are infinitely many moments where the resources are distributed equally.
Similarly, one could provide a set of unwanted computations via the $\omega$-language
\[\{\alpha \in \{a,b,c\}^\omega \mid \text{ there are infinitely many prefixes $w$ of $\alpha$ with $|w|_a > |w|_b + |w|_c$}\},\]
stating that one user gets more resources than the other two users combined infinitely often. 
As another example, consider a classical producer-consumer setting, where a producer continuously produces a good, and a consumer consumes these goods continuously. 
We can model this setting as an infinite word and ask that at no time the consumer has consumed more than the producer has produced at this time. 
Bad computations can be modeled via the $\omega$-language 
\[\{\alpha \in \{p,c\}^\omega \mid \text{there is a prefix $w$ of $\alpha$ with $|w|_c > |w|_p$}\}.\] 
Such specifications are not $\omega$-regular, as these require to ``count arbitrarily''.
This motivates the study of finite automata with counting mechanisms on infinite words.
Fernau and Stiebe initiated this study and introduced blind counter automata on infinite words~\cite{blindcounter}, extending Greibachs blind multicounter machines.
Independently, Klaedtke and Ruess proposed possible extensions of Parikh automata to infinite words. This line of research was recently picked up by Guha et al.~\cite{infiniteZimmermann}. 

Guha et al.~\cite{infiniteZimmermann} introduced \emph{safety, reachability, Büchi- and co-Büchi Parikh automata}.
These models provide natural generalizations of automata models with Parikh conditions on infinite words. 
One shortcoming of safety, reachability and co-Büchi Parikh automata is that they do not generalize Büchi automata, that is, they cannot recognize all $\omega$-regular languages. 
The non-emptiness problem, which is highly relevant for model checking applications, is undecidable for safety and co-Büchi Parikh automata. Furthermore, none of these models is closed under the $\omega$-operation, meaning that for every model there is a Parikh recognizable (finite word) language $L$ such that $L^\omega$ is not recognizable by any of these models. 
Guha et al.\ raised the question whether (appropriate variants of) Parikh automata on infinite words have the same expressiveness as blind counter automata on infinite words. 

Büchi's famous theorem states that $\omega$-regular languages are characterized as languages of the form $\bigcup_i U_i V_i^\omega$, where the $U_i$ and $V_i$ are regular languages~\cite{buechi}.  
As a consequence of the theorem, many properties of $\omega$-regular languages are inherited from regular languages. 
For example, algorithms deciding non-emptiness for finite word automata can be used to solve the non-emptiness problem for Büchi automata as well.
In their systematic study of blind counter automata operating on infinite words, Fernau and Stiebe~\cite{blindcounter} considered the class $\Kmc_*$, the class of $\omega$-languages of the form $\bigcup_i U_i V_i^\omega$ for Parikh recognizable languages $U_i$ and $V_i$.
They proved that the class of $\omega$-languages recognizable by blind counter automata is a proper subset of the class~$\Kmc_*$.
They posed as an open problem to provide automata models that capture classes of $\omega$-languages of the form $\bigcup_i U_i V_i^\omega$ where~$U_i$ and~$V_i$ are described by a certain mechanism. 
In this work we propose \emph{reachability-regular Parikh automata}, \emph{limit Parikh automata}, \emph{strong reset Parikh automata}, and \emph{weak reset Parikh automata} as new automata models and study their deterministic and non-deterministic variants. First, we focus on their non-deterministic variants.

We pick up the question of Fernau and Stiebe~\cite{blindcounter} to consider classes of $\omega$-languages of the form $\bigcup_i U_i V_i^\omega$ where $U_i$ and $V_i$ are described by a certain mechanism. 
We define the four classes $\LRegReg$, $\LPAReg$, $\LRegPA$ and $\LPAPA$ of $\omega$-languages of the form $\bigcup_iU_iV_i^\omega$, where the $U_i,V_i$ are regular or Parikh recognizable languages of finite words, respectively. By Büchi's theorem the class $\LRegReg$ is the class of $\omega$-regular languages. 

We show that the newly introduced (non-deterministic) reachability-regular Parikh automata, which are a small modification of reachability Parikh automata (as introduced by Guha et al.~\cite{infiniteZimmermann}) capture exactly the class $\LPAReg$. 
This model turns out to be equivalent to (non-deterministic) limit Parikh automata. This model was hinted at in the concluding remarks of~\cite{klaedtkeruess}. 

Fully resolving the classification of the above mentioned classes we introduce 
weak and strong reset Parikh automata, whose non-deterministic variants turn out to be equivalent. In contrast to all other Parikh models, these are closed under the $\omega$-operation, while maintaining all algorithmic properties of~PA (in particular, non-emptiness is $\NP$-complete and hence decidable). 
We show that the class of $\omega$-languages recognized by reset~PA is a strict superclass of $\LPAPA$. 
We show that appropriate graph-theoretic restrictions of reset~PA exactly capture the classes $\LPAPA$ and $\LRegPA$, yielding the first automata characterizations for these classes. On our way, we study the closure properties of the newly introduced models and obtain the following hierarchy.
\begin{align*}
	\text{reachability~PA } & \subsetneq \text{ reachability-regular~PA } = \text{ limit~PA} 
	\\ 
	& \subsetneq \text{ Büchi~PA } \subsetneq  
	\text{ weak reset~PA } = \text{ strong reset~PA.}
\end{align*}

The automata models introduced by Guha et al.~\cite{infiniteZimmermann} do not have $\epsilon$-transitions, while blind counter automata as introduced by Fernau and Stiebe~\cite{blindcounter} have such transitions. 
Towards answering the question of Guha et al.\ we study the effect of $\epsilon$-transitions in all Parikh automata models. 
We show that all models except safety and co-Büchi Parikh automata admit $\epsilon$-elimination. This in particular answers the question of Guha et al.~\cite{infiniteZimmermann} whether blind counter automata and Büchi Parikh automata have the same expressiveness on infinite words affirmative. 
We show that safety and co-Büchi~PA with $\epsilon$-transitions are strictly more powerful than their variants without $\epsilon$-transitions. In particular, co-Büchi~PA with $\varepsilon$-transitions generalize Büchi~PA, while safety~PA with $\varepsilon$-transitions are even more powerful than reset~PA.

Finally, we study the classical decision problems for the newly introduced models, namely emptiness, membership and universality. We show that the results for (non-deterministic)~PA on finite words translate to the infinite word setting, that is, emptiness is $\coNP$-complete, membership is $\NP$-complete and universality is undecidable.
Additionally, we study
the intersection-emptiness and inclusion problems of the newly introduced models, as these are important problems in order to study model checking problems, the core problems in the field of formal verification. 
Formally, we are given a system~$\Kmc$ as a Kripke structure (a safety automaton) or a~PA and a specification $\Amc$ as a~PA. 
The question whether at least one computation of a Kripke structure satisfies the specification (which we call existential safety model checking and boils down to solving intersection-emptiness) is motivated by the testing the absence of incorrect computations, as described above.
%We show that this problem remains $\coNP$-complete for our models.
Similarly, the question whether all computations of a~PA satisfy the specification (which we call universal~PA model checking and boils down to solving inclusion) has been studied in~\cite{infiniteZimmermann} for reachability~PA, Büchi~PA, safety~PA and co-Büchi~PA.
Guha et~al.~\cite{infiniteZimmermann} show that this problem is undecidable for the non-deterministic variants of these models.
We study this problem as well as the universal safety model checking problem and the existential~PA model checking problem for the remaining deterministic models. 
%Again, deterministic limit~PA shine as all these problems are decidable for them. For the other models, some problems are decidable and some are not.
We refer to \Cref{tab:closure}, \Cref{tab:decision}, and \Cref{tab:mc} for an overview of the results.

As indicated above, there are many scenarios where non-determinism adds expressiveness or succinctness to their deterministic counterparts. However, this often comes at the price of important decision problems becoming hard to solve, or even undecidable, see also~\cite{ClarkeHandbook,thomas2002automata}.
This is also the case for Parikh automata, as witnessed, \eg, by co-Büchi~PA and safety~PA.
As mentioned above, Guha et al.~\cite{infiniteZimmermann} have shown that universality is undecidable for the non-deterministic variants of these models, yet being more powerful than their deterministic counterparts. However, the deterministic variants enjoy a $\coNP$-complete (and hence decidable) universality problem~\cite{infiniteZimmermann}.

This motivates the study of the deterministic variants of the newly introduced models, namely deterministic limit~PA, deterministic reachability-regular~PA, deterministic strong reset~PA, and deterministic weak reset~PA.
We investigate their expressiveness, closure properties, and common decision problems.
First we show that the above mentioned hierarchy results for the non-deterministic variants do not translate to the deterministic setting. While 

\vspace{-3mm}
\[\text{deterministic strong reset~PA $\subsetneq$ deterministic weak reset~PA}\] 
and 
\begin{align*}
	\text{
		deterministic} & \text{ reachability~PA} \\ &\subsetneq \text{ deterministic reachability-regular~PA} \\ & \subsetneq \text{deterministic weak reset~PA} 
\end{align*}
still holds, all other models become pairwise incomparable.
Furthermore, we show that among all studied deterministic models only deter\-ministic limit~PA generalize Büchi automata in the sense that they recognize all $\omega$-regular languages.

Deterministic limit~PA also shine in the light of closure properties: as we show, among all studied~PA operating on infinite words (the deterministic variants as well as the non-deterministic ones) they are the only ones closed under all Boolean operations.

\begin{table}
	\begin{center}
		\begin{tabular}{l|ccc}
			& $\cup$ & $\cap$ & $\overline{\phantom{a}}$ \\[0.5mm]
			\hline\\[-2.5mm]
			limit~PA & \YES & \YES & \NO \\[1mm]
			reachability-regular~PA & \YES & \YES & \NO \\[1mm]
			weak reset~PA & \YES & \NO & \NO \\[1mm]
			strong reset~PA & \YES& \NO & \NO \\[2mm]\hline\\[-2.5mm]
			
			deterministic limit~PA& \YES & \YES & \YES \\[1mm]
			deterministic reachability-regular~PA & \NO & \NO & \NO \\[1mm]
			deterministic weak reset~PA& \NO & \NO & \NO \\[1mm]
			deterministic strong reset~PA & \NO & \NO & \NO \\[1mm]
		\end{tabular}
		
		\smallskip
		\caption{\centering Closure properties. The bar in the right column denotes the complement.}
		\label{tab:closure}
	\end{center}
\end{table}

This benefit also yields decidable decision problems. In contrast to the all other models that were mentioned before, emptiness and universality are both decidable for deterministic limit~PA. We show that also strong reset~PA benefit from determinism: although having bad closure properties, their universality problems becomes decidable. However, as we show, for deterministic reachability-regular~PA and deterministic weak reset~PA the universality problem remains undecidable.

\begin{table}
	\begin{center}
		\begin{tabular}{l|ccc}
			& $L = \varnothing?$ & $uv^\omega \in L?$ & $L = \Sigma^\omega?$ \\[0.5mm] \hline\\[-2.5mm]
			
			limit~PA & $\coNP$-complete & $\NP$-complete & undecidable\\[1mm]
			reachability-regular~PA &$\coNP$-complete &\NP-complete & undecidable \\[1mm]
			weak reset~PA &$\coNP$-complete  &$\NP$-complete & undecidable \\[1mm]
			strong reset~PA &$\coNP$-complete & $\NP$-complete & undecidable \\[2mm]\hline\\[-2.5mm]
			
			det.\ limit~PA & \coNP-complete & \NP-complete & decidable, $\Pi_2^\P$-hard  \\[1mm]
			det.\ reachability-regular~PA & \coNP-complete & \NP-complete & undecidable  \\[1mm]
			det.\ weak reset~PA & \coNP-complete & \NP-complete & undecidable \\[1mm]
			det.\ strong reset~PA & \coNP-complete & \NP-complete & $\Pi_2^\P$-complete \\[1mm]\end{tabular}
		
		\smallskip
		\caption{\centering Decision problems.}
		\label{tab:decision}
	\end{center}
\end{table}

While universality and hence the universal model checking problems are undecidable for all non-deterministic variants of the aforementioned models, the situation changes for deterministic limit~PA and deterministic strong reset~PA. Again, we refer to \Cref{tab:closure}, \Cref{tab:decision}, and \Cref{tab:mc} for an overview of the results. Recall that $\Pi_2^\P$ denotes the second universal level of the polynomial hierarchy.

\begin{table}
	\begin{center}
		\begin{tabular}{l|llll}
			& \multicolumn{4}{c}{Model Checking}  \\ 
			& \multicolumn{2}{c|}{Kripke} & \multicolumn{2}{c}{PA} \\ 
			& \multicolumn{1}{c|}{$\exists$} & \multicolumn{1}{c|}{$\forall$} & \multicolumn{1}{c|}{$\exists$} & \multicolumn{1}{c}{$\forall$}  \\ \hline \\[-2.5mm]
			
			limit~PA & $\coNP$-c.\  &undec. &$\coNP$-c.&undec.\\[1mm]
			reachability-regular~PA &$\coNP$-c.\  &undec.&$\coNP$-c.&undec. \\[1mm]
			weak reset~PA &$\coNP$-c.\  &undec.&undec.&undec. \\[1mm]
			strong reset~PA &$\coNP$-c.\ & undec. & undec. & undec.\\[1mm]			
			reachability~PA &$\coNP$-c. $(\square)$\  & undec.\ $(\square)$& $\coNP$-c.\ $(\square)$ &undec.\ $(\square)$\\[1mm]
			Büchi~PA &$\coNP$-c.\ & undec.\ $(\square)$& $\coNP$-c.\ &undec.\ $(\square)$ \\[1mm]
			safety~PA &undec.\ $(\square)$& undec.\ $(\square)$& undec.\ $(\square)$ & undec.\ $(\square)$\\[1mm]
			co-Büchi~PA &undec.\ $(\square)$& undec.\ $(\square)$& undec.\ $(\square)$ &undec. $(\square)$ \\[2mm]\hline\\[-2.5mm]
			
			det.\  limit~PA & $\coNP$-c.\  &dec., $\Pi_2^\P$-hard &$\coNP$-c.&dec., $\Pi_2^\P$-hard\\[1mm]
			det.\  reachability-regular~PA &$\coNP$-c.\  &undec.&$\coNP$-c.&undec. \\[1mm]
			det.\  weak reset~PA &$\coNP$-c.\  &undec.&undec.&undec. \\[1mm]
			det.\  strong reset~PA &$\coNP$-c.\ & $\Pi_2^\P$-c.& undec. &$\Pi_2^\P$-c.\\[1mm]			
			det.\  reachability~PA &$\coNP$-c.\ $(\square)$ & undec.\ $(\square)$& $\coNP$-c.\ $(\square)$ &undec.\ $(\square)$\\[1mm]
			det.\  Büchi~PA &$\coNP$-c.\ & undec.\ $(\square)$& $\coNP$-c.\ &undec.\ $(\square)$ \\[1mm]
			det.\  safety~PA &undec.\ $(\square)$& $\coNP$-c.\ $(\square)$& undec.\ $(\square)$ &\coNP-c.\ $(\square)$\\[1mm]
			det.\  co-Büchi~PA &undec.\ $(\square)$& $\coNP$-c.\ $(\square)$& undec.\ $(\square)$ &\coNP-c.\ $(\square)$
		\end{tabular}
		
		\smallskip
		\caption{(Un)decidability results of model checking problems. The entries marked with a~$(\square)$ were shown in (or follow immediately from)~\cite{infiniteZimmermann}}
		\label{tab:mc}
	\end{center}
\end{table}

%\vspace{-6mm}

We remark that the $\Pi_2^\P$-hardness results we obtain for deterministic limit~PA are not tight, that is, it remains open whether these problems can be solved in $\Pi_2^\P$. However, we are able to show that if we have the guarantee that the semi-linear sets of the~PA can be complemented in polynomial time, then the $\Pi_2^\P$-hard problems for deterministic limit~PA and deterministic strong reset~PA become $\coNP$-complete. 
We also remark that the \coNP-completeness result for deterministic Büchi~PA does not follow from~\cite{infiniteZimmermann}.

\section{Preliminaries}
\label{secauto:prelims}

In this section we recall the definitions and relevant notations.
We write $\Zbb$ for the set of all integers and $\Nbb$ for the set of non-negative integers including~$0$. 
For $m, n \in \Nbb$, we denote by $[m,n]$ the set $\{m, m+1, \dots, n\}$ with the convention that $[m, n] = \varnothing$ if $m > n$. We abbreviate the set $[1,n]$ by $[n]$.
Furthermore, let $\Nbbinfty = \Nbb \cup \{\infty\}$. Throughout this paper we mainly use the variables $c,d,i,j,k,\ell,m,n,z$ to denote positive integers and we will tacitly assume this if not explicitly stated otherwise.

\subsection{Words and Languages}
Let~$\Sigma$ be an alphabet, \ie, a finite non-empty set and denote by~$\Sigma^*$ the set of all finite words over~$\Sigma$. A language is a subset $L \subseteq \Sigma^*$.
For a word $w \in \Sigma^*$, we denote by $|w|$ the length of~$w$, and by $|w|_a$ the number of occurrences of the symbol $a \in \Sigma$ in~$w$. 
We write $\varepsilon$ for the empty word of length~$0$ and denote by $\Sigma^+ = \Sigma^* \setminus \{\varepsilon\}$ the set of all non-empty words over $\Sigma$.
%We say a word $v \in \Sigma^*$ is a \emph{subword} of a word $w \in \Sigma^*$, denoted by $v \subleq w$, if $v$ can be obtained from~$w$ by removing symbols at arbitrary positions of~$w$. More formally, $v = v_1 \dots v_m$ is a subword of $w = w_1 \dots w_n$ if there is a function $f : [m] \rightarrow [n]$ such that $f(i) < f(j)$ if $i < j$ and $v_i = w_{f(i)}$ for all $i \leq m$. 
We call $v$ \emph{prefix} of $w = w_1 \dots w_n$ if $v = w_1 \dots w_i$ for some $i \leq n$; \emph{suffix} of $w$ if $v = w_i \dots w_n$ for some $i \leq n+1$; and \emph{infix} of $w$ if $v = w_i \dots w_j$ for $i,j \leq n+1$. Note that $\varepsilon$ is a prefix, suffix, and infix of every word $w \in \Sigma^*$.

An \emph{infinite word} over an alphabet $\Sigma$ is a function $\alpha : \Nbb \setminus \{0\} \rightarrow \Sigma$. We often write~$\alpha_i$ instead of~$\alpha(i)$. 
Thus, we can understand an infinite word as an infinite sequence of symbols $\alpha = \alpha_1\alpha_2\alpha_3\ldots$. Hence, the notions of prefix, infix and suffix translate to infinite words the obvious way; however, we only consider finite prefixes and infixes of infinite words, while every suffix is always an infinite word itself. For $m \leq n$, we abbreviate the infix $\alpha_m \ldots \alpha_n$ of~$\alpha$ by $\alpha[m:n]$. 
We denote by $\Sigma^\omega$ the set of all infinite words over $\Sigma$. 
We call a subset $L \subseteq \Sigma^\omega$ an \emph{$\omega$-language}. 
Moreover, for $L \subseteq \Sigma^*$, we define $L^\omega = \{w_1w_2\dots \mid w_i \in L \setminus \{\varepsilon\}\} \subseteq \Sigma^\omega$. 
We call an infinite word $\alpha \subseteq \Sigma^\omega$ \emph{ultimately periodic} if it can be written as $\alpha = uv^\omega$ for finite words $u,v \in \Sigma^*$. Likewise, we call a non-empty $\omega$-language \emph{ultimately periodic} if it contains an ultimately periodic infinite word.
%We call a non-empty $\omega$-language $L \subseteq \Sigma^\omega$ \emph{ultimately periodic} if it contains an infinite word of the form $uv^\omega$ for finite words $u,v \in \Sigma^*$. We refer to such infinite words also as \emph{ultimately periodic}.

In addition to the convention of variable use mentioned above, we mainly use the variables~$a$ and~$b$ to denote symbols,~$w$ to denote finite words and~$\alpha, \beta$ to denote infinite words and we will tacitly assume this if not explicitly stated otherwise.
%\nopagebreak
\subsection{\boldmath Regular and \texorpdfstring{$\omega$}{omega}-regular Languages}

A \emph{non-deterministic finite automaton} (NFA) is a tuple $\Amc = (Q, \Sigma, q_0, \Delta, F)$, where~$Q$ is a finite set of states, $\Sigma$ is the input alphabet, $q_0 \in Q$ is the initial state, $\Delta \subseteq Q \times \Sigma \times Q$ is the set of transitions and $F \subseteq Q$ is the set of accepting states. 
We call~$\Amc$ \emph{deterministic} if for every pair $(p, a) \in Q \times \Sigma$ there is exactly one transition of the form $(p, a, q) \in \Delta$ for some $q \in Q$.
A \emph{run} of $\Amc$ on a word $w = w_1 \ldots w_n\in \Sigma^*$ is a (possibly empty) sequence of transitions $r = r_1 \ldots r_n$ with $r_i = (p_{i-1}, w_i, p_i)\in \Delta$ such that~$p_0=q_0$. 
We say $r$ is \emph{accepting} if $p_n \in F$. The empty run on~$\epsilon$ is accepting if $q_0 \in F$. We define the \emph{language recognized by \Amc} as ~
\[L(\Amc) = \{w \in \Sigma^* \mid \text{there is an accepting run of $\Amc$ on $w$}\}.\] If a language~$L$ is recognized by some NFA $\Amc$, we call~$L$ \emph{regular}.

A \emph{Büchi automaton} is an NFA $\Amc = (Q, \Sigma, q_0, \Delta, F)$ that takes \mbox{infinite} words as input. 
A \emph{run} of $\Amc$ on an infinite word $\alpha_1\alpha_2\alpha_3\dots$ is an infinite sequence of transitions $r = r_1 r_2 r_3 \dots$ with $r_i = (p_{i-1}, \alpha_i, p_i) \in \Delta$ such that $p_0=q_0$. 
We say $r$ is \emph{accepting} if there are infinitely many~$i$ with $p_i \in F$. 
We define the \emph{$\omega$-language recognized by~$\Amc$} as $L_\omega(\Amc) = \{\alpha \in \Sigma^\omega \mid \text{there is an accepting run of $\Amc$ on $\alpha$}\}$. 
If an $\omega$-language $L$ is recognized by some Büchi automaton $\Amc$, we call $L$ \emph{$\omega$-regular}.

Büchi's theorem establishes an important connection between regular and $\omega$-regular languages:
\begin{thm}[Büchi \cite{buechi}]
	\label{thm:buechi}
	A language $L \subseteq \Sigma^\omega$ is $\omega$-regular if and only if there are regular languages $U_1, V_1, \dots, U_n, V_n \subseteq \Sigma^*$ for some $n \geq 1$ such that $L = \bigcup_{i \leq n} U_i V_i^\omega$.
\end{thm}

We will often denote the class of $\omega$-regular languages by $\LRegReg$.

If every state of a Büchi automaton~$\Amc$ is accepting, we call $\Amc$ a \emph{safety automaton}.
Similarly, a \emph{Muller automaton} is a tuple $\Amc = (Q, \Sigma, q_0, \Delta, \Fmc)$, where $Q, \Sigma, q_0$, and $\Delta$ are defined as for Büchi automata, and $\Fmc \subseteq 2^Q$ is a collection of sets of accepting states. Runs are defined as for Büchi automata, and a run~$r$ is accepting if the sets of states that appear infinitely often in~$r$ is contained in~$\Fmc$. 
Deterministic Muller automata have the same expressiveness as non-deterministic Büchi automata~\cite{mcnaughton1966testing}. However, deterministic Büchi automata are less expressive than their non-deterministic counterpart~\cite{landweber}. 
If an $\omega$-language $L$ is recognized by some deterministic Büchi automaton, we call $L$ \emph{deterministic $\omega$-regular}.
For a (finite word) language $W \subseteq \Sigma^*$, we define \mbox{$\overline{W} = \{\alpha \in \Sigma^\omega \mid \alpha[1:i] \in W \text{ for infinitely many } i\}$}. An $\omega$-language~$L$ is deterministic $\omega$-regular if and only if $L = \overline{W}$ for a regular language~$W$~\cite{landweber}.

\subsection{Semi-linear Sets and Presburger Arithmetic}
%For some $d \geq 1$, let $\bbf \in \Nbb^d$ and $P = \{\pbf_1, \dots, \pbf_\ell\} \subseteq \Nbb^d$ be a finite set for some $\ell \geq 0$. A \emph{linear set} of dimension $d$ is a set of the form $C(\bbf, P) = \{\bbf + \pbf_1z_1 + \dots + \pbf_\ell z_\ell \mid z_1, \dots, z_\ell \in \Nbb\} \subseteq \Nbb^d$; we call $\bbf$ the \emph{base vector} and $P$ the set of \emph{period vectors}.
A \emph{linear set} (of dimension $d\geq 1$) is a set of the form 
\[C(\bbf, P) = \{\bbf + \pbf_1z_1 + \dots + \pbf_\ell z_\ell \mid z_1, \dots, z_\ell \in \Nbb\} \subseteq \Nbb^d,\] 
where $\bbf \in \Nbb^d$ and $P = \{\pbf_1, \dots, \pbf_\ell\} \subseteq \Nbb^d$ is a finite set of vectors for some $\ell \geq 0$. We call~$\bbf$ the \emph{base vector} and $P$ the set of \emph{period vectors}. We call linear sets of the form $C(\0, P)$ \emph{homogeneous}.
A \emph{semi-linear set} is a finite union of linear sets. 
We also consider semi-linear sets over~$\Nbbinfty^d$, that is, semi-linear sets with an additional symbol~$\infty$ for infinity. As usual, addition of vectors and multiplication of a vector with a number is defined component-wise, where \mbox{$z + \infty = \infty + z = \infty + \infty =\infty$} for all $z \in \Nbb$, $z \cdot \infty = \infty \cdot z = \infty$ for all $z \geq 1$, and \mbox{$0 \cdot \infty = \infty \cdot 0 = 0$}.
For vectors $\ubf = (u_1, \dots, u_c)\in \Nbbinfty^c$ and $\vbf = (v_1, \dots, v_d) \in \Nbbinfty^d$, we denote by $\ubf \cdot \vbf = (u_1, \dots, u_c, v_1, \dots, v_d) \in \Nbbinfty^{c+d}$ the \emph{concatenation of~$\ubf$ and $\vbf$}. 
We extend this definition to sets of vectors. Let $C \subseteq \Nbbinfty^c$ and $D \subseteq \Nbbinfty^d$. Then $C \cdot D = \{\ubf \cdot \vbf \mid \ubf \in C, \vbf \in D\}$ $\subseteq \Nbbinfty^{c+d}$.
We denote by~$\0^d$ the $d$-dimensional all-zero vector, by $\1^d$ the $d$-dimensional all-one-vector, by~$\ebf^d_i$ the $d$-dimensional vector where the $i$th entry is~$1$ and all other entries are~0, and by~$\ibf^d_i$ the $d$-dimensional vector where the $i$th entry is~$\infty$ and all other entries are~0. We often drop the superscript $d$ if the dimension is clear. For all complexity theoretical results, we assume a binary and explicit encoding of semi-linear sets unless stated otherwise.

\begin{figure}
\centering
\begin{tikzpicture}
\draw[step=1,gray,very thin] (0,0) grid (5.5,5.5);
\draw[very thick, ->] (0,0) -- (5.5,0) node[below right]{$v_1$};
\draw[very thick, ->] (0,0) -- (0,5.5) node[above left] {$v_2$};

\node at(-.3, -.3) {$0$};

\foreach \x in {1,...,5}
	\node at(\x, -.3) {$\x$};
	
\foreach \y in {1,...,5}
	\node at(-.3, \y) {$\y$};
	
\foreach \x in {0,...,4} {
	%\draw[orange] (\x - 0.2, \x - 0.2) -- (\x + 0.2, \x + 0.2);
	%\draw[orange] (\x - 0.2, \x + 0.2) -- (\x + 0.2, \x - 0.2); 
    \draw[orange,-{Latex}] (\x, \x) -- (\x + 1, \x + 1);   
}  

%\draw[orange] (4.8, 4.8) -- (5.2, 5.2);
%\draw[orange] (5.2, 4.8) -- (4.8, 5.2);

\draw[blue,-{Latex}] (0,0) -- (1,2);
\draw[blue,-{Latex}] (1,2) -- (3,3);
\draw[blue,-{Latex}] (3,3) -- (5,4);
\draw[blue,-{Latex}] (1,3) -- (3,4);
\draw[blue,-{Latex}] (3,4) -- (5,5);
\draw[blue,-{Latex}] (1,4) -- (3,5);

\draw[blue,-{Latex}] (1,2) -- (1,3);
\draw[blue,-{Latex}] (1,3) -- (1,4);
\draw[blue,-{Latex}] (1,4) -- (1,5);
\draw[blue,-{Latex}] (3,3) -- (3,4);
\draw[blue,-{Latex}] (3,4) -- (3,5);
\draw[blue,-{Latex}] (5,4) -- (5,5);

\draw[orange, dashed] (5,5) -- (5.5, 5.5);
\draw[blue, dashed] (1,5) -- (1, 5.5);
\draw[blue, dashed] (1,5) -- (1.5, 5.25);
\draw[blue, dashed] (3,5) -- (3, 5.5);
\draw[blue, dashed] (5,5) -- (5, 5.5);
\draw[blue, dashed] (3,5) -- (3.5, 5.25);
\draw[blue, dashed] (5,4) -- (5.5, 4.25);
\draw[blue, dashed] (5,5) -- (5.5, 5.25);
    
\end{tikzpicture}
\caption{The semi-linear set $C = C((0,0), \{(1,1)\}) \cup C((1,2), \{(0,1), (2,1)\})$.}
\label{fig:semi}
\end{figure}

Semi-linear sets coincide with sets definable in Presburger arithmetic, that is, the first-order theory of (non-negative) integers and order, in the following sense:
A Presburger formula $\varphi(\vbf)$ with $d$ free variables defines the set $\{\vbf \in \Nbb^d \mid (\Nbb, 0, 1, +, <) \models \varphi(\vbf)\}$. By a groundbreaking result of Ginsburg and Spanier~\cite{semilin}, a set $C \subseteq \Nbb^d$ is semi-linear if and only if it is definable in Presburger arithmetic. Observe that this result implies that semi-linear sets are closed under the Boolean operations. We refrain from formally introducing first-order logic and refer to the textbooks~\cite{chang2013model,libkinElements} for an in-depth introduction.
Instead, we outline the connection by the following example. Consider the linear sets $C_1 = C((0,0), \{(1,1)\})$ and $C_2 = C((1,2), \{(0,1), (2,1)\})$, and let $C = C_1 \cup C_2$, see \Cref{fig:semi} for an illustration.
We can define $C_1$ and $C_2$ by the Presburger formulas
\[\varphi_1(x_1, x_2) = \exists z. (x_1 = z \land x_2 = z)\]
and
\[\varphi_2(x_1, x_2) = \exists z_1. \exists z_2. (x_1 = 1 + 2z_2 \land x_2 = 2 + z_1 + z_2),\] 
respectively (where $2$ is an abbreviation for $1+1$ and $2z_2$ is an abbreviation for $z_2 + z_2$). 
Hence, for each period vector we guess the number of iterations using existential quantifiers. Consequently, we can define $C$ by 
\[\varphi(x_1, x_2) = \varphi_1(x_1, x_2) \lor \varphi_2(x_1, x_2).\]
The translation of Presburger formulas into semi-linear sets involves way more sophisticated methods, including a procedure to eliminate quantifiers; we refer the interested reader to~\cite{haaseTaming,haase}.

\subsection{Parikh Recognizable Languages}
A \emph{Parikh automaton} (PA) is a tuple $\Amc = (Q, \Sigma, q_0, \Delta, F, C)$ where $Q$, $\Sigma$, $q_0$, and~$F$ are defined as for NFA, $\Delta \subseteq Q \times \Sigma \times \Nbb^d \times Q$, for some \mbox{$d\geq 1$}, is a finite set of \emph{labeled transitions}, and $C \subseteq \Nbb^d$ is a semi-linear set. 
We call $d$ the \emph{dimension} of $\Amc$ and interpret~$d$ as a number of \emph{counters}.
%refer to the entries of a vector $\vbf$ in a transition $(p, a, \vbf, q) \in \Delta$ as \emph{counters}.
Analogously to NFA, we call~$\Amc$ \emph{deterministic} if for every pair $(p,a) \in Q \times \Sigma$ there is exactly one labeled transition of the form \mbox{$(p,a,\vbf,q) \in \Delta$} for some $\vbf \in \Nbb^d$ and $q \in Q$.
A \emph{run} of $\Amc$ on a word $w = w_1 \dots w_n$ is a (possibly empty) sequence of labeled transitions $r = r_1 \dots r_n$ with $r_i = (p_{i-1}, w_i, \vbf_i, p_i) \in \Delta$ such that $p_0 = q_0$. We define the \emph{extended Parikh image} of a run $r$ as $\rho(r) = \sum_{i \leq n} \vbf_i$ (with the convention that the empty sum equals~$\0$).
We say $r$ is accepting if $p_n \in F$ and \mbox{$\rho(r) \in C$},
referring to the latter condition as the \emph{Parikh condition}. 
The \emph{language recognized by~\Amc} is 
\[L(\Amc) = \{w \in \Sigma^*\mid \text{there is an accepting run of $\Amc$ on $w$}\}.\]
If a language $L\subseteq \Sigma^*$ is recognized by some PA, then we call $L$ \emph{Parikh recognizable}.

%\subsection{Groups}
%A \emph{group} is a tuple $\Gmc = (G, \cdot, e)$ where $G$ is a non-empty set and $\cdot : G \times G \rightarrow G$ is a binary operation with neutral element $e$ satisfying the following properties.
%\begin{description}
%\item[Associativity.] For all $x,y,z\in G$ we have $(x \cdot y) \cdot z = x \cdot (y \cdot z)$.
%\item[Neutral element.] For all $x \in G$ we have $x \cdot e = e \cdot x = x$.
%\item[Inverse element.] For all $x \in G$ there is a unique element $x^{-1} \in G$ with $x \cdot x^{-1} = e$.
%\end{description}
%We call a group $\Gmc$ \emph{Abelian}, if it further satisfies the following property.
%\begin{description}
%\item[Commutativity.] For all $x,y \in G$ we have $x \cdot y = y \cdot x$.
%\end{description}

%Throughout the part we mainly deal with the Abelian group $(\Zbb^d, +, \0)$ for some $d \geq 1$.

\subsection{Directed Graphs}
A \emph{(directed) graph} $G$ consists of its vertex set $V(G)$ and edge set \mbox{$E(G) \subseteq V(G) \times V(G)$}. In particular, a graph $G$ may have loops, that is, edges of the form $(u, u)$. A \emph{(simple) path} from a vertex $u$ to a vertex $v$ in $G$ is a sequence of pairwise distinct vertices $v_1 \dots v_k$ such that $v_1 = u$, $v_k = v$, and $(v_i, v_{i+1}) \in E(G)$ for all $1 \leq i < k$. Similarly, a \emph{(simple) cycle} in $G$ is a sequence of pairwise distinct vertices $v_1 \dots v_k$ such that $(v_i, v_{i+1}) \in E(G)$ for all $1 \leq i < k$, and $(v_k, v_1) \in E(G)$. If $G$ has no cycles, we call $G$ a directed acyclic graph (DAG).
For a subset $U \subseteq V(G)$, we denote by $G[U]$ the graph $G$ \emph{induced by} $U$, \ie, the graph with vertex set~$U$ and edge set $\{(u,v) \in E(G) \mid u, v \in U\}$.
A \emph{strongly connected component} (SCC) in $G$ is a maximal subset $U \subseteq V(G)$ such that for all $u, v \in U$ there is a path from~$u$ to $v$, \ie, all vertices in $U$ are reachable from each other. 
We write $SCC(G)$ for the set of all strongly connected components of $G$ (observe that $SCC(G)$ partitions~$V(G)$). 
The \emph{condensation} of~$G$, written $C(G)$, is the DAG obtained from $G$ by contracting each SCC of $G$ into a single vertex, that is $V(C(G)) = SCC(G)$ and $(U, V) \in E(C(G))$ if and only if there is $u \in U$ and $v \in V$ with $(u, v) \in E(G)$. 
We call the SCCs with no outgoing edges in $C(G)$ leaves.
Note that an automaton can be seen as a labeled graph. Hence, all definitions translate to automata by considering the underlying graph (to be precise, an automaton can be seen as a labeled multigraph; however, we simply drop parallel edges).

\section{Decision Problems for Parikh Automata on Finite Words}
In this section, we consider the following common decision problems for (deterministic)~Parikh automata.

\begin{itemize}
	\item Emptiness: given a~PA $\Amc$, is $L(\Amc) = \varnothing$?
	\item Membership: given a~PA $\Amc$ and a finite words $w$, is $w \in L(\Amc)$?
	\item Universality: given a~PA $\Amc$, is $L(\Amc) = \Sigma^*$?
\end{itemize}

As shown by Klaedtke and Ruess~\cite{klaedtkeruess}, emptiness and membership are decidable for deterministic and non-deterministic~PA. Furthermore, they have shown that universality is decidable for deterministic~PA but undecidable for non-deterministic~PA. 
Refining these results, Figueira and Libkin~\cite{emptynp} showed that non-emptiness and membership are $\NP$-complete for non-deterministic~PA, and these results translate quickly to the deterministic setting as well.

As a preparation for the results in \Cref{sec:det} we show that universality for deterministic~PA is $\Pi_2^\P$-complete in order to establish complexity bounds of several problems related to~PA operating on infinite words.

To achieve that, we first introduce an auxiliary problem for~PA and show that it is already $\Pi_2^\P$-hard even restricted to deterministic acyclic~PA.
We define the \emph{irrelevance problem} for~PA as follows: given a~PA $\Amc = (Q, \Sigma, q_0, \Delta, F, C)$ of dimension $d$, is $\Amc$ equivalent to $\Amc^* = (Q, \Sigma, q_0, \Delta, F, \Nbb^d)$? In other words: is $C$ irrelevant for $\Amc$ in the sense that every run of $\Amc$ reaching an accepting state satisfies the Parikh condition?
In the following, we denote by $\rho(\Amc) = \{\rho(r) \mid r \text{ is an accepting run of } \Amc^*)$ and note that this set is always semi-linear as a consequence of Parikh's theorem~\cite{parikh1966context} and~\cite[Lemma 5]{klaedtkeruess}.

\begin{lemma}
	\label{lem:irrelevance}
	The irrelevance problem for deterministic acyclic~PA is $\Pi^\P_2$-hard.
\end{lemma}
\begin{proof}
	We reduce from the inclusion problem for integer expressions, which is known to be $\Pi^\P_2$-complete (assuming that all numbers are encoded in binary)~\cite{intexpr,polyhierarchy}. The set of integer expressions is defined as follows. Every $n \in \Nbb$ is an integer expression with $L(n) = \{n\}$. If $e_1$ and $e_2$ are integer expressions, then so are $e_1 + e_2$ and $e_1 \cup e_2$ where $L(e_1 + e_2) = \{u+v \mid u \in L(e_1), v \in L(e_2)\}$ and $L(e_1 \cup e_2) = L(e_1) \cup L(e_2)$. The inclusion problem is defined as follows: given integer expressions $e_1, e_2$, is $L(e_1) \subseteq L(e_2)$?
	
	We proceed as follows: first, we construct a linear set $C(\0, P_2)$ of dimension $d+1$ from~$e_2$ with the property that $n \in L(e_2)$ if and only if $n \cdot \1^d \in C(\0, P_2)$, where $d$ and $|P_2|$ depend linearly on the number of operators in $e_2$. Second, we construct a deterministic acyclic~PA~$\Amc_1$ from~$e_1$ with the property $L(e_1) = \rho(\Amc_1)$. Finally, we construct a deterministic acyclic~PA~$\Amc$ from~$\Amc_1$ such that $C(\0, P_2)$ is irrelevant for~$\Amc$ if and only if $L(e_1) \subseteq L(e_2)$.

	\begin{claim}
		\label{claim:intexToSet} 
		Given an integer expression $e$, we can compute in polynomial time a linear set~$C(\0, P)$ such that $n \in L(e)$ if and only if $n \cdot \1 \in C(\0, P)$.  
	\end{claim}
	\begin{claimproof}
		We construct $C(\0, P)$ inductively from $e$ and maintain the following invariant in each step: $n \in L(e)$ if and only if $n \cdot \1 \in C(\0, P)$ and for all $\pbf = (m, p_1 \dots, p_d) \in P$ we have $p_i = 1$ for at least one $1 \leq i \leq d$. We refer to \Cref{fig:intexToLinear} for an illustration.
		\begin{figure}
			\centering
			\begin{subfigure}[b]{0.1\textwidth}
				\quad
				\begin{tabular}{|c|}
					\hline
					$n$ \\\hline $1$ \\
					\hline
				\end{tabular}
				\caption{$e = n$}
			\end{subfigure}
			\hfill
			\begin{subfigure}[b]{0.3\textwidth}
				\begin{tabular}{|c|c|}
					\hline
					$n_1 \dots n_k$ & \ $m_1 \dots m_\ell$\ {} \\\hline & \\[5pt]        
					\LARGE $P_1$ & \LARGE $\0$ \\[5pt] & \\\hline & \\[0pt]   
					\LARGE \0 & \LARGE $P_2$ \\[0pt] & \\\hline
				\end{tabular}
				\caption{$e = e_1 + e_2$}
			\end{subfigure}
			\hfill
			\begin{subfigure}[b]{0.4\textwidth}
				\begin{tabular}{|c|c|c|c|}
					\hline
					$n_1 \dots n_k$ & \ $m_1 \dots m_\ell$\ {} & 0 & 0 \\\hline & & & \\[5pt]      
					\LARGE $P_1$ & \LARGE $\0$ & \1 & \0 \\[5pt] & & &  \\\hline & & & \\[0pt]   
					\LARGE \0 & \LARGE $P_2$  & \0 & \1 \\[0pt] & & & \\\hline
					$0 \dots 0$ & $0 \dots 0$ & 1 & 1 \\\hline
				\end{tabular}
				\caption{$e = e_1 \cup e_2$}
			\end{subfigure}
			
			\caption{Illustration of the construction of a linear set from an integer expression.}
			\label{fig:intexToLinear}
		\end{figure}
		
		\textbf{Base Case.} If $e = n$ for $n \in \Nbb$, we choose $C(\0, P) \in \Nbb^2$ with $P = \{(n, 1)\}$. \par
		\textbf{Step.} We need to consider two cases.
		
		If $e = e_1 + e_2$, there are linear sets $C(\0, P_1) \subseteq \Nbb^{1+d_1}$ and $C(\0, P_2) + \Nbb^{1+d_2}$ for $e_1$ and~$e_2$ satisfying the invariant by assumption. We construct a linear set $C(\0, P)$ of dimension $1+d_1 + d_2$
		as follows. We pad the vectors in $P_1$ and $P_2$ with zeros to align the dimensions. Then~$P$ is the union of these vectors, \ie,
        \[P = \{n \cdot \pbf_1 \cdot \0^{d_2} \mid n \cdot \pbf_1 \in P_1\} \cup \{n \cdot \0^{d_1} \cdot \pbf_2 \mid n \cdot \pbf_2 \in P_2\}.\]
		The invariant is maintained using this construction: for every integer $m + n \in L(e)$ we have $m \cdot \1^{d_1} \cdot \0^{d_2} \in C(\0, \{m \cdot \pbf_1 \cdot \0^{d_2} \mid m \cdot \pbf_1 \in P_1\})$ and $n \cdot \0^{d_1} \cdot \1^{d_2} \in C(\0, \{n \cdot \0^{d_2} \cdot \pbf_2 \mid n \cdot \pbf_2 \in P_2\})$ by assumption and construction. Hence, $(n+m) \cdot \1^{d_1 + d_2} \in C(\0, P)$. 
		Likewise, if $n \cdot \1^{d_1 + d_2} \in C(\0, P)$, we can partition the sets of used period vectors in the set of period vectors originating from $P_1$, and the set originating from $P_2$. As only the period vectors originating from $P_1$ can modify the first $d_1$ counters (ignoring the first), they yield a number $n_1 \in L(e_1)$. Similarly, the period vectors from $p_2$ yield a number $n_2 \in L(e_2)$, and hence $n = n_1 + n_2 \in L(e)$.
		
		If $e = e_1 \cup e_2$, there are linear sets $C(\0, P_1) \subseteq \Nbb^{1+d_1}$ and $C(\0, P_2) \subseteq \Nbb^{1+d_2}$ for $e_1$ and~$e_2$ satisfying the invariant by assumption. We construct a linear set $C(\0, P)$ of dimension $1+d_1 + d_2+1$ as
		as follows. Again, we pad the vectors in $P_1$ and $P_2$ with zeros to align the dimensions, and add an additional 0-counter. Furthermore, we consider the two vectors $\vbf_1 = 0 \cdot \1^{d_1} \cdot \0^{d_2} \cdot 1$ and $\vbf_2 = 0 \cdot \0^{d_1} \cdot \1^{d_2} \cdot 1$. 
		Then~$P$ is the union of these vectors, \ie,%
		\[P = \{n \cdot \pbf_1 \cdot \0^{d_2} \cdot 0 \mid n \cdot \pbf_1 \in P_1\} \cup \{n \cdot \0^{d_1} \cdot \pbf_2 \cdot 0 \mid n \cdot \pbf_2 \in P_2\} \cup \{\vbf_1, \vbf_2\}.\] 
		The invariant is maintained using this construction: if $n \in L(e_1)$, then $n \cdot \1^{d_1 + d_2 + 1} \in C(\0, P)$ as witnessed by the following choice of period vectors. First, we can use~$\vbf_2$ to ensure that the last $d_2 + 1$ entries (including the new counter) are indeed set to one. Furthermore, the first $d_1 + 1$ entries of~$\vbf_1$ (including the first counter) are 0; hence, they do not modify the relevant counters for $n$. Hence, the containment follows from the invariant and construction. We argue analogously if $n \in L(e_2)$ using~$\vbf_1$.
		Now, if $n \cdot \1^{d_1 + d_2 + 1} \in C(\0, P)$ we observe that exactly one of the vectors~$\vbf_1$ and $\vbf_2$ must have been used, as these are the only ones that modify the last counter. If~$\vbf_1$ has been used, then no period vector originating in $P_1$ may have been used, as they all contain a one-entry by the invariant, and are hence blocked by the 1-entries in $\vbf_1$. As $\vbf_1$ does not modify the first counter, and all used period vectors originate from $P_2$, we conclude $n \in L(e_2)$. Analogously, if $\vbf_2$ has been used, we conclude $n \in L(e_1)$.
		
		Observe that the dimension $d$ and the size of $P$ both depend linearly on the size of $e$, hence $|C(\0, P)| \in \Omc(|e|^2)$.
	\end{claimproof}
	
	\begin{claim}
		\label{claim:intexToPA}
		Given an integer expression $e$, we can compute in polynomial time a deterministic acyclic~PA with a single accepting state $\Amc$ such that $L(e) = \rho(\Amc)$.
	\end{claim}
	\begin{claimproof}
		\begin{figure}
			\centering
			\begin{subfigure}[b]{0.32\textwidth}
				\begin{tikzpicture}[->,>=stealth',shorten >=1pt,auto,node distance=2.5cm, semithick]
					\tikzstyle{every state}=[minimum size=5mm]
					
					\node[state, initial above, initial text={}] (q0) {$q_0$};	
					\node[state, accepting] (q1) [right of=q0] {$q_1$};
					
					\path
					(q0) edge  node[align=left] {$a, n$ \\ $b, n$} (q1)
					;
				\end{tikzpicture}
				\caption{$e = n$}
			\end{subfigure}
			\hfill
			\begin{subfigure}[b]{0.32\textwidth}
				\begin{tikzpicture}[->,>=stealth',shorten >=1pt,auto,node distance=1cm, semithick]
					\tikzstyle{every state}=[minimum size=7mm]
					
					\node[state, initial, initial text={}] (p0) {$p_0$};	
					\node[state, accepting, dashed] (p1) [right = 6mm of p0] {$p_m$};
					\node[state, below = 18mm of p1] (q0) {$q_0$};	
					\node[state, accepting] (q1) [left = 6mm of q0] {$q_n$};
					
					\node[right = -0.3mm of p0] {$\dots$};
					\node[left = -0.6mm of q0] {$\dots$};
					
					\node[fit = (p0) (p1), draw]{};
					\node[fit = (q0) (q1), draw]{};
					
					\path
					(p1) edge node[align=left] {$a, 0$ \\ $b, 0$} (q0)
					;
				\end{tikzpicture}
				\caption{$e = e_1 + e_2$}
			\end{subfigure}
			\hfill
			\begin{subfigure}[b]{0.32\textwidth}
				\begin{tikzpicture}[->,>=stealth',shorten >=1pt,auto,node distance=1cm, semithick]
					\tikzstyle{every state}=[minimum size=7mm]
					
					\node[state, initial left, initial text={}] (q) {$q$};	
					\node[state, above right = 8mm and 3mm of q] (p0) {$p_0$};	
					\node[state, accepting, dashed] (p1) [right = 6mm of p0] {$p_m$};
					\node[state, below right = 8mm and 3mm of q] (q0) {$q_0$};	
					\node[state, accepting, dashed] (q1) [right = 6mm of q0] {$q_n$};
					\node[state, accepting] (qf) [below right = 8mm and 3mm of p1] {$q_f$};
					
					\node[right = -0.3mm of p0] {$\dots$};
					\node[right = -0.3mm of q0] {$\dots$};
					
					\node[fit = (p0) (p1), draw]{};
					\node[fit = (q0) (q1), draw]{};
					
					\path
					(q)  edge node[align=left] {$a, 0$}                        (p0)
					(q)  edge node[align=left, below left] {$b, 0$}            (q0)
					(p1) edge node[align=left] {$a, 0$ \\ $b, 0$}              (qf)
					(q1) edge node[align=left, below right] {$a, 0$ \\ $b, 0$} (qf)
					;
				\end{tikzpicture}
				\caption{$e = e_1 \cup e_2$}
			\end{subfigure}
			
			\caption{Illustration of the construction of a deterministic acyclic~PA from an integer expression.}
			\label{fig:intexToPA}
		\end{figure}
		We construct $\Amc$ over the alphabet $\{a,b\}$ of dimension $1$ with linear set $\Nbb$ inductively from $e$ and maintain the invariant in the claim in every step. We refer to \Cref{fig:intexToPA} for an illustration.
		
		\textbf{Base Case.} If $e = n$ for $n \in \Nbb$, the~PA $\Amc$ consists of an initial state and an accepting state that are connected by an $a$-transition and a $b$-transition, both labeled with $n$.
		
		\textbf{Step.} We need to consider two cases.\par
		If $e = e_1 + e_2$, there are~PA $\Amc_1$ and $\Amc_2$ for $e_1$ and $e_2$ satisfying the invariant by assumption. We construct a~PA $\Amc$ as follows: we take the disjoint union of $\Amc_1$ and $\Amc_2$ and connect the accepting state of $\Amc_1$ with the initial state of $\Amc_2$ via an $a$-transition and a $b$-transitions, both labeled with $0$. Finally, the accepting states of $\Amc_1$ is not accepting anymore in $\Amc$.
		
		If $e = e_1 \cup e_2$, there are~PA $\Amc_1$ and $\Amc_2$ for $e_1$ and $e_2$ satisfying the invariant by assumption. We construct a~PA $\Amc$ as follows: we take the disjoint union of $\Amc_1$ and~$\Amc_2$ and add a fresh initial state, say $q$, and a fresh accepting state, say $q_f$. Then, we connect $q$ with the initial state of $\Amc_1$ with an $a$-transition labeled with $\0$, and we connect $q$ with the initial state of~$\Amc_2$ with a $b$-transition labeled with $0$. Similarly, we connect the accepting state of $\Amc_1$ as well as the accepting state of $\Amc_2$ with $q_f$ via $a$-transitions and $b$-transitions, all labeled with ~$0$. Finally, the accepting states of $\Amc_1$ and $\Amc_2$ are not accepting anymore in~$\Amc$.
	\end{claimproof}
	
	We are now ready to prove \Cref{lem:irrelevance}. Let $\Amc_1$ be the~PA for $e_1$ as constructed in the proof of \Cref{claim:intexToPA}. Similarly, let $C(\0, P_2)$ be the linear set of dimension $1+d$ for~$e_2$ as constructed in the proof of \Cref{claim:intexToSet}.
	Now we construct the deterministic acyclic~PA~$\Amc$ as follows. We start with $\Amc_1$ and pad all vectors with zeros, that is, we replace every (one-dimensional) vector $n$ in $\Amc_1$ by $n \cdot \0^d$.
	Then, we add a fresh accepting state and connect it from the accepting state of $\Amc_1$ with an $a$-transition and $b$-transition, both labeled with $0 \cdot \1^d$.
	Finally, the accepting state of $\Amc_1$ is not accepting anymore in $\Amc$, and we choose $C(\0, P_2)$ as the linear set of $\Amc$.
	Observe that the properties of $\Amc_1$ and $C(\0, P_2)$ ensure that $C(\0, P_2)$ is irrelevant for $\Amc$ if and only if $L(e_1) \subseteq L(e_2)$. This concludes the proof. 
\end{proof}

\begin{remark*}
	Recall that we assume a binary and explicit encoding of semi-linear sets. In terms of expressiveness we can equally assume that they are given as Presburger formulas. However, this drastically changes the complexity: if the semi-linear sets are given as Presburger formulas, the problem becomes $\coNEXP$-complete, as we can interreduce the problem with the $\forall^*\exists^*$-fragment of Presburger arithmetic, which is $\coNEXP$-complete~\cite{haaseNexp}.
\end{remark*}

Observe that we can easily reduce the previous problem to the universality problem for deterministic~PA.
\begin{corollary}
	\label{cor:uniHardnessFinite}
	Universality for deterministic~PA is $\Pi_2^\P$-hard.
\end{corollary}
\begin{proof}
	Let $\Amc$ be the deterministic acyclic~PA with linear set $C(\0, P)$ as constructed in the previous proof. 
	We construct a deterministic $\Amc'$ such that $C(\0, P)$ is irrelevant for~$\Amc$ if and only if $\Amc'$ is universal.
	To achieve that, we start with $\Amc$ and make every state accepting. 
	Furthermore, we add an $a$-loop and a $b$-loop to the accepting state of $\Amc$, both labeled with~$\0$. Finally, the semi-linear set of $\Amc'$ is $C(\0, P) \cup (\Nbb \cdot \{\0\})$.
\end{proof}

Now we show that universality for deterministic~PA is in $\Pi_2^\P$, yielding completeness for universality and irrelevance by the reduction presented in the proof of the previous corollary.

\begin{lemma}
	\label{lem:uniFinite}
	Universality for deterministic~PA is $\Pi_2^\P$-complete.
\end{lemma}
\begin{proof}
	We show that the problem is contained in $\Pi_2^\P$ by showing that its complement is contained in $\Sigma_2^\P$. We make use of the following observation. As we can assume that every state is accepting (by encoding accepting states into the semi-linear set), the question whether a deterministic~PA $\Amc$ with semi-linear set $C$ is not universal boils down to the question $\rho(\Amc) \not\subseteq C$?
	If both sets are given explicitly (again assuming binary encoding), Huynh showed that this question can be decided in $\Sigma_2^\P$~\cite{huynhUpperSimplified,huynhUpper}.
	However, we cannot explicitly compute the set $\rho(\Amc)$ as its size can be exponentially large in the size of~$\Amc$~\cite{parikhComplexity}. To circumvent this problem, we guess a small linear subset of $\rho(\Amc)$ that witnesses non-inclusion.
	
	In the first step, we compute an existential Presburger formula, say $\varphi(\vbf) = \exists x_1 \dots x_m \psi$ defining the set $\rho(\Amc)$. This can be done in linear time using the results in~\cite{schwentikHorn}; we also refer to~\cite[Proposition III.2]{emptynp} for a short discussion on the construction.
	
	In the second step, we use the result in~\cite[Corollary II.2]{zvassnz} essentially stating that every semi-linear set can be written as a finite union of linear sets of small bit size. To be precise, the result states that every semi-linear set $C \subseteq \Nbb^d$ can be written as $\bigcup_i C(\bbf_i, P_i)$ with $|P_i| \leq 2d \log(4d \|C\|)$, where $\|C\|$ denotes the largest absolute value that appears in a base or period vector in any linear set in the union of $C$. Hence, we guess a linear set $C(\bbf_i, P_i) \subseteq \rho(\Amc)$ of small bit size. Note however, that we do not verify at this point whether $C(\bbf_i, P_i)$ is indeed a subset of $\rho(\Amc)$.
	
	In the third step, we use Huynh's results~\cite{huynhUpperSimplified,huynhUpper} to solve $C(\bbf_i, P_i) \not\subseteq C$ in $\Sigma_2^\P$. As one main ingredient to obtain containment $\Sigma_2^\P$, Huynh showed that every non-empty (set-theoretic) difference of two semi-linear sets contains a vector of polynomial bit size. Hence let $\vbf \in C(\bbf_i, P_i) \setminus C$ be such a vector whose bit size is bounded polynomially in $C(\bbf_i, P_i)$ and~$C$, and hence in the input size.
	
	Finally, we need to verify that $\vbf$ is indeed a member of $\rho(\Amc)$ (recall that we did not verify that $C(\bbf_i, P_i)$ is indeed a subset of $C_\Amc$). In order to do so, we guess a valuation of the quantified variables $x_1, \dots, x_n$ (again of polynomial bit size) of $\varphi$ and verify that~$\varphi(\vbf)$ is indeed satisfied under this valuation.
	
	Overall, we conclude that non-universality for deterministic~PA is in $\Sigma_2^\P$, yielding the desired result.
\end{proof}

\section{Parikh Automata on Infinite Words}

In this section, we recall the acceptance conditions of Parikh automata operating on infinite words that were studied before in the literature and introduce our new models. We make some easy observations and compare the existing with the new automata models. First we focus on the non-deterministic variants of these models before studying their deterministic variants.
Whenever we do not explicitly specify that a model is deterministic, we mean the non-deterministic variant.

\smallskip
Let $\Amc = (Q, \Sigma, q_0, \Delta, F, C)$ be a~PA. A run of $\Amc$ on an infinite word $\alpha = \alpha_1 \alpha_2 \alpha_3 \dots$ is an infinite sequence of labeled transitions $r = r_1 r_2 r_3 \dots$ with $r_i = (p_{i-1}, \alpha_i, \vbf_i, p_i)\in \Delta$ such that $p_0 = q_0$. The automata defined below differ only in their acceptance conditions; hence, the notion of determinism translates directly. In the following, whenever we say that an automaton $\Amc$ accepts an infinite word $\alpha$, we mean that there is an accepting run of $\Amc$ on $\alpha$.

\begin{enumerate}
	\item The run $r$ satisfies the \emph{safety condition} if for every $i \geq 0$ we have $p_i \in F$ and $\rho(r_1 \dots r_i) \in~C$. We call a~PA accepting with the safety condition a \emph{safety~PA}~\cite{infiniteZimmermann}. 
	We define the $\omega$-language recognized by a safety~PA $\Amc$ as 
	\[S_\omega(\Amc) = \{\alpha\in \Sigma^\omega\mid \Amc \text{ accepts } \alpha\}.\] 
	
	\item The run $r$ satisfies the \emph{reachability condition} if there is an $i \geq 1$ such that $p_i \in F$ and $\rho(r_1 \dots r_i) \in~C$. We say there is an \emph{accepting hit} in $r_i$. We call a~PA accepting with the reachability condition a \emph{reachability~PA}~\cite{infiniteZimmermann}. We define the $\omega$-language recognized by a reachability~PA $\Amc$ as
	\[R_\omega(\Amc) = \{\alpha\in \Sigma^\omega\mid \Amc \text{ accepts } \alpha\}.\] 
	
	\item The run $r$ satisfies the \emph{Büchi condition} if there are infinitely many $i \geq 1$ such that $p_i \in F$ and $\rho(r_1 \dots r_i) \in C$. We call a~PA accepting with the Büchi condition a \emph{Büchi~PA}~\cite{infiniteZimmermann}. 
	We define the $\omega$-language recognized by a Büchi~PA $\Amc$ as 
	\[B_\omega(\Amc) = \{\alpha\in \Sigma^\omega\mid \Amc \text{ accepts } \alpha\}.\] 
	
	Hence, a Büchi~PA can be seen as a stronger variant of a reachability~PA where we require infinitely many accepting hits instead of a single one.
	
	\item The run $r$ satisfies the \emph{co-Büchi condition} if there is $i_0$ such that for every $i \geq i_0$ we have $p_i \in F$ and $\rho(r_1 \dots r_i) \in C$. We call a~PA accepting with the co-Büchi condition a \emph{co-Büchi~PA}~\cite{infiniteZimmermann}. 
	We define the $\omega$-language recognized by a co-Büchi~PA $\Amc$ as 
	\[CB_\omega(\Amc) = \{\alpha\in \Sigma^\omega\mid \Amc \text{ accepts } \alpha\}.\] 
	
	Hence, a co-Büchi~PA can be seen as a weaker variant of safety~PA where the safety condition needs not necessarily be fulfilled from the beginning, but from some point onwards.
\end{enumerate}

We now define the models newly introduced in this work. As already observed in \cite{infiniteZimmermann} among the above considered models only Büchi~PA can recognize all $\omega$-regular languages. For example, $\{\alpha\in \{a,b\}^\omega\mid |\alpha|_a=\infty\}$ cannot be recognized by safety~PA, reachability~PA or co-Büchi~PA. 

We first extend reachability~PA with the classical Büchi condition to obtain \emph{reachability-regular~PA}. In \Cref{thm:LimitEqualsReach} we show that these automata characterize the class
\[\LPAReg = \left\{\bigcup_{i \leq n} U_i V_i \mid n \geq 1, U_i \text{ is Parikh-recognizable}, V_i \text{ is regular}\right\}\]
hence, providing a robust and natural model. 

\begin{enumerate}
	\setcounter{enumi}{4}
	\item The run satisfies the \emph{reachability and regularity condition} if there is an $i \geq 1$ such that $p_i \in F$ and $\rho(r_1 \dots r_i) \in C$, and there are infinitely many $j \geq 1$ such that $p_j \in F$. We call a~PA accepting with the reachability and regularity condition a \emph{reachability-regular}~PA. We define the $\omega$-language recognized by a reachability-regular~PA $\Amc$~as 
	\[RR_\omega(\Amc) = \{\alpha\in \Sigma^\omega\mid \Amc \text{ accepts } \alpha\}.\] %and call it reachability-regular.
\end{enumerate}

Observe that every $\omega$-regular language is reachability-regular~PA recognizable, as we can turn an arbitrary Büchi automaton into an equivalent reachability-regular~PA by labeling every transition with $0$ and \mbox{setting~$C = \{0\}$}.

Next we introduce \emph{limit~PA}, which were proposed in the concluding remarks of~\cite{klaedtkeruess}. As we will prove in \Cref{thm:LimitEqualsReach}, this seemingly quite different model is equivalent to reachability-regular~PA. 

\begin{enumerate}
	\setcounter{enumi}{5}
	\item The run satisfies the \emph{limit condition} if there are infinitely many~$i \geq 1$ such that $p_i \in F$, and if additionally $\rho(r) \in C$, where the $j$th component of $\rho(r)$ is computed as follows. 
	If there are infinitely many~$i \geq 1$ such that the $j$th component of $\vbf_i$ has a non-zero value, then the $j$th component of $\rho(r)$ is~$\infty$. In other words, if the sum of values in a component diverges, then its value is set to $\infty$. 
	Otherwise, the infinite sum yields a positive integer. We call a~PA accepting with the limit condition a \emph{limit~PA}.
	We define the $\omega$-language recognized by a limit~PA $\Amc$ as 
	\[L_\omega(\Amc) = \{\alpha\in \Sigma^\omega\mid \Amc \text{ accepts } \alpha\}.\] 
	
\end{enumerate}

Still, none of the yet introduced models have $\omega$-closure. This shortcoming is addressed with the following two models, which will turn out to be equivalent and form the basis of the automata characterization of the classes 
\[\LPAPA = \left\{\bigcup_{i \leq n} U_i V_i \mid n \geq 1, U_i \text{ and } V_i \text{ are Parikh-recognizable}\right\}\]
and 
\[\LRegPA = \left\{\bigcup_{i \leq n} U_i V_i \mid n \geq 1, U_i \text{ is regular}, V_i \text{ is Parikh-recognizable}\right\}.\] 

\begin{enumerate}
	\setcounter{enumi}{6}
	\item The run satisfies the \emph{strong reset condition} if the following holds. Let $k_0 = 0$ and denote by $k_1 < k_2 < \dots$ the positions of all accepting states in~$r$. Then $r$ is accepting if $k_1, k_2, \dots$ is an infinite sequence and $\rho(r_{k_{i-1}+1} \dots r_{k_i}) \in C$ for all $i \geq 1$.
	We call a~PA accepting with the strong reset condition a \emph{strong reset~PA}. We define the $\omega$-language recognized by a strong reset~PA $\Amc$ as
	\[SR_\omega(\Amc) = \{\alpha\in \Sigma^\omega\mid \Amc \text{ accepts } \alpha\}.\] 
	
	\item The run satisfies the \emph{weak reset condition} if there are infinitely many reset positions $0 = k_0 < k_1 < k_2, \dots$ such that $p_{k_i} \in F$ and $\rho(r_{k_{i-1}+1} \dots r_{k_i}) \in C$ for all $i \geq 1$.
	We call a~PA accepting with the weak reset condition a \emph{weak reset~PA}.
	We define the $\omega$-language recognized by a weak reset~PA $\Amc$ as 
	\[W\!R_\omega(\Amc) = \{\alpha\in \Sigma^\omega\mid \Amc \text{ accepts } \alpha\}.\] 
	
\end{enumerate}

\smallskip
Intuitively worded, whenever a strong reset~PA enters an accepting state, the Parikh condition \emph{must} be satisfied. Then the counters are reset.
Similarly, a weak reset~PA may reset the counters whenever there is an accepting hit, and they must reset infinitely often, too.
In the following we will often just speak of reset~PA without explicitly stating whether they are weak or strong. In this case, we mean the strong variant. We will show the equivalence of the nondeterministic variants of two models in \Cref{lem:SPBAtoWPBA} and \Cref{lem:WPBAtoSPBA}, while in the deterministic setting weak reset~PA are more expressive than strong reset~PA, see~\Cref{lem:detStrongResetVsdetWeakReset}.

Guha et al.~\cite{infiniteZimmermann} assume that reachability~PA are complete, i.e., for every $(p,a)\in Q\times \Sigma$ there are $\vbf\in \Nbb^d$ and $q\in Q$ such that $(p,a,\vbf,q)\in \Delta$, as incompleteness allows to express additional safety conditions. 
We also make this assumption in order to avoid inconsistencies. 
In fact, we can assume that all models are complete, as the other models can be completed by adding a non-accepting sink. Observe that their deterministic variants are always complete by definition. 
%We remark that Guha et al.\ also considered \emph{asynchronous} reachability and Büchi~PA, where the Parikh condition does not necessarily need to be satisfied in accepting states. However, for non-deterministic automata this does not change the expressiveness of the considered models~\cite{infiniteZimmermann}.

\begin{figure}[h]
	\centering
	\begin{tikzpicture}[->,>=stealth',shorten >=1pt,auto,node distance=3cm, semithick]
		\tikzstyle{every state}=[minimum size=1cm]
		
		\node[state, accepting, initial, initial text={}] (q0) {$q_0$};	
		\node[state] (q1) [right of=q0] {$q_1$};

		\path
		(q0) edge [loop above] node {$b, \begin{pmatrix}0\\1\end{pmatrix}$} (q0)
		(q0) edge              node {$a, \begin{pmatrix}1\\0\end{pmatrix}$} (q1)
		(q1) edge [loop above] node {$a, \begin{pmatrix}1\\0\end{pmatrix}$} (q1)
		(q1) edge [bend left]  node {$b, \begin{pmatrix}0\\1\end{pmatrix}$} (q0)	
		;
	\end{tikzpicture}
	\caption{The automaton $\Amc$ with $C=\{(z,z'), (z, \infty) \mid z' \geq z\}$ from \Cref{ex}.}
	\label{fig:ex}
\end{figure}

\begin{example}
	\label{ex}
	Let $\Amc$ be the automaton in \Cref{fig:ex} with $C = \{(z,z'), (z, \infty) \mid z' \geq z\}$. 
	\begin{itemize}
		\item If we interpret $\Amc$ as a~PA (over finite words), then we have $L(\Amc) = \{w \in \{a,b\}^* \cdot \{b\} \mid |w|_a \leq |w|_b\} \cup \{\varepsilon\}$. The automaton is in the accepting state at the very beginning and every time after reading a $b$. The first counter counts the occurrences of $a$, the second one counts occurrences of $b$. By definition of~$C$ the automaton only accepts when the second counter value is greater or equal to the first counter value (note that vectors containing an $\infty$-entry have no additional effect). 
		
		\item If we interpret $\Amc$ as a safety~PA, then we have $S_\omega(\Amc) = \{b\}^\omega$. As $q_1$ is not accepting, only the $b$-loop on $q_0$ may be used.
		
		\item If we interpret $\Amc$ as a reachability~PA, then we have $R_\omega(\Amc) = \{\alpha \in \{a,b\}^\omega \mid \alpha$ has a prefix in $L(\Amc)\}$. The automaton has satisfied the reachability condition after reading a prefix in $L(\Amc)$ and accepts any continuation after that. 
		
		\item If we interpret $\Amc$ as a Büchi~PA, then we have $B_\omega(\Amc) = L(\Amc)^\omega$. The automaton accepts an infinite word if infinitely often the Parikh condition is satisfied in the accepting state. Observe that $C$ is a homogeneous linear set and the initial state as well as the accepting state have the same outgoing transitions.
		
		\item If we interpret $\Amc$ as a co-Büchi~PA, then we have $CB_\omega(\Amc) = L(\Amc) \cdot \{b\}^\omega$. This is similar to the safety~PA, but the accepted words may have a finite ``non-safe'' prefix from $L(\Amc)$.
		
		\item If we interpret $\Amc$ as a reachability-regular~PA, then we have $RR_\omega(\Amc) = \{\alpha \in \{a,b\}^\omega \mid \alpha \text{ has a prefix}$ in $L(\Amc) \text{ and } |\alpha|_b = \infty\}$. After having met the reachability condition the automaton still needs visit an accepting state infinitely often.
        % to satisfy the Büchi condition, which enforces infinitely many visits of the accepting state. 
		\item If we interpret $\Amc$ as a limit~PA, then we have $L_\omega(\Amc) = \{\alpha \in \{a,b\}^\omega \mid |\alpha|_a < \infty\}$. The automaton must visit the accepting state infinitely often. At the same time the extended Parikh image must belong to $C$, which implies that the infinite word contains only some finite number $z$ of symbol $a$ (note that only the vectors of the form $(z, \infty)$ have an effect here, as at least one symbol must be seen infinitely often by the infinite pigeonhole principle).
		
		\item If we interpret $\Amc$ as a weak reset~PA, then we have $W\!R_\omega(\Amc) = L(\Amc)^\omega$. As a weak reset~PA may (but is not forced to) reset the counters upon visiting the accepting state, the automaton may reset every time a (finite) infix in $L(\Amc)$ has been read.

		\item If we interpret $\Amc$ as a strong reset~PA, then we have $SR_\omega(\Amc) = \{b^*a\}^\omega \cup \{b^* a\}^* \cdot \{b\}^\omega$. Whenever the automaton reaches an accepting state also the Parikh condition must be satisfied. This implies that the $a$-loop on $q_1$ may never be used, as this would increase the first counter value to at least 2, while the second counter value is 1 upon reaching the accepting state $q_0$ (which resets the counters).
	\end{itemize}
\end{example}

%\begin{remark*}
%	The automaton $\Amc$ in the example is deterministic. We note that $L_\omega(\Amc)$ is not deterministic $\omega$-regular but deterministic limit~PA recognizable.
%\end{remark*}

\section{Nondeterministic Parikh Automata on Infinite Words}
\label{sec:nondet}
In this section, we study the nondeterministic variants of Parikh automata on infinite words.
We first show that the newly introduced variants are closed under union and left-concatenation with Parikh recognizable languages, yielding the foundation in order to establish the claimed characterizations. These characterizations in turn help us to establish the remaining closure properties. Motivated by further examining the expressiveness of all these models, we study the effect of $\varepsilon$-transitions and show that almost all considered models admit $\varepsilon$-elimination, the exception being safety and co-Büchi~PA. Based on this result, we show that Büchi~PA and blind counter automata operating on infinite words as introduced by Fernau and Stiebe~\cite{blindcounter} are equivalent.
Finally, we study the classical decision problems with application to model checking.

\subsection{Preparation}
It was observed in~\cite{infiniteZimmermann} that Büchi~PA recognize a strict subset of $\LPAPA$. 
In this section we first show that the class of reset~PA recognizable $\omega$-languages is a strict superset of~$\LPAPA$. Then we provide an automata-based characterization of $\LPAReg, \LPAPA$, and~$\LRegPA$.
Towards this goal we first establish some closure properties.

Guha et al.~\cite{infiniteZimmermann} have shown that safety, reachability, Büchi, and co-Büchi~PA are closed under union using a modification of the standard construction for~PA, \ie, taking the disjoint union of the automata (introducing a fresh initial state), and the disjoint union of the semi-linear sets, where disjointness is achieved by ``marking'' every vector in the first set by an additional $1$ (increasing the dimension by 1), and all vectors in the second set by an additional $2$. 
Then all transitions from the new initial state leading to the copy of the first automaton are also marked with a $1$ and those leading to the copy of the second automaton are marked with a 2.
We observe that the same construction also works for reachability-regular and limit~PA, and a small modification is sufficient to make the construction also work for reset~PA. 
To be precise, we need to refresh the mark every time we leave an accepting state, as the reset ``forgets'' this information.
We leave the details to the reader. 

\begin{lem}
\label{cor:union}
    The classes of reachability-regular~PA recognizable, limit~PA recognizable, and reset~PA recognizable $\omega$-languages are closed under union.
\end{lem}

Furthermore, we show that these classes, as well as the class of Büchi~PA recognizable $\omega$-languages, are closed under left-concatenation with~PA recognizable languages. 
We provide some details in the next lemma, as we will need to modify the standard construction in such a way that we do not need to keep accepting states of the~PA on finite words. This will help to characterize $\LPAPA$ via (restricted) reset~PA. 

\begin{lemma}
\label{lem:concatenation}
     The classes of reachability-regular~PA recognizable, limit~PA recognizable, reset~PA recognizable, Büchi~PA recognizable, co-Büchi~PA recognizable, and safety~PA recognizable $\omega$-languages are closed under left-concatenation with~PA recognizable languages.
\end{lemma}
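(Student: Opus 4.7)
The plan is to construct, given a PA $\Amc_1 = (Q_1, \Sigma, q_0^1, \Delta_1, F_1, C_1)$ of dimension $d_1$ with $L(\Amc_1) = U$ and an automaton $\Amc_2 = (Q_2, \Sigma, q_0^2, \Delta_2, F_2, C_2)$ of dimension $d_2$ of the desired type (Büchi, reachability-regular, limit, or reset PA) with $V$ as its $\omega$-language, a combined automaton $\Amc$ of the same type recognizing $U \cdot V$. Take $Q = Q_1 \uplus Q_2$ with initial state $q_0^1$, dimension $d_1 + d_2$ (augmented by one coordinate in the reset case), accepting states $F = F_2$ \emph{only} (deliberately discarding $F_1$, so no spurious accepting hit occurs during the $\Amc_1$-phase), and three kinds of transitions: lift each $(p, a, \ubf, q) \in \Delta_1$ to $(p, a, \ubf \cdot \0, q)$; lift each $(p, a, \vbf, q) \in \Delta_2$ to $(p, a, \0 \cdot \vbf, q)$; and, for every $f \in F_1$ and every $(q_0^2, a, \vbf, q) \in \Delta_2$, add a bridging transition $(f, a, \0 \cdot \vbf, q)$ that skips the accepting state of $\Amc_1$ and starts simulating $\Amc_2$ from $q_0^2$. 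Apply the acceptance condition of the desired type with Parikh condition $C = C_1 \cdot C_2$ in the non-reset cases.

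Because the accepting states of $\Amc$ lie in $Q_2$ and the only way to cross from $Q_1$ to $Q_2$ is via a bridging transition, every accepting run uniquely splits into an $\Amc_1$-prefix in $Q_1$ ending at some $f \in F_1$, followed by an $\Amc_2$-suffix in $Q_2$ that, read through the lift, is a run of $\Amc_2$ on the remaining suffix starting from $q_0^2$. The first $d_1$ coordinates of the global Parikh image are accumulated exclusively during the $\Amc_1$-phase and are frozen thereafter; the Cartesian condition $c_1 \in C_1$ forces the finite prefix $u$ to belong to $U$. The last $d_2$ coordinates replay $\Amc_2$ verbatim, and since all four acceptance conditions depend only on the behaviour from some point onward (or infinitely often), they translate directly between $\Amc$ and $\Amc_2$, so the suffix $v$ lies in $V$. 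The converse direction is by concatenating an accepting $\Amc_1$-run on $u$ with an accepting $\Amc_2$-run on $v$. The corner case $\varepsilon \in U$ is handled automatically because $q_0^1 \in F_1$ already carries bridging transitions out of $q_0^1$ and $\0 \in C_1$ keeps the counter accounting consistent.

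The main obstacle is the reset case. Under the strong reset semantics, the counters are reset at every accepting state, so already at the second reset the first $d_1$ coordinates are back to $\0$ while the accumulated vector must still lie in $C$; with the naive $C = C_1 \cdot C_2$ this would require $\0 \in C_1$, which need not hold. I fix this by enlarging the dimension by one auxiliary coordinate that carries value $1$ on every bridging transition and $0$ on every other transition, and by setting
\[
    C \;=\; \bigl(C_1 \cdot C_2 \cdot \{1\}\bigr) \,\cup\, \bigl(\{\0\} \cdot C_2 \cdot \{0\}\bigr),
\]
which is still semi-linear. Along the first segment (from the start to the first reset) the bridge is used exactly once, so the auxiliary coordinate is $1$ and the first disjunct enforces $c_1 \in C_1$ together with $c_2 \in C_2$; along every subsequent segment the bridge is not used, the auxiliary coordinate is $0$, and the second disjunct enforces only $c_2 \in C_2$, exactly matching the intended reset semantics. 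The same construction serves the weak reset case, where the additional freedom to skip resets only simplifies the argument. This tailoring is precisely why we discard $F_1$ from the accepting states of $\Amc$: a visit to $F_1$ would otherwise trigger a spurious reset and break the correspondence with $\Amc_2$'s resets.
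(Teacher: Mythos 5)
Your proof is correct and follows the same overall skeleton as the paper's: disjoint union of the two automata with the counters split into a $d_1$-block and a $d_2$-block, bridging transitions out of $F_1$ into the $\Amc_2$-part, only the accepting states of $\Amc_2$ kept accepting, and $C = C_1 \cdot C_2$ for the Büchi, reachability-regular and limit cases. Where you genuinely diverge is the reset case, which is the delicate part of the lemma. The paper first arranges (wlog) that the initial state $q_2$ of $\Amc_2$ is accepting, redirects the \emph{last} transition of every accepting $\Amc_1$-run into $q_2$ --- so that a reset is forced exactly at the junction --- and takes $C = C_1\cdot\{\0^{d_2}\} \cup \{\0^{d_1}\}\cdot C_2$; this requires a separate case-split for $\varepsilon \in L(\Amc_1)$, discharged via closure under union. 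You instead bridge on the \emph{first} transition of $\Amc_2$ (so no $\varepsilon$ case-split is needed, the suffix being infinite), accept that the first reset may occur deep inside the $\Amc_2$-part, and disambiguate the first segment from all later ones by an auxiliary coordinate counting bridge uses, with $C = (C_1\cdot C_2\cdot\{1\}) \cup (\{\0\}\cdot C_2\cdot\{0\})$. Both work; your tagging trick costs one extra counter but buys a cleaner case analysis --- the two disjuncts of your $C$ can never fire on the wrong segment, whereas the paper's untagged union needs a little extra care in the degenerate situations where $\0$ happens to lie in $C_1$ or $C_2$. Your construction also preserves the property the paper needs later for \autoref{thm:lpapa}, namely that no state originating from $\Amc_1$ is accepting in the result.
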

\begin{proof}
    We begin with reset~PA. Let $\Amc_1 = (Q_1, \Sigma, q_1, \Delta_1, F_1, C_1)$ be a~PA of dimension $d_1$ and let $\Amc_2 = (Q_2, \Sigma, q_2, \Delta_2, F_2, C_2)$ be a reset~PA of dimension $d_2$. We sketch the construction of a reset~PA $\Amc$ of dimension $d_1 + d_2$ that recognizes $L(\Amc_1) \cdot SR_\omega(\Amc)$. 
    We assume without loss of generality that $q_2$ is accepting (this can be achieved by introducing a fresh initial state). 
    Furthermore, for now we assume that $\varepsilon \notin L(\Amc_1)$, that is, every accepting run of~$\Amc_1$ is not empty.
    Again, $\Amc$ consists of disjoint copies of~$\Amc_1$ and $\Amc_2$ but only the accepting states of $\Amc_2$ remain accepting, and the initial state of $\Amc$ is $q_1$. All transitions of the copy of~$\Amc_1$ use the first $d_1$ counters (that is, the remaining $d_2$ counters are always 0), and, likewise, the transitions of $\Amc_2$ only use the last $d_2$ counters (that is, the first $d_1$ counters are always 0). 
    Finally, we copy every transition of $\Amc_1$ that leads to an accepting state of $\Amc_1$ such that it also leads to $q_2$, that is, we add the transitions $\{(p, a, \vbf \cdot 0^{d_2}, q_2) \mid (p,a,\vbf, q) \in \Delta_1, q \in F_1\}$. The semi-linear set $C$ of $\Amc$ is $C_1 \cdot \{0^{d_2}\} \cup \{0^{d_1}\} \cdot C_2$.
    As every accepting run of $\Amc_1$ is non-empty by assumption, $\Amc$ may guess the last transition of every accepting run of $\Amc_1$ and replace it with one of the new transitions that leads to $\Amc_2$ instead. As $q_2$ is accepting, the counters are reset, which justifies the choice of $C$.
    Now, if $\varepsilon \in L(\Amc_1)$, observe that $L(\Amc_1) \cdot SR_\omega(\Amc_2) = (L(\Amc_1) \setminus \{\varepsilon\} \cdot SR_\omega(\Amc_2)) \cup SR_\omega(\Amc_2)$. Hence, we may remove $\varepsilon$ from $L(\Amc_1)$ by replacing $q_1$ by a fresh non-accepting copy and use the closure under union.
    Hence, in any case only the copies of accepting states of $\Amc_2$ remain accepting; in particular no state of $\Amc_1$ is accepting in the corresponding copy of $\Amc$.

    The construction for reachability-regular~PA, limit~PA and Büchi~PA is very similar. The only difference is that we choose $C = C_1 \cdot C_2$ for the semi-linear set of $\Amc$, as counters are never reset here.

	Finally, for co-Büchi~PA and safety~PA we need the following modifications. First, we make every state of $\Amc$ accepting. Furthermore, we add one additional counter that is set to 1 on every transition from an accepting state of $\Amc_1$ to the initial state of $\Amc_2$ (note that every run of $\Amc$ can only take such a transition once). This counter is not modified on every other transition. The new semi-linear set is $C = (\Nbb^{d_1 + d_2} \cdot \{0\}) \cup (C_1 \cdot C_2 \cdot \{1\})$, that is, as long as we have not reached the state set of $\Amc_2$, we do not check the counter values yet.
\end{proof}

Before we continue, we show that we can normalize~PA (on finite words) such that the initial state is the only accepting state. This observation simplifies several proofs in this section.

\begin{lemma}\label{lem:normalized}
Let $\Amc = (Q, \Sigma, q_0, \Delta, F, C)$ be a~PA of dimension $d$. Then there exists an equivalent~PA~$\Amc'$ of dimension $d + 1$ with the following properties.
\begin{itemize}
    \item The initial state of $\Amc'$ is the only accepting state.
    \item $SCC(\Amc') = \{Q\}$.
\end{itemize}
We say that $\Amc'$ is \emph{normalized}.
\end{lemma}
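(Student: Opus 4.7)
My plan is to prove the lemma by adjoining one fresh state $q_0'$ that will serve simultaneously as the new initial state and the unique accepting state, to add one extra counter that forbids spurious accepting runs, and finally to restrict the construction to a single strongly connected component.

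Concretely, I would construct $\Amc' = (Q \cup \{q_0'\}, \Sigma, q_0', \Delta', \{q_0'\}, C')$ of dimension $d+1$, where $q_0'$ is fresh. The transition set $\Delta'$ contains, for every $(p, a, \vbf, q) \in \Delta$, a \emph{copy} $(p, a, \vbf \cdot 0, q)$ and additionally, whenever $q \in F$, a \emph{mirror} transition $(p, a, \vbf \cdot 1, q_0')$. To let $q_0'$ imitate the outgoing behaviour of $q_0$, I also include $(q_0', a, \vbf \cdot 0, q)$ for every $(q_0, a, \vbf, q) \in \Delta$, together with the corresponding mirror $(q_0', a, \vbf \cdot 1, q_0')$ whenever moreover $q \in F$. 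I then set $C' = \{\vbf \cdot 1 \mid \vbf \in C\} \cup E$, where $E = \{\0^{d+1}\}$ if $q_0 \in F$ and $\0 \in C$, and $E = \emptyset$ otherwise; this is semi-linear since $\{\vbf \cdot 1 \mid \vbf \in C\}$ is the product $C \cdot \{(1)\}$ of two semi-linear sets.

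The key verification is that $L(\Amc') = L(\Amc)$. Every accepting $\Amc$-run $(q_0, a_1, \vbf_1, p_1) \cdots (p_{n-1}, a_n, \vbf_n, p_n)$ on a non-empty word $w$ lifts to an accepting $\Amc'$-run on $w$ by taking the $q_0'$-version of the first transition, copies of the middle transitions, and the mirror version of the last one; the resulting extended Parikh image is $(\sum_i \vbf_i) \cdot 1 \in C'$. The empty word is handled by the $E$-part of $C'$. The main obstacle to overcome is that, \emph{a priori}, an $\Amc'$-run could pass through $q_0'$ several times and accumulate two or more partial Parikh vectors whose sum happens to lie in the target set. The extra counter prevents this: every mirror transition (the only kind entering $q_0'$) increments it by one, every non-mirror transition leaves it at zero, and $C'$ admits only the values $0$ and $1$ in the last coordinate. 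Consequently an accepting non-empty $\Amc'$-run uses exactly one mirror transition, which must be its last, and therefore corresponds uniquely to an accepting $\Amc$-run from $q_0$ to some $q \in F$.

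For strong connectivity, I would restrict the constructed $\Amc'$ to the strongly connected component of $q_0'$, discarding any other states and the transitions touching them. This does not change the language: any accepting $\Amc'$-run begins and ends at $q_0'$, so every state it visits is reachable from $q_0'$ and reaches $q_0'$, hence lies in the SCC of $q_0'$. What remains is by construction a single SCC containing all states, which is the second required property (up to renaming the state set back to $Q$). The resulting $\Amc'$ has dimension $d+1$, has $q_0'$ as its unique initial and accepting state, and is strongly connected, as claimed.
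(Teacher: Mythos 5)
Your construction is correct and is essentially the paper's own: adjoin a fresh state that is both the new initial and the unique accepting state, let it inherit the outgoing transitions of $q_0$ and the incoming transitions of the accepting states, use one extra counter (forced to $1$ by $C\cdot\{1\}$, with the zero vector added when $\varepsilon$ must be accepted) to rule out runs that pass through the new state more than once, and trim to the strongly connected component of the new state to obtain the second property. The only difference is cosmetic: you restrict to the SCC of $q_0'$ where the paper removes the states that cannot reach $q_0'$, and your write-up makes the role of the extra counter slightly more explicit.
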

\begin{proof}
    The normalized~PA $\Amc'$ is obtained from $\Amc$ by adding a fresh state $q_0'$, which is the initial state and only accepting state, and which inherits all outgoing transitions from $q_0$ and all incoming transitions from the accepting states. Furthermore, all transitions get a new counter, which is set to 0 except for the new incoming transitions of $q_0'$ where the counter is set to $1$, and all vectors in~$C$ are concatenated with $1$ (and we add the all zero-vector if we want to accept $\varepsilon$). 
    Finally, we remove all states that cannot reach $q'_0$ (such states can appear when shortcutting the incoming transitions of $F$, and are useless in the sense that their removal does not change the accepted language; however, this removal is necessary for the second property).

    Formally, we define $\Amc' = \{Q \cup \{q_0'\}, \Sigma, q_0', \Delta', \{q_0'\}, C')$, where 
     \begin{align*}
     \Delta' =&\ \{(p, a, \vbf \cdot 0, q) \mid (p, a, \vbf, q) \in \Delta\} \\
     \cup&\ \{(q_0', a, \vbf \cdot 0, q) \mid (q_0, a, \vbf, q) \in \Delta\} \\ \cup&\ \{(p, a, \vbf \cdot 1, q_0') \mid (p, a, \vbf, f) \in \Delta, f \in F\} \\
     \cup&\ \{(q_0', a, \vbf \cdot 1, q_0') \mid (q_0, a, \vbf, f) \in \Delta, f \in F\}.
     \end{align*}
     and 
     \[C' = \begin{cases}
         C \cdot \{1\} & \text{if } \varepsilon \not\in L(\Amc) \\
         C \cdot \{1\} \cup \{\0^{d+1}\} & \text{otherwise}.
     \end{cases}\]

	 It is now easily verified that $L(\Amc) = L(\Amc')$. 
\end{proof}

Observe that we have $SR_\omega(\Amc') = L(\Amc)^\omega$, that is, every normalized~PA interpreted as a reset~PA recognizes the $\omega$-closure of the language recognized by the~PA. 
As an immediate consequence we obtain the following corollary.
\begin{cor}
\label{cor:omegaclosure}
	For every Parikh recognizable language $L$ we have that $L^\omega$ is reset~PA recognizable.
\end{cor}

Combining these results we obtain that every $\omega$-language in $\LPAPA$, \ie, every $\omega$-language of the form $\bigcup_i U_i V_i^\omega$ is reset~PA recognizable. We show that the other direction does not hold, \ie, the inclusion is strict.

\begin{lemma}
\label{lem:papaSubsetReset}
The class $\LPAPA$ is a strict subclass of the class of reset~PA recognizable $\omega$-languages.
\end{lemma}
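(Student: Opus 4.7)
The inclusion $\LPAPA \subseteq \LReset$ follows immediately by combining the closure results established just above. Given $L = \bigcup_{i=1}^n U_i V_i^\omega \in \LPAPA$ with Parikh-recognizable $U_i, V_i$, Lemma~\ref{lem:normalized} and Corollary~\ref{cor:omegaclosure} together show that each $V_i^\omega$ is reset-recognizable, Lemma~\ref{lem:concatenation} lifts this to $U_i \cdot V_i^\omega$, and Lemma~\ref{cor:union} concludes via the finite union.

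For strictness I would exhibit the witness
\[
L^* \;=\; V_1^\omega \;\cup\; V_1^* \cdot \{\$\} \cdot V_2^\omega
\]
over $\Sigma = \{a,b,c,d,\$\}$ with $V_1 = \{a^n b^n : n \geq 1\}$ and $V_2 = \{c^m d^m : m \geq 1\}$. A reset PA for $L^*$ uses two accepting states $p$ (initial) and $q$, a single transition $p \to q$ on $\$$, and non-accepting intermediate states implementing each $V_1$-block as a $p \to p$ segment and each $V_2$-block as a $q \to q$ segment. With the five-dimensional semi-linear set
\[
C \;=\; \{(n,n,0,0,0) : n \geq 1\} \;\cup\; \{(0,0,m,m,0) : m \geq 1\} \;\cup\; \{(0,0,0,0,1)\},
\]
the only admissible segments are exactly single $V_1$-blocks at $p$, single $V_2$-blocks at $q$, and the $\$$-transition, giving $SR_\omega(\Amc) = L^*$.

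To see $L^* \notin \LPAPA$, suppose for contradiction $L^* = \bigcup_{i=1}^N U_i V_i^\omega$. Since every $\alpha \in L^*$ contains at most one $\$$, no $V_i$ may contain $\$$; consequently, for any $\$$-containing $\alpha = u v_1 v_2 \cdots \in U_i V_i^\omega$ the symbol $\$$ lies inside $u \in U_i$, and the pre-$\$$ portion of $u$ must belong to $V_1^*$. The crux is the claim that any Parikh-recognizable $U \subseteq V_1^* \cdot \{\$\}$ can contain $V_1^m \cdot \{\$\}$-words for only finitely many values of $m$: writing $U = R \cap \rho^{-1}(C')$ with $R$ regular and $C'$ semi-linear, unboundedly many $m$ would force the NFA of $R$ to contain an across-block cycle, and this cycle combined with any semi-linear $C'$ (which can enforce only global linear constraints such as total $|w|_a = |w|_b$, never per-block balance) forces $U$ to contain some pre-$\$$ word with unbalanced blocks, contradicting $U \subseteq V_1^* \cdot \{\$\}$. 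Consequently the finite union $\bigcup_i U_i V_i^\omega$ covers $V_1^m \cdot \{\$\} \cdot V_2^\omega$ only for $m$ in some finite set; taking $m$ outside that set exhibits an $\alpha \in L^*$ outside $\bigcup_i U_i V_i^\omega$, the desired contradiction.

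The main obstacle is making the ``across-block pumping'' rigorous: one must combine regular pumping in $R$ with the period-vector structure of $C'$ to actually manufacture an unbalanced-block witness inside $U$, handling the subcases where the NFA cycle traverses several block boundaries or where nontrivial period vectors of $C'$ interact with the pumped substring. A formal argument would invoke the standard pumping machinery for Parikh-recognizable languages (equivalently, blind counter machines) to guarantee that the pumped word remains in $U$ while breaking the per-block $V_1$-structure.
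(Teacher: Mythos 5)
Your inclusion direction is exactly the paper's argument (union, left-concatenation, $\omega$-closure), so that part is fine. The strictness direction, however, has a genuine gap, and in fact its central claim is false as stated. You assert that a Parikh-recognizable $U \subseteq V_1^*\cdot\{\$\}$ with $V_1 = \{a^nb^n : n\geq 1\}$ can contain words from $V_1^m\cdot\{\$\}$ for only finitely many $m$. The regular (hence Parikh-recognizable) language $U = \{(ab)^m\$ : m \geq 0\}$ is a counterexample: it sits inside $V_1^*\cdot\{\$\}$ and hits every $m$. The number of blocks is simply the wrong parameter; what a PA cannot do is certify \emph{per-block balance when the blocks have many distinct lengths}, which is why the paper's witness word $aba^2b^2\cdots a^jb^j$ has strictly growing blocks. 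With the crux claim false, the concluding step (``take $m$ outside the finite set'') collapses. There are two further technical problems: $U_i$ need not be contained in $V_1^*\cdot\{\$\}$ at all, since a word $u\in U_i$ may carry material after the $\$$ (you would need a closure argument to pass to the set of pre-$\$$ portions); and the representation $U = R\cap\rho^{-1}(C')$ with $\rho$ the Parikh map on letters is not the definition of Parikh-recognizability used here --- counters are attached to transitions of a nondeterministic automaton, so different runs on the same word contribute different vectors, and your ``global constraint on $|w|_a,|w|_b$'' intuition does not directly apply.

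The deeper issue is the step you yourself flag as the ``main obstacle'': no pumping argument is actually carried out, and naive pumping cannot work for PA, because repeating or deleting a cycle changes the extended Parikh image and may leave the semi-linear set $C'$. The paper's proof sidesteps pumping entirely with a \emph{swapping} argument: inside a long prefix $aba^2b^2\cdots a^{m^4+1}b^{m^4+1}$ of a word accepted by the PA for $U_i$, the pigeonhole principle yields two infixes $a^rb^r$ and $a^sb^s$ whose runs start and end in the same quadruple of states; exchanging them permutes the transitions of the run and therefore leaves the extended Parikh image \emph{unchanged}, so the resulting word $\cdots a^rb^s\cdots a^sb^r\cdots$ is still (a prefix of) an accepted word, yet violates the block structure. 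That Parikh-image invariance under exchanging state-matched infixes is the key idea your proposal is missing; without it (or an equivalently rigorous substitute, e.g.\ an Ogden-style iteration lemma for PA applied to a growing-block witness), the strictness half of the lemma is not established.
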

\begin{proof}
    The inclusion is a direct consequence of \Cref{cor:union}, \Cref{lem:concatenation}, and \Cref{cor:omegaclosure}. Hence it remains to show that the inclusion is strict.

    Consider the $\omega$-language $L =  \{a^n b^n \mid n \geq 1\}^\omega \cup \{a^n b^n \mid n \geq 1\}^* \cdot \{a\}^\omega$. This $\omega$-language is reset~PA recognizable, as witnessed by the strong reset~PA in \Cref{fig:resetpbacounterexample} with semi-linear set $C = \{(z,z) \mid z \in \Nbb\}$. 
    
\begin{figure}[h]
	\centering
	\begin{tikzpicture}[->,>=stealth',shorten >=1pt,auto,node distance=3cm, semithick]
	\tikzstyle{every state}=[minimum size=1.0cm]
	
	\node[initial, initial text = {}, state] (q0) {$q_0$};
	\node[state] (q1) [right of=q0] {$q_1$};
	\node[state, accepting] (q2) [right of=q1] {$q_2$};	
	\node[state, accepting] (q3) [right of=q2] {$q_3$};	

	\path
	(q0) edge [loop above] node {$a, \begin{pmatrix}1\\0\end{pmatrix}$} (q0)
	(q0) edge              node {$b, \begin{pmatrix}0\\1\end{pmatrix}$} (q1)
	(q0) edge [bend right = 60] node {$b, \begin{pmatrix}0\\1\end{pmatrix}$} (q2)
	(q1) edge [loop above] node {$b, \begin{pmatrix}0\\1\end{pmatrix}$} (q1)
	(q1) edge              node {$b, \begin{pmatrix}0\\1\end{pmatrix}$} (q2)
	(q2) edge [bend left = 70]  node {$a, \begin{pmatrix}1\\0\end{pmatrix}$} (q0)
	(q2) edge              node {$a, \begin{pmatrix}0\\0\end{pmatrix}$} (q3)	
	(q3) edge[loop above]  node {$a, \begin{pmatrix}0\\0\end{pmatrix}$} (q3)		
	;
	\end{tikzpicture}
	\caption{The strong reset~PA for $\{a^n b^n \mid n \geq 1\}^\omega \cup \{a^n b^n \mid n \geq 1\}^* \cdot $ $\{a\}^\omega$ with semi-linear set $C = \{(z,z) \mid z \in \Nbb\}$.}
	\label{fig:resetpbacounterexample}
\end{figure}

We claim that $L \notin \LPAPA$. 
Assume towards a contraction that $L \in \LPAPA$, \ie, there are Parikh recognizable languages $U_1, V_1, \dots, U_n, V_n$ such that $L = U_1 V_1^\omega \cup \dots \cup U_n V_n^\omega$. 
Then there is some $i \leq n$ such that for infinitely many $j \geq 1$ we have $\alpha_j = aba^2b^2 \dots a^jb^j \cdot a^\omega \in U_iV_i^\omega$. 
Then $V_i$ must contain a word of the form $v = a^k$, $k > 0$. 
Additionally, there cannot be a word in $V_i$ with infix~$b$. 
To see this assume for sake of contradiction that there is a word $w \in V_i$ with $\ell = |w|_b > 0$. Let  $\beta = (v^{\ell+1} w)^\omega$. 
Observe that~$\beta$ has an infix that consists of at least $\ell+1$ many $a$, followed by at most $\ell$, but at least one $b$, hence, no word of the form $u\beta$ with $u \in U_i$ is in $L$. This is a contradiction, thus $V_i \subseteq \{a\}^+$.

Since $U_i$ is Parikh recognizable, there is a~PA $\Amc_i$ with $L(\Amc_i) = U_i$. 
Let $m$ be the number of states in~$\Amc_i$ and $w' = aba^2b^2 \dots a^{m^4+1} b^{m^4+1}$. 
Then $w'$ is a prefix of a word accepted by~$\Amc_i$. Now consider the infixes $a^\ell b^\ell$ and the pairs of states $q_1,q_2$, where we start reading $a^\ell$ and end reading $a^\ell$, and $q_3,q_4$ where we start to read $b^\ell$ and end to read $b^\ell$, respectively. 
There are $m^2$ choices for the first pair and $m^2$ choices for the second pair, hence $m^4$ possibilities in total. 
Hence, as we have more than $m^4$ such infixes, there must be two with the same associated states $q_1,q_2,q_3,q_4$. 
Then we can swap these two infixes and get a word of the form $ab \dots a^rb^s \dots a^s b^r \dots a^{m^4+1} b^{m^4+1}$ that is a prefix of some word in $L(\Amc_i) = U_i$. 
But no word in $L$ has such a prefix, a contradiction. Thus, $U_1 V_1^\omega \cup \dots \cup U_nV_n^\omega \neq L$.
\end{proof}

\subsection{Characterization of Büchi Parikh Automata}
As mentioned in the last section, the class of $\omega$-languages recognized by Büchi~PA is a strict subset of $\LPAPA$, \ie, languages of the form $\bigcup_i U_i V_i^\omega$ for Parikh recognizable $U_i$ and~$V_i$. In this section we show that a restriction of the~PA recognizing the~$V_i$ is sufficient to exactly capture the expressiveness of Büchi~PA. To be precise, we show the following.

\begin{lem}
\label{lem:charBuchiPA}
    The following are equivalent for all $\omega$-languages $L \subseteq \Sigma^\omega$.
    \begin{enumerate}[(1)]
        \item $L$ is Büchi~PA recognizable.
        \item $L$ is of the form $\bigcup_i U_i V_i^\omega$, where $U_i \in \Sigma^*$ is Parikh recognizable and $V_i \in \Sigma^*$ is recognized by a normalized~PA where $C$ is a homogeneous linear set.
    \end{enumerate}
\end{lem}

We note that we can translate every~PA (with a linear set $C$) into an equivalent normalized~PA by \Cref{lem:normalized}. However, this construction adds a base vector, as we concatenate~$\{1\}$ to $C$. In fact, this can generally not be avoided without losing expressiveness. 
It turns out that this loss of expressiveness is exactly what we need to characterize the class of $\omega$-languages recognized by Büchi~PA as stated in the previous lemma. The main reason for this is pointed out in the following lemma.

\begin{lemma}
\label{lem:PArestrictionEqualsBuchi}
    Let $L$ be a language recognized by a (normalized)~PA $\Amc = (Q, \Sigma, q_0, \Delta, \{q_0\}, C)$ where $C$ is a homogeneous linear set. Then we have $B_\omega(\Amc) = L(\Amc)^\omega$.
\end{lemma}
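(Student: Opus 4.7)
The plan is to prove the two inclusions $L(\Amc)^\omega \subseteq B_\omega(\Amc)$ and $B_\omega(\Amc) \subseteq L(\Amc)^\omega$ separately. The crucial feature of the hypothesis that $C$ is linear without base vector is that $C = \{z_1 b_1 + \dots + z_\ell b_\ell \mid z_1, \dots, z_\ell \in \Nbb\}$ for some generators $b_1, \dots, b_\ell \in \Nbb^d$, making $C$ a finitely generated submonoid of $\Nbb^d$: in particular, $\0 \in C$ and $C$ is closed under addition.

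For $L(\Amc)^\omega \subseteq B_\omega(\Amc)$, I would take $\alpha = w_1 w_2 \dots$ with each $w_i \in L(\Amc) \setminus \{\varepsilon\}$ and concatenate the accepting runs $\sigma_i$ of the $w_i$. Since $q_0$ is the only accepting state, each $\sigma_i$ is a $q_0 \to q_0$ run, so the concatenation is a valid run $r$ of $\Amc$ on $\alpha$ which visits $q_0$ after each $w_i$. The Parikh image at that visit equals $\rho(\sigma_1) + \dots + \rho(\sigma_i) \in C$ by closure under addition, so $r$ satisfies the Büchi condition.

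For the converse, the hard direction, I would fix a Büchi-accepting run $r$ on $\alpha$ with accepting hits at positions $i_1 < i_2 < \dots$ and prefix Parikh images $v_k := \rho(r_1 \dots r_{i_k}) \in C$. Naively factoring at every hit fails because $v_{k+1} - v_k$ need not lie in $C$, so I need to thin the sequence out. For each $k$ I would fix a representation $v_k = \sum_{j=1}^\ell z_j^{(k)} b_j$ with $z_j^{(k)} \in \Nbb$ and apply Dickson's lemma to the witness sequence $(z_1^{(k)}, \dots, z_\ell^{(k)})_{k \geq 1}$ in $\Nbb^\ell$ to extract an infinite subsequence $k_1 < k_2 < \dots$ whose witness vectors are componentwise non-decreasing. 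Then
\[
    v_{k_{i+1}} - v_{k_i} = \sum_{j=1}^\ell \bigl(z_j^{(k_{i+1})} - z_j^{(k_i)}\bigr) b_j \in C.
\]
Factoring $\alpha = w_1 w_2 \dots$ with $w_1 = \alpha[1, i_{k_1}]$ and $w_m = \alpha[i_{k_{m-1}}+1, i_{k_m}]$ for $m \geq 2$, each $w_m$ is non-empty and corresponds to a $q_0 \to q_0$ run segment whose Parikh image is $v_{k_1}$, respectively $v_{k_m} - v_{k_{m-1}}$, and thus lies in $C$; hence $w_m \in L(\Amc)$ and $\alpha \in L(\Amc)^\omega$.

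The main obstacle is this thinning step, and it is also the point where the absence of a base vector is indispensable: it is what lets membership in $C$ be captured purely by a non-negative coefficient vector in $\Nbb^\ell$, so that the Dickson argument produces differences which again admit non-negative representations. With a base vector $b_0$ present in every $v_k$ but cancelling in the difference $v_{k_{i+1}} - v_{k_i}$, the resulting vector would generally fall outside $C$, and the above argument would break down.
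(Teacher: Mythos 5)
Your proposal is correct and follows essentially the same route as the paper: the easy inclusion via concatenating $q_0\to q_0$ runs and closure of $C$ under addition (which is exactly where the absence of a base vector enters), and the hard inclusion via Dickson's lemma applied to the coefficient vectors $(z_1^{(k)},\dots,z_\ell^{(k)})$ of the prefix Parikh images, so that consecutive differences along the extracted subsequence again lie in $C$. Your write-up is in fact slightly more explicit than the paper's about why Dickson must be applied to the witness coefficients rather than to the Parikh images themselves.
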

\begin{proof}
 
 In this proof we assume that $C = C(\0, P)$ with $P = \{\pbf_1, \dots, \pbf_\ell\}$ for some $\ell \geq 0$.
 
 %$\Rightarrow$ To show $B_\omega(\Amc) \subseteq L(\Amc)^\omega$, 
 Let $\alpha \in B_\omega(\Amc)$ with accepting run $r = r_1r_2r_3 \dots$ where $r_i = (p_{i-1}, \alpha_i, \vbf_i, p_i)$. 
 As $r$ satisfies the Büchi condition and $\Amc$ is normalized there are infinitely many accepting hits, that is, infinitely many~$i$ such that $p_i = q_0$ and $\rho(r_1 \dots r_i) \in C$. 
 By Dickson's Lemma \cite{dickson}, there is an infinite monotone (sub)sequence of accepting hits $s_1 < s_2 < \dots$, \ie, for all $j > i$ we have $\rho(r_1 \dots r_{s_i}) = \pbf_1 z_1 + \dots + \pbf_\ell z_\ell$ for some $z_i \in \Nbb$ and $\rho(r_1 \dots r_{s_j}) = \pbf_1 z'_1 + \dots + \pbf_\ell z'_\ell$ for some $z'_i \in \Nbb$, and $z'_k \geq z_k$ for all $k \leq \ell$. 
 Hence, every infix $\alpha[s_i + 1, s_{i+1}]$ for $i \geq 0$ (assuming $s_0 = 0$) is accepted by $\Amc$.

% $\Leftarrow$ To show $L(\Amc)^\omega \subseteq B_\omega(\Amc)$, 
 Now let $w_1w_2 \dots \in L(\Amc)^\omega$ such that $w_i \in L(\Amc)$ for all $i \geq 1$. 
 Let $r^{(i)}$ be an accepting run of $\Amc$ on $w_i$. 
 Observe that for every $i \geq 1$ we have that $r^{(1)} \dots r^{(i)}$ is an accepting run of $\Amc$ on $w_1 \dots w_i$, as $C$ is a homogeneous linear set, and hence we have $\rho(r^{(1)} \dots r^{(i)}) = \rho(r^{(1)}) + \dots + \rho(r^{(i)}) \in C$. 
 Hence, the infinite sequence $r^{(1)} r^{(2)} \dots$ is a run of $\Amc$ on $w_1 w_2 \dots$ with infinitely many accepting hits. Hence $w_1 w_2 \dots \in B_\omega(\Amc)$.
\end{proof}

This is the main ingredient to prove \Cref{lem:charBuchiPA}. 
%Before we proceed to the proof, we need the following observation.
% \begin{observation}
% \label{obs:split}
%  Let $\Amc = (Q, \Sigma, q_0, \Delta, F, C)$ be a~PA, and write $C = \bigcup_i C_i$, where every~$C_i$ is linear. Then $L(\Amc) = \bigcup_i L(Q, \Sigma, q_0, \Delta, F, C_i)$.
% \end{observation}

% We are now ready to prove the main result of this subsection.

\begin{proof}[Proof of \Cref{lem:charBuchiPA}]

 We note that the proof in \cite{infiniteZimmermann} showing that every $\omega$-language~$L$ recognized by a Büchi-PA is of the form $\bigcup_i U_i V_i$ for~PA recognizable $U_i$ and $V_i$ already constructs~PA for the $V_i$ of the desired form. This shows the implication $(1) \Rightarrow (2)$.

 To show the implication $(2) \Rightarrow (1)$, we use that the $\omega$-closure of languages recognized by~PA of the stated form is Büchi~PA recognizable by \Cref{lem:PArestrictionEqualsBuchi}. As Büchi~PA are closed under left-concatenation with~PA recognizable languages (\Cref{lem:concatenation}) and union \cite{infiniteZimmermann}, the claim follows.
\end{proof}

\subsection{Characterization of $\LPAReg$}
In this section we characterize $\LPAReg$ by showing the following equivalences.
\begin{thm}
\label{thm:LimitEqualsReach}
 The following are equivalent for all $\omega$-languages $L \subseteq \Sigma^\omega$.
 \begin{enumerate}[(1)]
    \item $L$ is of the form $\bigcup_i U_i V_i^\omega$, where $U_i \in \Sigma^*$ is Parikh recognizable, and $V_i \subseteq \Sigma^*$ is regular.
    \item $L$ is limit~PA recognizable.
    \item $L$ is reachability-regular~PA recognizable.
\end{enumerate}
\end{thm}

Observe that in the first item we may assume that $L$ is of the form $\bigcup_i U_i V_i$, where \mbox{$U_i \in \Sigma^*$} is Parikh recognizable, and $V_i \subseteq \Sigma^\omega$ is $\omega$-regular. Then, by simple combinatorics and Büchi's theorem we have $\bigcup_i U_i V_i = \bigcup_i U_i (\bigcup_{j_i} X_{j_i} Y_{j_i}^\omega) = \bigcup_{i, j_i} U_i (X_{j_i} Y_{j_i}^\omega) = \bigcup_{i, j_i} (U_i X_{j_i}) Y_{j_i}^\omega$, for regular languages $X_{j_i}, Y_{j_i}$, where $U_i X_{j_i}$ is Parikh recognizable, as Parikh recognizable languages are closed under concatenation~\cite[Proposition~3]{cadilhac2013automates}. %\footnote{To the best of our knowledge there is no explicit construction for concatenation in the literature for~PA on finite words, however, a standard construction very similar to the one of \Cref{lem:concatenation} works. }

To simplify the proof, it is convenient to consider the following generalizations of Büchi automata. A \emph{transition-based generalized Büchi automaton} (TGBA) is a tuple $\Amc = (Q, \Sigma, q_0, \Delta, \Tmc)$ where $\Tmc \subseteq 2^\Delta$ is a collection of sets of transitions. Then a run $r_1 r_2 r_3 \dots$ of $\Amc$ is accepting if for all $T \in \Tmc$ there are infinitely many $i$ such that $r_i \in T$. It is well-known that TGBA have the same expressiveness as Büchi automata \cite{tgba}.

\Cref{thm:LimitEqualsReach} will be a direct consequence from the following lemmas. The first lemma shows the implication $(1) \Rightarrow (2)$.

\begin{lemma}
\label{lem:LPARegtoLimit}
 If $L \in \LPAReg$, then $L$ is limit~PA recognizable.   
\end{lemma}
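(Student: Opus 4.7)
The plan is to reduce the statement to a trivial base case by exploiting the closure properties already established. By \autoref{cor:union}, the class of limit PA-recognizable $\omega$-languages is closed under union, so it suffices to show that a single language of the form $UV^\omega$, with $U \subseteq \Sigma^*$ Parikh-recognizable and $V \subseteq \Sigma^*$ regular, is limit PA-recognizable. By \autoref{lem:concatenation}, this class is also closed under left-concatenation with Parikh-recognizable languages, so it further suffices to prove that $V^\omega$ is limit PA-recognizable for every regular $V$.

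For this base case I would observe that $V^\omega$ is $\omega$-regular and hence recognized by some Büchi automaton $\Bmc = (Q, \Sigma, q_0, \Delta, F)$. From $\Bmc$ I build a limit PA $\Amc$ of dimension $1$ by keeping the state set $Q$, the initial state $q_0$, and the accepting set $F$, replacing each transition $(p, a, q) \in \Delta$ by the labelled transition $(p, a, (0), q)$, and choosing $C = \{0\}$. Along any infinite run of $\Amc$ every counter value is $0$, so the single component neither diverges nor accumulates any positive contribution; its infinite sum is $0 \in C$. Consequently the limit condition of $\Amc$ coincides exactly with the Büchi condition of $\Bmc$, giving $L_\omega(\Amc) = L_\omega(\Bmc) = V^\omega$.

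There is essentially no obstacle to this argument beyond unwinding the definitions: the lemma follows from the two closure results together with the observation that Büchi automata embed into limit PA via trivial zero-vector labels. The only point worth double-checking is that the left-concatenation construction in \autoref{lem:concatenation} preserves limit PA-recognizability in the presence of the all-zero counters used in the base case, which is immediate from the construction sketched there (for limit PA one simply takes $C = C_1 \cdot C_2$, and vectors concatenated with $0$'s in one block continue to behave correctly under the limit condition).
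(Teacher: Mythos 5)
Your proof is correct and rests on essentially the same construction as the paper's: the paper inlines the standard concatenation of a PA for $U$ with a zero-labelled Büchi automaton for $V^\omega$ (keeping $C$ from the PA part), which is exactly what you obtain by combining \autoref{lem:concatenation} with your trivial embedding of Büchi automata into limit PA via all-zero vectors and $C=\{0\}$. Your factoring through the already-established closure lemmas is legitimate (no circularity, since \autoref{lem:concatenation} is proved independently) and arguably cleaner, but it is the same underlying argument.
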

\begin{proof}
    As the class of limit Parikh recognizable $\omega$-languages is closed under union by \Cref{cor:union}, it is sufficient to show how to construct a limit~PA for an $\omega$-language of the form $L = UV^\omega$, where $U$ is Parikh recognizable and $V$ is regular.

    Let $\Amc_1 = (Q_1, \Sigma, q_1, \Delta_1, F_1, C)$ be a~PA with $L(\Amc_1) = U$ and $\Amc_2 = (Q_2, \Sigma, q_2, \Delta_2, F_2)$ be a Büchi automaton with $L_\omega(\Amc_2) = V^\omega$. We use the following standard construction for concatenation. Let $\Amc = (Q_1 \cup Q_2, \Sigma, q_1, \Delta, F_2, C)$ be a limit~PA where 
    \[\Delta = \Delta_1 \cup \{(p, a, \0, q) \mid (p, a, q) \in \Delta_2\} \cup \{(f, a, \0, q) \mid (q_2, a, q) \in \Delta_2, f \in F_1\}.\]
    We claim that $L_\omega(\Amc) = L$.

	To show $L_\omega(\Amc) \subseteq L$, let $\alpha \in L_\omega(\Amc)$ with accepting run $r_1 r_2 r_3 \dots$ where $r_i = (p_{i-1}, \alpha_i, \vbf_i, p_i)$. As only the states in $F_2$ are accepting, there is a position $j$ such that $p_{j-1} \in F_1$ and $p_j \in Q_2$. 
    In particular, all transitions of the copy of $\Amc_2$ are labeled with~$\0$, \ie, $\vbf_i = \0$ for all $i \geq j$. Hence $\rho(r) = \rho(r_1 \dots r_{j-1}) \in C$ (in particular, there is no $\infty$ value in $\rho(r)$).
    We observe that $r_1 \dots r_{j-1}$ is an accepting run of $\Amc_1$ on $\alpha[1,j-1]$, as $p_{j-1} \in F_1$ and $\rho(r_1 \dots r_{j-1}) \in C$.
    For all $i \geq j$ let $r'_i = (p_{i-1}, \alpha_i, p_i)$. Observe that $(q_2, \alpha_j, p_j)r'_{j+1} r'_{j+2} \dots$ is an accepting run of $\Amc_2$ on $\alpha_j \alpha_{j+1} \alpha_{j+2} \dots$, hence $\alpha \in L(\Amc_1) \cdot L_\omega(\Amc_2) = L$.

	To show $L = UV^\omega \subseteq L_\omega(\Amc)$, let $w \in L(\Amc_1)=U$ with accepting run $s$, and \mbox{$\alpha \in L_\omega(\Amc_2)=V^\omega$} with accepting run $r = r_1 r_2 r_3 \dots$, where $r_i = (p_{i-1}, \alpha_1, p_i)$.
    Observe that~$s$ is also a partial run of $\Amc$ on $w$, ending in an accepting state $f$. By definition of~$\Delta$, we can continue the run $s$ in $\Amc$ basically as in $r$. To be precise, let $r'_1 = (f, \alpha_1, \0, p_1)$, and, for all $i > 1$ let $r'_i = (p_{i-1}, \alpha_i, \0, p_i)$. Then $s r'_1 r'_2 r'_3 \dots$ is an accepting run of $\Amc$ on $w \alpha$, hence $w \alpha \in L_\omega(\Amc)$.
\end{proof}

Observe that the construction in the proof of the lemma works the same way when we interpret $\Amc$ as a reachability-regular~PA (every visit of an accepting state has the same good counter value; this argument is even true if we interpret $\Amc$ as a Büchi~PA), showing the implication~$(1) \Rightarrow (3)$.
\begin{cor}
\label{cor:LPARegToReachReg}
If $L \in \LPAReg$, then $L$ is reachability-regular.
\end{cor}

For the backwards direction we need an auxiliary lemma, essentially stating that semi-linear sets over $C \subseteq (\Nbb \cup \{\infty\})^d$ can be modified such that $\infty$-entries in vectors in~$C$ are replaced by arbitrary integers, and remain semi-linear.

\begin{lemma}
    \label{lem:semi-linear-inf}
    Let $C \subseteq (\Nbb \cup \{\infty\})^d$ be semi-linear and $D \subseteq \{1, \dots, d\}$. Let $C_D \subseteq \Nbb^d$ be the set obtained from $C$ as follows.
    \begin{enumerate}
        \item Remove every vector $\vbf = (v_1, \dots, v_d)$ where $v_i = \infty$ for an $i \notin D$.
        \item As long as $C_D$ contains a vector $\vbf = (v_1, \dots, v_d)$ with $v_i = \infty$ for an $i \leq d$: replace $\vbf$ by all vectors of the form $(v_1, \dots v_{i-1}, z, v_{i+1}, \dots, v_d)$ for $z \in \Nbb$.
    \end{enumerate}
    Then $C_D$ is semi-linear. Furthermore, $C_D$ can be computed in polynomial time.
\end{lemma}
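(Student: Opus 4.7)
Since semi-linear sets are closed under finite union and the two-step construction distributes over unions (both removing $\infty$-bearing vectors and replacing $\infty$-entries by arbitrary naturals are defined vector-wise), it suffices to prove the claim for a single linear set $L = \{b_0 + \sum_{j=1}^\ell b_j z_j \mid z_j \in \Nbb\}$. If $b_0$ has an $\infty$-entry in some coordinate outside $D$, then every vector of $L$ does and $L_D = \emptyset$, so assume otherwise. Let $I = \{j \geq 1 : b_j \text{ has some }\infty\text{-entry outside } D\}$. Step~1 retains exactly the vectors with $z_j = 0$ for all $j \in I$, yielding the linear set $L' = \{b_0 + \sum_{j \notin I} b_j z_j \mid z_j \in \Nbb\}$ whose remaining $\infty$-entries lie only in coordinates of $D$.

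For step~2, let $J = \{j \notin I : b_j \text{ has some }\infty\text{-entry}\}$ and $K_0 = \{i : (b_0)_i = \infty\} \subseteq D$. I partition $L'$ by which of the ``infinite'' generators are active: for each $S \subseteq J$, set $K_S = K_0 \cup \bigcup_{j \in S}\{i : (b_j)_i = \infty\} \subseteq D$, and let $L'_S$ be the subset of $L'$ obtained by forcing $z_j \geq 1$ for $j \in S$ and $z_j = 0$ for $j \in J \setminus S$, while the remaining $z_j$ ($j \notin I \cup J$) range freely over $\Nbb$. In each such vector, coordinates in $K_S$ equal $\infty$ and coordinates outside $K_S$ receive finite values, since no generator contributing to them carries $\infty$ in those positions; step~2 then turns the $K_S$-coordinates into arbitrary naturals.

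The key point is that the projection $\pi_{\{1,\dots,d\}\setminus K_S}(L'_S)$ onto the finite coordinates is itself a linear set: after substituting $z_j = 1 + z'_j$ for $j \in S$ with $z'_j \in \Nbb$, its base is $(b_0 + \sum_{j \in S} b_j)$ restricted to $\{1,\dots,d\}\setminus K_S$ and its generators are the restrictions of $b_j$ to $\{1,\dots,d\}\setminus K_S$ for $j \in S \cup (\{1,\dots,\ell\} \setminus (I \cup J))$, all of which are finite vectors. Hence the contribution of $S$ to $L_D$ is the linear set obtained by taking these base and generators (padded with $0$ on $K_S$) together with the unit generators $\ebf_i$ for $i \in K_S$. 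Taking the finite union over $S \subseteq J$ expresses $L_D$ as a semi-linear set, and unioning over all linear pieces of $C$ yields $C_D$ semi-linear. The minor subtleties are to absorb $b_0$'s own $\infty$-entries into $K_0$ and to observe that the case split over $S$ remains finite; the heart of the argument is the case analysis by which ``infinite'' generators are switched on.
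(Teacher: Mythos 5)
Your proof is correct. It shares the paper's basic strategy---work linear set by linear set, discard generators carrying an $\infty$-entry outside $D$, and realize the replacement of the surviving $\infty$-entries by adjoining unit generators $\ebf_i$ on the affected coordinates---but it differs in one substantive way: you partition the linear set according to the subset $S\subseteq J$ of $\infty$-bearing generators that are actually used with a positive coefficient, and only for the coordinates in $K_S$ do you add the free unit generators. The paper instead expresses the result as a single linear set $\{b_0 + \sum_{j\leq \ell} \bar{b}_j z_j + \sum_{i \in \Inf(b_j)} \ebf_i z_{ij} \mid z_j, z_{ij}\in\Nbb\}$, in which a coordinate $i\in\Inf(b_j)$ may be incremented freely even when $z_j=0$, i.e.\ even when $b_j$ contributes no $\infty$ to that coordinate. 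For instance, for $C=\{(\infty,1)z_1\mid z_1\in\Nbb\}$ and $D=\{1\}$ the procedure of the lemma yields $\{(0,0)\}\cup\{(z,z_1)\mid z\in\Nbb,\ z_1\geq 1\}$, while the single-formula version yields all of $\Nbb^2$; your case split over $S$ is exactly what is needed to avoid this overcounting. The price is a union over $2^{|J|}$ linear pieces per original linear set, which is harmless for semi-linearity. In short, your route is a finer-grained (and, as written, more accurate) version of the paper's argument rather than a fundamentally different one.
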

\begin{proof}
For a vector $\vbf = (v_1, \dots, v_d) \in (\Nbb \cup \{\infty\})^d$, let $\Inf(\vbf) = \{i \mid v_i = \infty\}$ denote the positions of $\infty$-entries in $\vbf$. 
Furthermore, let $\bar\vbf = (\bar{v}_1, \dots, \bar{v}_d)$ denote the vector obtained from $\vbf$ by replacing every $\infty$-entry by 0, \ie, $\bar{v}_i = 0$ if $v_i = \infty$, and $\bar{v}_i = v_i$ otherwise.

We carry out the following procedure for every linear set of the semi-linear set independently, hence we assume that $C = (\bbf, P)$ with is linear. 
We also assume that there is no $\pbf \in P$ with $\Inf(\pbf) \not\subseteq D$, otherwise, we simply remove it. 

Now, if $\Inf(\bbf) \not\subseteq D$, then $C_D = \varnothing$, as this implies that every vector in $C$ has an $\infty$-entry at an unwanted position (the first item of the lemma).

Otherwise, $C_D = C(\bbf, \bigcup_{\pbf \in P} \left(\{\bar{\pbf}\} \cup \{\ebf_i \mid i \in \Inf(\pbf)\})\right)$, which is linear by definition and computable in polynomial time.
\end{proof}

We are now ready to prove the following lemma, showing the implication $(2) \Rightarrow (1)$.
\begin{lemma}
\label{lem:limitToLPAReg}
If $L$ is limit~PA recognizable, then $L \in \LPAReg$. 
\end{lemma}
\begin{proof}
    Let $\Amc = (Q, \Sigma, q_0, \Delta, F, C)$ be an limit~PA of dimension $d$. The idea is as follows. We guess a subset $D \subseteq \{1, \dots, d\}$ of counters whose values we expect to be $\infty$.
    Observe that every counter not in $D$ has a finite value, hence for every such counter there is a point where all transitions do not increment the counter further.
    For every subset $D \subseteq \{1, \dots, d\}$ we decompose $\Amc$ into a~PA and a TGBA. In the first step we construct a~PA where every counter not in $D$ reaches its final value and is verified. In the second step we construct a TGBA ensuring that for every counter in $D$ at least one transition adding a non-zero value to that counter is used infinitely often. This can be encoded directly into the TGBA. Furthermore we delete all transitions that modify counters not in $D$. 

    Fix $D \subseteq \{1, \dots, d\}$ and $f \in F$, and define the~PA $\Amc^D_f = (Q, \Sigma, q_0, \Delta, \{f\}, C_D)$ where~$C_D$ is defined as in \Cref{lem:semi-linear-inf}.
    Furthermore, we define the TGBA $\Bmc^D_f = (Q, \Sigma, f, \Delta^D, \Tmc^D)$ where~$\Delta^D$ contains the subset of transitions of $\Delta$ where the counters not in $D$ have zero-values (just the transitions without vectors for the counters, as we construct a TGBA). On the other hand, for every counter $i$ in $D$ there is one acceptance component in~$\Tmc^D$ that contains exactly those transitions (again without vectors) where the $i$th counter has a non-zero value. Finally, we encode the condition that at least one accepting state in $F$ needs to by seen infinitely often in~$\Tmc^D$ by further adding the component $\{(p, a, q) \in \Delta \mid q \in F\}$ (\ie, now we need to see an incoming transition of a state in $F$ infinitely often).

    We claim that $L_\omega(\Amc) = \bigcup_{D \subseteq \{1, \dots, d\}, f \in F} L(\Amc^D_f)\cdot L_\omega(\Bmc^D_f)$, which by the comment below \Cref{thm:LimitEqualsReach} and the equivalence of TGBA and Büchi automata implies the statement of the lemma. 

    To show $L_\omega(\Amc) \subseteq \bigcup_{D \subseteq \{1, \dots, d\}, f \in F} L(\Amc^D_f) \cdot L_\omega(\Bmc^D_f)$, let $\alpha \in L_\omega(\Amc)$ with accepting run $r_1 r_2 r_3 \dots$ where $r_i = (p_{i-1}, \alpha_i, \vbf_i, p_i)$. Let $D$ be the positions of $\infty$-entries in $\rho(r) = (v_1, \dots, v_d)$. As the $v_i$ with $i \notin D$ have integer values, there is a position $j$ such that in all $\vbf_k$ for $k \geq j$ the $i$-th entry of $\vbf_k$ is 0. Let $\ell \geq j$ be minimal such that $p_\ell$ in $F$. We split $\alpha = w \beta$, where $w = \alpha[1,\ell]$, and $\beta = \alpha_{\ell + 1} \alpha_{\ell +2} \dots$.
    
    First we argue that $w \in L_\omega(\Amc^D_{p_{\ell}})$. Observe that $\Amc^D_{p_{\ell}}$ inherits all transitions from $\Amc$, hence $r_1 \dots r_{\ell}$ is a run of $\Amc^D_{p_{\ell}}$ on $w$. As $p_\ell$ is accepting by definition, it remains to show that $\rho(r_1 \dots r_\ell) \in C_D$.
    By the choice of $\ell$, all counters not in $D$ have reached their final values. As $C_D$ contains all vectors of $C$ where all $\infty$-entries are replaced by arbitrary values, the claim follows, hence $w \in L(\Amc^D_{p_{\ell}})$.

    Now we argue that $\beta \in L_\omega(\Bmc^D_{p_\ell})$. For every $k > \ell$ define $r'_k = (p_{k-1}, \alpha_k, p_k)$. Observe that $r' = r'_{k+1} r'_{k+2} \dots$ is a run of $\Bmc^D_{p_\ell}$ on $\beta$ (all $r'_{k+1}$ exist in $\Bmc^D_{p_\ell}$, as the counters not in~$D$ of all transitions $r_k$ have zero-values by the definition of $\ell$).
    It remains to show that~$r'$ is accepting, \ie, that for every counter in $D$ at least one transition with a non-zero value is used infinitely often, and an accepting state is visited infinitely often. This is the case, as these counter values are $\infty$ in $\rho(r)$ and by the acceptance condition of limit~PA, hence $\beta \in L_\omega(\Bmc^D_{p_\ell})$.
    We conclude $\alpha \in \bigcup_{D \subseteq \{1, \dots, d\}, f \in F} L(\Amc^D_f) \cdot L_\omega(\Bmc^D_f)$.

	To show $\bigcup_{D \subseteq \{1, \dots, d\}, f \in F} L(\Amc^D_f) \cdot L_\omega(\Bmc^D_f) \subseteq L_\omega(\Amc)$, let $w \in L(\Amc^D_f)$ and $\beta \in L_\omega(\Bmc^D_f)$ for some $D \subseteq \{1, \dots, d\}$ and $f \in F$. We show that $w\beta \in L_\omega(\Amc)$.

    Let $s$ be an accepting run of $\Amc^D_f$ on $w$, which ends in the accepting state $f$ by definition. Let $\rho(s) = (v_1, \dots, v_d)$. By definition of $C_D$, there is a vector $\ubf = (u_1, \dots, u_d)$ in $C$ where $u_i = \infty$ if $i \in D$, and $u_i = v_i$ if $i \notin D$.
    Furthermore, let $r = r_1r_2r_3\dots$, where $r_i = (p_{i-1}, \alpha_i, p_i)$, be an accepting run of $\Bmc^D_f$ on $\beta$, which starts in the accepting state~$f$ by definition. 
    By definition of $\Tmc^d$, for every counter $i \in D$ at least one transition where the $i$-th counter of the corresponding transition in $\Delta$ is non-zero is used infinitely often. 
    Hence, let $r' = r'_1 r'_2 r'_3 \dots$ where $r'_i = (p_{i-1}, \alpha_i, \vbf_i, p_i)$ for a suitable vector $\vbf_i$. 
    Furthermore, the labels of transitions of counters not in $D$ have a value of zero, hence $\rho(r') = (x_1, \dots, x_d)$, where $x_i = \infty$ if $i \in D$, and $x_i = 0$ if $i \notin D$. 
    A technical remark: it might be the case that there are more than one transitions in $\Delta$ that collapse to the same transition in $\Delta^D$, say $\delta_1 = (p, a, \ubf, q)$ and $\delta_2 = (p, a, \vbf, q)$ appear in $\Delta$ and collapse to $(p, a, q)$ in $\Delta^D$. If both transitions, $\delta_1$ and~$\delta_2$, are seen infinitely often, we need to take care that we also see both infinitely often when translating the run $r$ back. This is possible using a round-robin procedure.

    Now observe that $sr'$ is a run of $\Amc$ on $w\beta$ (recall that $s$ ends in $f$, and $r'$ starts in $f$). Furthermore, we have $\rho(sr') = \rho(s) + \rho(r') = (v_1 + x_1, \dots, v_d + x_d)$, where $v_i + x_i = \infty$ if $i \in D$, and $v_i + x_i = v_i$ if $i \notin D$ by the observations above. Hence $\rho(sr') \in C$. Finally, $\Tmc^D$ enforces that at least one accepting state in $\Bmc^D_f$ is seen infinitely often, hence $w\beta \in L_\omega(\Amc)$.
\end{proof}

Observe that the construction in \Cref{lem:LPARegtoLimit} yields a limit~PA whose semi-linear set $C$ contains no vector with an $\infty$-entry. Hence, by this observation and the construction in the previous lemma we obtain the following corollary.
\begin{cor}
    For every limit~PA there is an equivalent limit~PA whose semi-linear set does not contain any $\infty$-entries.
\end{cor}

Finally we show the implication $(3) \Rightarrow (1)$.
\begin{lemma}
    If $L$ is reachability-regular, then $L \in \LPAReg$.
\end{lemma}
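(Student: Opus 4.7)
The plan is to decompose the run of a reachability-regular PA at the position of its first ``accepting hit'' (the position at which both the state is accepting and the Parikh condition is met for the first time). The prefix up to this position is recognized by an ordinary PA, and the suffix starting from this position only has to satisfy the classical Büchi condition, hence is $\omega$-regular. Combining the decomposition with Büchi's theorem (\autoref{thm:buechi}) and the closure of Parikh-recognizable languages under concatenation with regular languages will yield the desired form.

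Concretely, let $\Amc = (Q, \Sigma, q_0, \Delta, F, C)$ be a reachability-regular PA. For every $f \in F$, I would define two automata. First, the PA $\Amc_f = (Q, \Sigma, q_0, \Delta, \{f\}, C)$ on finite words, which accepts exactly those finite prefixes ending in $f$ whose accumulated counter vector lies in $C$; its language $U_f := L(\Amc_f)$ is Parikh-recognizable. Second, the Büchi automaton $\Bmc_f = (Q, \Sigma, f, \Delta', F)$, where $\Delta' = \{(p, a, q) \mid (p, a, \vbf, q) \in \Delta\}$ is obtained from $\Delta$ by forgetting the counter labels. Its $\omega$-language $W_f := L_\omega(\Bmc_f)$ is $\omega$-regular, hence by \autoref{thm:buechi} can be written as $W_f = \bigcup_{j} X_{f,j} Y_{f,j}^\omega$ with regular $X_{f,j}, Y_{f,j}$. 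Then I would claim
\[
 L \;=\; \bigcup_{f \in F} U_f \cdot W_f \;=\; \bigcup_{f \in F, j} (U_f \cdot X_{f,j}) \cdot Y_{f,j}^\omega,
\]
and observe that each $U_f \cdot X_{f,j}$ is Parikh-recognizable (as noted in the remark following \autoref{thm:LimitEqualsReach}, Parikh-recognizable languages are closed under concatenation with regular languages), so the right-hand side exhibits $L$ as an element of $\LPAReg$.

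It remains to verify the identity $L = \bigcup_{f \in F} U_f \cdot W_f$. For the inclusion from left to right, take $\alpha \in L$ with an accepting reachability-regular run $r = r_1 r_2 \ldots$, where $r_i = (p_{i-1}, \alpha_i, \vbf_i, p_i)$. Let $i$ be the smallest index with $p_i \in F$ and $\rho(r_1 \ldots r_i) \in C$, and set $f := p_i$, $w := \alpha[1,i]$, $\beta := \alpha_{i+1}\alpha_{i+2}\ldots$. Then $r_1 \ldots r_i$ witnesses $w \in U_f$, and dropping the counter labels from $r_{i+1} r_{i+2} \ldots$ gives a run of $\Bmc_f$ on $\beta$ that visits $F$ infinitely often (by the regularity part of the acceptance condition on $r$), hence $\beta \in W_f$. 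For the reverse inclusion, given $w \in U_f$ and $\beta \in W_f$, take an accepting PA run $s$ of $\Amc_f$ on $w$ ending in $f$, and a Büchi run $t = t_1 t_2 \ldots$ of $\Bmc_f$ on $\beta$; lift each $t_k$ back to a transition of $\Amc$ by reattaching any counter vector compatible with $\Delta$ (such a vector exists since $t_k \in \Delta'$ is the projection of some $\Delta$-transition). The concatenation $s t'_1 t'_2 \ldots$ is a run of $\Amc$ on $w\beta$; the reachability and Parikh conditions are already satisfied at the end of $s$, and the lifted suffix inherits infinitely many visits to $F$ from $t$, so $w\beta \in L$.

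The only nontrivial point is the reverse inclusion: we must be able to freely choose counter values along the suffix without endangering acceptance. This is precisely why reachability-regular PA are more tractable than reset PA here, since after the single accepting hit the counters become irrelevant, so any lift of the Büchi run works. Once this is in place, the rest of the argument is the standard Büchi decomposition together with closure under finite union.
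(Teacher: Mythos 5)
Your proposal is correct and follows essentially the same route as the paper: for each $f \in F$ you form the finite-word PA $\Amc_f$ with sole accepting state $f$ and the Büchi automaton $\Bmc_f$ started at $f$ with counter labels erased, prove $L = \bigcup_{f} L(\Amc_f) \cdot L_\omega(\Bmc_f)$ by splitting the run at an accepting hit, and then apply Büchi's theorem together with closure of Parikh-recognizable languages under concatenation. The paper's proof is the same decomposition, including the observation that the lifted suffix run may carry arbitrary counter vectors since the Parikh condition is already discharged.
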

\begin{proof}
    Let $\Amc = (Q, \Sigma, q_0, \Delta, F, C)$ be a reachability-regular~PA. The intuition is as follows. a reachability-regular~PA just needs to verify the counters a single time. Hence, we can recognize the prefixes of infinite words $\alpha \in B_\omega(\Amc)$ that generate the accepting hit with a~PA. Further checking that an accepting state is seen infinitely often can be done with a Büchi automaton.

    Fix $f \in F$ and let $\Amc_f = (Q, \Sigma, q_0, \Delta, \{f\}, C)$ be the~PA that is, syntactically equal to $\Amc$ with the only difference that $f$ is the only accepting state. Similarly, let $\Bmc_f = (Q, \Sigma, f, \{(p,a,q) \mid (p,a,\vbf, q) \in \Delta\}, F)$ be the Büchi automaton obtained from $\Amc$ by setting~$f$ as the initial state and the forgetting the vector labels. 
    
    We claim that $RR_\omega(\Amc) = \bigcup_{f \in F} L(\Amc_f) \cdot L_\omega(\Bmc_f)$.

	To show $RR_\omega(\Amc) \subseteq \bigcup_{f \in F} L(\Amc_f) \cdot L_\omega(\Bmc_f)$, let $\alpha \in B_\omega(\Amc)$ with accepting run $r = r_1 r_2 r_3 \dots$ where $r_i = (p_{i-1}, \alpha_i, \vbf_i, p_i)$. Let $k$ be arbitrary such that there is an accepting hit in $r_k$ (such a~$k$ exists by definition) and consider the prefix $\alpha[1,k]$. Obviously $r_1 \dots r_k$ is an accepting run of $\Amc_{p_k}$ on $\alpha[1,k]$.
    Furthermore, there are infinitely many $j$ such that $p_j \in F$ by definition. In particular, there are also infinitely many $j \geq k$ with this property. Let $r'_i = (p_{i-1}, \alpha_i, p_i)$ for all $i > k$.  
    Then $r'_{k+1} r'_{k+2} \dots$ is an accepting run of $\Bmc_{p_k}$ on $\alpha_{k+1} \alpha_{k+2}\dots$ (recall that $p_k$ is the initial state of $\Bmc_{p_k}$). Hence we have $\alpha[1,k] \in L(\Amc_{p_k})$ and $\alpha_{k+1} \alpha_{k+2} \dots \in L_\omega(\Bmc_{p_k})$.

	To show $\bigcup_{f \in F} L(\Amc_f) \cdot L_\omega(\Bmc_f) \subseteq RR_\omega(\Amc)$, let $w \in L(\Amc_f)$ and $\beta \in L_\omega(\Bmc_f)$ for some $f \in F$. We show $w\beta \in B_\omega(\Amc)$.
    Let $s = s_1 \dots s_n$ be an accepting run of $\Amc_f$ on $w$, which ends in the accepting state $f$ with $\rho(s) \in C$ by definition.
    Furthermore, let $r = r_1 r_2 r_3 \dots$ be an accepting run of $\Bmc^D_f$ on $\beta$ which starts in the accepting state $f$ by definition. It is now easily verified that $sr'$ with $r' = r'_1r'_2r'_3\dots$ where $r'_i = (p_{i-1}, \alpha_i, \vbf_i, p_i)$ (for an arbitrary $\vbf_i$ such that $r'_i \in \Delta)$ is an accepting run of $\Amc$ on $w\beta$, as there is an accepting hit in $s_n$, and the (infinitely many) visits of an accepting state in $r$ translate one-to-one, hence $w\beta \in B_\omega(\Amc)$.
\end{proof}

As shown in \Cref{lem:charBuchiPA}, the class of Büchi~PA recognizable $\omega$-languages is equivalent to the class of $\omega$-languages of the form $\bigcup_i U_i V_i^\omega$ where $U_i$ and $V_i$ are Parikh recognizable, but the~PA for $V_i$ is restricted in such a way that the initial state is the only accepting state and the set is a homogeneous linear set.
Observe that for every regular language $L$ there is a Büchi automaton $\Amc$ where the initial state is the only accepting state with $L_\omega(\Amc) = L^\omega$ (see \eg, \cite[Lemma 1.2]{thomasinfinite}). Hence, $\LPAReg$ is a subset of the class of Büchi~PA recognizable $\omega$-languages. This inclusion is also strict, as witnessed by the Büchi~PA in \Cref{ex} which has the mentioned property.
%Observe that the~PA in \Cref{ex} has exactly this form and recognizes a non-regular language, which means that this restriction still captures a strict superset of the class of regular languages. Hence we obtain the following corollary as a consequence of \Cref{thm:LimitEqualsReach}.

%As a corollary we get the following strict inclusion. \textcolor{red}{todo. Argue why the corollary is true.}
\begin{cor}
The class $\LPAReg$ is a strict subclass of the class of Büchi~PA recognizable $\omega$-languages.
\end{cor}

We finish this section by observing that reachability~PA capture a subclass of $\LPAReg$ where, due to completeness, all $V_i = \Sigma$.
\begin{observation}
\label{obs:reachability}
    The following are equivalent for all $\omega$-languages $L \subseteq \Sigma^\omega$.
    \begin{enumerate}[(1)]
        \item $L$ is of the form $\bigcup_i U_i \Sigma^\omega$ where $U_i \subseteq \Sigma^*$ is Parikh recognizable.
        \item $L$ is reachability~PA recognizable.
    \end{enumerate}
\end{observation}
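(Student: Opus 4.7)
The plan is to prove both directions by direct constructions that crucially exploit the completeness assumption on reachability PA: once the reachability condition has been met at some position, any continuation of the input can be consumed by the automaton.

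For the implication $(1) \Rightarrow (2)$, I would first invoke closure of reachability PA under union \cite{infiniteZimmermann} to reduce to the single case $L = U \Sigma^\omega$ for a Parikh-recognizable~$U$. Given a PA $\Amc = (Q, \Sigma, q_0, \Delta, F, C)$ with $L(\Amc) = U$, construct a reachability PA $\Amc'$ by adding a fresh non-accepting sink state $s$ with self-loops $(s, a, \0, s)$ for every $a \in \Sigma$, adding transitions $(f, a, \0, s)$ for every $f \in F$ and every $a \in \Sigma$, and completing the automaton by routing every otherwise-missing pair $(p,a)$ to $s$ via $(p, a, \0, s)$. Keep $C$ unchanged. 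Reading any prefix $w \in U$ produces an accepting hit via the original run of $\Amc$; by completeness the remaining suffix can be consumed in the sink, which contributes only $\0$ and hence does not disturb the Parikh condition.

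For the implication $(2) \Rightarrow (1)$, given a complete reachability PA $\Amc = (Q, \Sigma, q_0, \Delta, F, C)$, for each $f \in F$ let $\Amc_f = (Q, \Sigma, q_0, \Delta, \{f\}, C)$ be the PA on finite words obtained by restricting the accepting set to $\{f\}$. Each $L(\Amc_f)$ is Parikh-recognizable, and I would establish
\[
R_\omega(\Amc) \;=\; \bigcup_{f \in F} L(\Amc_f) \cdot \Sigma^\omega.
\]
The forward inclusion is immediate: an accepting run on $\alpha$ has a hit at some position $k$ in a state $p_k \in F$, so $\alpha[1,k] \in L(\Amc_{p_k})$ and the tail lies in $\Sigma^\omega$. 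For the backward inclusion, given $w \in L(\Amc_f)$ and $\beta \in \Sigma^\omega$, an accepting run of $\Amc_f$ on $w$ ends in $f$ with $\rho \in C$; by completeness of $\Amc$ this partial run extends to an infinite run on $w\beta$, which is accepting because of the hit at position $|w|$.

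The only mild subtlety is ensuring that all added transitions in the $(1)\Rightarrow(2)$ construction carry the zero vector so that $C$ need not be altered, together with the corner case $\varepsilon \in U$, which can be handled either by introducing a fresh non-accepting initial state before the construction or by observing that any such $U$ contributes the entire set $\Sigma^\omega$, which is trivially reachability PA-recognizable. Beyond this, the argument is routine from the definitions and the completeness assumption.
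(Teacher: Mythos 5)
Your overall strategy is sound and is exactly what the paper intends (it states this as an observation and leaves the proof implicit): a sink construction for $(1)\Rightarrow(2)$ and the decomposition $R_\omega(\Amc)=\bigcup_{f\in F}L(\Amc_f)\cdot\Sigma^\omega$ for $(2)\Rightarrow(1)$. However, the displayed equality in your $(2)\Rightarrow(1)$ direction is false as stated, because of the very $\varepsilon$ corner case you were careful about in the other direction. The reachability condition demands an accepting hit at a position $i\geq 1$, whereas $\varepsilon\in L(\Amc_f)$ whenever $q_0=f\in F$ and $\0\in C$, in which case $L(\Amc_f)\cdot\Sigma^\omega=\Sigma^\omega$. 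Concretely, take $\Amc$ with the single (accepting, initial) state $q_0$, self-loops $(q_0,a,1,q_0)$ for all $a\in\Sigma$, and $C=\{0\}$: then $L(\Amc_{q_0})=\{\varepsilon\}$, so your right-hand side equals $\Sigma^\omega$, yet $R_\omega(\Amc)=\varnothing$ since every prefix of length $i\geq1$ has extended Parikh image $i\notin C$. So the backward inclusion of your equality fails. The repair is routine: use $\bigcup_{f\in F}\bigl(L(\Amc_f)\setminus\{\varepsilon\}\bigr)\cdot\Sigma^\omega$ instead (removing $\varepsilon$ preserves Parikh-recognizability, e.g.\ via a fresh non-accepting copy of the initial state); the forward inclusion already produces only nonempty prefixes because $k\geq1$, and the backward inclusion then goes through by completeness exactly as you argue.

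A much smaller remark on $(1)\Rightarrow(2)$: of your two proposed fixes for $\varepsilon\in U$, only the second is complete on its own. Prepending a fresh non-accepting initial state replaces $U$ by $U\setminus\{\varepsilon\}$ and hence yields $(U\setminus\{\varepsilon\})\cdot\Sigma^\omega$, which can be a proper subset of $U\cdot\Sigma^\omega=\Sigma^\omega$; one would additionally have to union with a reachability PA for $\Sigma^\omega$. Since you also offer the correct observation that $\varepsilon\in U$ forces $U\cdot\Sigma^\omega=\Sigma^\omega$, this is only a matter of phrasing. Everything else --- closure under union, the fact that the non-accepting absorbing sink creates no spurious accepting hits, and the use of completeness to extend the finite accepting run --- is correct.
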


\subsection{Characterization of \LPAPA and \LRegPA}
In this section we give a characterization of $\LPAPA$ and a characterization of $\LRegPA$. 
As mentioned in the beginning of this section, %
%Grobler et al.~\cite{infiniteOurs} have shown that $\LPAPA \subsetneq \LSPBA$, \ie, 
reset~PA are too strong to capture this class. However, restrictions of strong reset~PA are good candidates to capture $\LPAPA$ as well as~$\LRegPA$. 
In fact we show that it is sufficient to restrict the appearances of accepting states to capture $\LPAPA$, as specified by the first theorem of this section. Further restricting the vectors yields a model capturing $\LRegPA$, as specified in the second theorem of this section. Recall that the condensation of $\Amc$ is the DAG of strong components of the underlying graph of~$\Amc$. 

\begin{theorem}
\label{thm:lpapa}
    The following are equivalent for all $\omega$-languages $L \subseteq \Sigma^\omega$.
    \begin{enumerate}[(1)]
        \item $L \in \LPAPA$.
        \item $L$ is recognized by a strong reset~PA $\Amc$ with the property that accepting states appear only in the leaves of the condensation of $\Amc$, and there is at most one accepting state per leaf.
    \end{enumerate}
\end{theorem}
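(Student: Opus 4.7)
The plan is to prove both implications directly, building on \autoref{lem:normalized}, \autoref{lem:concatenation}, and \autoref{cor:union}.

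For the direction (1) $\Rightarrow$ (2), given $L = \bigcup_i U_i V_i^\omega$ with $U_i, V_i$ Parikh-recognizable, I first build a strong reset PA for each factor $V_i^\omega$ by taking a normalized PA $\Bmc_i$ for $V_i$ via \autoref{lem:normalized}. The remark after that lemma gives $SR_\omega(\Bmc_i) = V_i^\omega$; moreover $\Bmc_i$ has its initial state $q_0^i$ as the unique accepting state and $SCC(\Bmc_i)=\{Q_i\}$, so the accepting state trivially lies in the unique leaf of the condensation. Next, I apply the left-concatenation construction of \autoref{lem:concatenation} to obtain a reset PA $\Cmc_i$ for $U_i V_i^\omega$. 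The essential structural observation is that this construction only adds one-way edges from the $U_i$-part into $q_0^i$, and the accepting set of $\Cmc_i$ is exactly $\{q_0^i\}$. Hence the SCC of $\Cmc_i$ containing $q_0^i$ still coincides with $Q_i$, has no outgoing edges in the condensation, is the only SCC containing an accepting state, and contains exactly one such state.

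Finally, I take the disjoint union of the $\Cmc_i$'s via the standard construction behind \autoref{cor:union}: a fresh non-accepting initial state branches into the $\Cmc_i$'s, and the semi-linear sets are made disjoint using an additional marking coordinate. The fresh initial state has no incoming edges, so its SCC is a singleton with outgoing edges only, hence not a leaf; no other SCC is affected, and the leaves containing the accepting states of the individual $\Cmc_i$'s are preserved together with their single accepting state. This yields the required reset PA.

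For the direction (2) $\Rightarrow$ (1), let $\Amc = (Q, \Sigma, q_0, \Delta, F, C)$ satisfy the structural property. The crucial consequence is that no run can leave a leaf SCC once it has entered, so every accepting hit of any accepting run happens in a leaf, and by uniqueness all such hits occur at the same state $f \in F$. For each $f \in F$, let $U_f \subseteq \Sigma^*$ be the set of words $w$ such that $\Amc$ has a run from $q_0$ to $f$ on $w$ with Parikh image in $C$ and with no accepting state visited at any strictly interior position, and let $V_f \subseteq \Sigma^+$ be the set of non-empty words $w$ such that $\Amc$ has a run from $f$ to $f$ on $w$ with Parikh image in $C$ and no accepting state visited at any strictly interior position. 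Both $U_f$ and $V_f$ are Parikh-recognizable: for $U_f$, construct a PA from $\Amc$ by deleting all outgoing transitions of states in $F$, declaring $f$ to be the sole accepting state, and keeping the semi-linear set $C$; for $V_f$, perform the same deletion but additionally split $f$ into a fresh non-accepting initial copy inheriting $f$'s outgoing transitions and the original $f$ as the sole accepting state with $f$'s incoming transitions. Then the equality $SR_\omega(\Amc) = \bigcup_{f \in F} U_f V_f^\omega$ holds: the $\supseteq$ inclusion follows by concatenating a prefix run witnessing $U_f$ with infinitely many loop runs witnessing $V_f$, so that accepting hits land precisely at the segment boundaries and each segment's counter sum lies in $C$; the $\subseteq$ inclusion follows by factoring a given accepting run at its hit positions $k_1<k_2<\dots$, observing that all $p_{k_i}$ equal the same $f$ by the leaf property, so that the absence of further hits inside each segment is automatic.

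The main obstacle I anticipate is bookkeeping in the forward direction rather than any combinatorial subtlety: one has to verify carefully that the union construction does not merge leaf SCCs of different disjuncts and that the cross-edges introduced in the concatenation construction do not create back-edges spoiling the leaf property. Both reduce to the elementary observation that all added edges are directed from the PA-part into the fixed initial state of the reset-PA part and never in the reverse direction.
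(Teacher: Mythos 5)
Your proposal is correct and follows essentially the same route as the paper: normalization plus the left-concatenation and union constructions for $(1)\Rightarrow(2)$, and for $(2)\Rightarrow(1)$ the decomposition $\bigcup_{f\in F} U_f V_f^\omega$ obtained by cutting an accepting run at its (necessarily all identical) accepting positions. The only difference is cosmetic: to forbid interior visits to accepting states you delete their outgoing transitions, whereas the paper adds an extra counter that must equal $1$ via the set $C\cdot\{1\}$; both gadgets yield the same languages $U_f$ and $V_f$.
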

\begin{proof}
To show the implication $(1) \Rightarrow (2)$, let $\Amc_i = (Q_i, \Sigma, q_i, \Delta_i, F_i)$ for $i \in \{1,2\}$ be~PA and let $L = L(\Amc_1) \cdot L(\Amc_2)^\omega$. By \Cref{lem:normalized} we may assume that $\Amc_2$ is normalized (recall that by \Cref{cor:omegaclosure} this implies $SR_\omega(\Amc_2) = L(\Amc_2)^\omega$) and hence write $L = L(\Amc_1) \cdot SR_\omega(\Amc_2)$. 
As pointed out in the proof of \Cref{lem:concatenation}, we can construct a reset~PA $\Amc$ that recognizes~$L$ such that only the accepting states of $\Amc_2$ remain accepting in $\Amc$. As $\Amc_2$ is normalized, this means that only $q_2$ is accepting in~$\Amc$. Hence~$\Amc$ satisfies the property of the theorem.
Finally observe that the construction in \Cref{cor:union} maintains this property, implying that the construction presented in \Cref{lem:papaSubsetReset} always yields a reset~PA of the desired form.

\medskip
To show the implication $(2) \Rightarrow (1)$, let $\Amc = (Q, \Sigma, q_0, \Delta, F, C)$ be a strong reset~PA of dimension $d$ with the mentioned property.
Let $f \in F$ and let $\Amc_f = (Q, \Sigma, q_0, \Delta_f, \{f\}, C \cdot \{1\})$ with $\Delta_{f} = \{p,a,\vbf \cdot 0,q) \mid (p,a,\vbf, q) \in \Delta, q \neq f\} \cup \{(p, a, \vbf \cdot 1, f) \mid (p, a, \vbf, f) \in \Delta\}$ be the~PA of dimension $d+1$ obtained from $\Amc$ by setting $f$ as the only accepting state with an additional counter that is 0 at every transition except the incoming transitions of $f$, where the counter is set to 1. 
Additionally all vectors in $C$ are concatenated with $1$. Similarly, let $\Amc_{f,f} = (Q, \Sigma, f, \Delta_{f}, \{f\}, C \cdot \{1\})$ be the~PA of dimension $d+1$ obtained from $\Amc_f$ by setting~$f$ as the initial state. 
% and only accepting state, where $\Delta_f$ is defined as for $\Amc_f$. 
% We claim $SR_\omega(\Amc) = \bigcup_{f \in F} L(\Amc_{f}) \cdot L(\Amc_{f,f})^\omega$.

To show $SR_\omega(\Amc) \subseteq \bigcup_{f \in F} L(\Amc_{f}) \cdot L(\Amc_{f,f})^\omega$, let $\alpha \in S_\omega(\Amc)$ with accepting run $r = r_1 r_2 r_3 \dots$ where $r_i = (p_{i-1}, \alpha_i, \vbf_i, p_i)$. Let $k_1 < k_2 < \dots$ be the positions of accepting states in $r$, \ie, $p_{k_i} \in F$ for all $i \geq 1$.
First observe that the property in the theorem implies $p_{k_i} = p_{k_j}$ for all $i, j \geq 1$, \ie, no two distinct accepting states appear in~$r$, since accepting states appear only in different leaves of the condensation of $\Amc$. 
%However, it is not possible to reach $p_{k_i}$ from $p_{k_j}$ (or vice versa) by the definition of SCC and leaf. This is a contradiction, hence $p_{k_i} = p_{k_j}$ for all $i,j \geq 1$.

For $j \geq 1$ define $r'_j = (p_{j-1}, \alpha_j, \vbf_j \cdot 0, p_j)$ if $j \neq k_i$ for all $i \geq 1$, and $r'_j = (p_{j-1}, \alpha_j, \vbf_j \cdot 1, p_j)$ if $j = k_i$ for some $i \geq 1$, \ie, we replace every transition $r_j$ by the matching transition in~$\Delta_f$.

Now consider the partial run $r_1 \dots r_{k_1}$ and observe that $p_i \neq p_{k_1}$ for all $i < k_1$, and $\rho(r_1 \dots r_{k_1}) \in C$ by the definition of strong reset~PA. Hence $r' = r'_1 \dots r'_{k_1}$ is an accepting run of~$\Amc_{p_{k_1}}$ on $\alpha[1, k_1]$, as only a single accepting state appears in $r'$, the newly introduced counter has a value of $1$ when entering $p_{k_1}$, \ie, $\rho(r') \in C \cdot \{1\}$, hence $\alpha[1, k_1] \in L(\Amc_{p_{k_1}})$.

Finally, we show that $\alpha[k_i + 1, k_{i+1}] \in L(\Amc_{p_{k_1},p_{k_1}})$.
Observe that $r'_{k_i + 1} \dots r'_{k_{i+1}}$ is an accepting run of $\Amc_{p_{k_1},p_{k_1}}$ on $\alpha[k_i + 1, k_{i+1}]$: we have $\rho(r_{k_i + 1} \dots r_{k_{i+1}}) = \vbf \in C$ by definition. Again, as only a single accepting state appears in $r'_{k_i + 1} \dots r'_{k_{i+1}}$, we have $\rho(r'_{k_i + 1} \dots r'_{k_{i+1}}) = \vbf \cdot 1 \in C \cdot \{1\}$, and hence $\alpha[k_i + 1, k_{i+1}] \in L(\Amc_{p_{k_1},p_{k_1}})$. We conclude $\alpha \in L(\Amc_{p_{k_1}}) \cdot L(\Amc_{p_{k_1}, p_{k_1}})^\omega$.

To show $\bigcup_{f \in F} L(\Amc_{f}) \cdot L(\Amc_{f,f})^\omega \subseteq SR_\omega(\Amc)$, let $u \in L(\Amc_{f})$, and $v_1, v_2, \dots \in L(\Amc_{f,f})$ for some $f \in F$. We show that $uv_1v_2 \dots \in SR_\omega(\Amc)$.

First let $u = u_1 \dots u_n$ and $r' = r'_1 \dots r'_n$ with $r'_i = (p_{i-1}, u_i, \vbf_i \cdot c_i, p_i)$, where $c_i \in \{0,1\}$, be an accepting run of $\Amc_{f}$ on $u$. 
Observe that $\rho(r') \in C \cdot \{1\}$, hence $\sum_{i \leq n} c_i = 1$, \ie,~$p_n$ is the only occurrence of an accepting state in $r'$ (if there was another, say $p_j$, then $c_j = 1$ by the choice of $\Delta_f$, hence $\sum_{i \leq n} c_i > 1$, a contradiction).
For all  $1 \leq i \leq n$ let $r_i = (p_{i-1}, u_i, \vbf_i, p_i)$. Then $r_1 \dots r_n$ is a partial run of $\Amc$ on $w$ with $\rho(r_1 \dots r_n) \in C$ and $p_n = f$.

Similarly, no run of $\Amc_{f,f}$ on any $v_i$ visits an accepting state before reading the last symbol, hence we continue the run from $r_n$ on $v_1, v_2, \dots$ using the same argument. Hence $uv_1v_2 \dots \in SR_\omega(\Amc)$, concluding the proof.
\end{proof}

As a side product of the proof of \Cref{thm:lpapa} we get the following corollary, which is in general not true for arbitrary reset~PA.
\begin{cor}
Let $\Amc = (Q, \Sigma, q_0, \Delta, F, C)$ be a strong reset~PA with the property that accepting states appear only in the leaves of the condensation of $\Amc$, and there is at most one accepting state per leaf. 
Then we have $SR_\omega(\Amc) = \bigcup_{f \in F} S_\omega(Q, \Sigma, q_0, \Delta, \{f\}, C)$.
\end{cor}

By even further restricting the power of strong reset~PA, we get the following characterization of~$\LRegPA$.

\begin{theorem}
\label{thm:lregpa}
    The following are equivalent for all $\omega$-languages $L \subseteq \Sigma^\omega$.
    \begin{enumerate}[(1)]
        \item $L$ is of the form $\bigcup_i U_i V_i^\omega$, where $U_i \subseteq \Sigma^*$ is regular and $V_i \subseteq \Sigma^*$ is Parikh recognizable.
        \item $L$ is recognized by a strong reset~PA $\Amc$ with the following properties.
        \begin{enumerate}[(a)]
            \item At most one state $q$ per leaf of the condensation of $\Amc$ may have incoming transitions from outside the leaf, this state $q$ is the only accepting state in the leaf, and there are no accepting states in non-leaves.
            \item only transitions connecting states in a leaf may be labeled with a non-zero vector. 
        \end{enumerate}
    \end{enumerate}
\end{theorem}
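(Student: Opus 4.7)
The plan is to adapt the proof of \autoref{thm:lpapa}, exploiting property~(b) to force all counters to stay at zero outside of leaves, so that the prefix languages become regular rather than merely Parikh-recognizable.

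For the direction $(1) \Rightarrow (2)$, I would take an NFA $\Amc_i^1$ for the regular language $U_i$ and a normalized PA $\Amc_i^2$ for the Parikh-recognizable language $V_i$ (via \autoref{lem:normalized}), and then apply the concatenation construction of \autoref{lem:concatenation}. In the resulting strong reset PA, every transition of the $\Amc_i^1$-part carries a zero vector (since $\Amc_i^1$ has no counters), the $\Amc_i^2$-part forms a single SCC which is a leaf of the condensation (its states are all mutually reachable and no transition exits this copy), the unique accepting state is the initial state of $\Amc_i^2$ and is also the unique entry point of the leaf from outside, and only transitions internal to this leaf carry non-zero vectors. Properties (a) and (b) thus hold for each summand, and the disjoint-union construction from \autoref{cor:union} preserves them since the fresh initial state lies outside every leaf and its new outgoing edges, copied from the former initial states, all carry zero vectors.

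For the direction $(2) \Rightarrow (1)$, fix a strong reset PA $\Amc$ with the stated properties. For each $f \in F$, let $L_f$ be the leaf containing $f$; by (a), the only state of $L_f$ with incoming transitions from outside is $f$ itself, and leaves cannot be left. Hence any accepting run visits exactly one distinct accepting state $f$ and visits it infinitely often, and its prefix up to the first hit of $f$ traverses only non-leaf SCCs. By (b), this prefix uses only zero-vector transitions, so its Parikh image equals $\0$; in particular, the strong reset condition forces $\0 \in C$ whenever $f$ contributes to $L$. I would then define $U_f$ as the language recognized by the NFA obtained from $\Amc$ by discarding all counter labels, removing the intra-leaf transitions except those entering $f$, and declaring $f$ the unique accepting state, and $V_f$ as the language of the PA with state set $L_f$, initial and unique accepting state $f$, the intra-leaf transitions of $\Amc$, and semi-linear set $C$. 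Then $U_f$ is regular and $V_f$ is Parikh-recognizable, and the decomposition $L = \bigcup_{f \in F,\, \0 \in C} U_f \cdot V_f^\omega$ follows from the strong reset condition exactly as in the proof of \autoref{thm:lpapa}.

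The main obstacle I anticipate is the bookkeeping in $(1) \Rightarrow (2)$: one must confirm that the union construction neither merges SCCs from different summands (which would risk violating (a)) nor places the fresh initial state inside a leaf. Since the union construction is based on disjoint copies with a fresh, non-accepting initial state whose only outgoing edges are zero-labeled copies of the former initial edges, these checks should reduce to routine verification, but they are the point where a sloppy construction could break either invariant.
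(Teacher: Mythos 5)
Your direction $(1)\Rightarrow(2)$ is the paper's own argument: turn the NFA for $U_i$ into a zero-labelled PA with $C=\{0\}$, normalize the PA for $V_i$ via \autoref{lem:normalized}, concatenate via \autoref{lem:concatenation} and take unions via \autoref{cor:union}, checking that the copy of the $V_i$-automaton becomes the unique leaf with its initial-and-accepting state as sole entry point and that all vectors outside it are zero. The paper is no more detailed than you are about why the union construction preserves (a) and (b), so I take that part as fine.

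The gap is in $(2)\Rightarrow(1)$, in your explicit definition of $V_f$. You take the PA on the leaf $L_f$ with initial and unique accepting state $f$, the intra-leaf transitions, and the semi-linear set $C$ \emph{unchanged}. A word $v$ then belongs to $V_f$ as soon as some run from $f$ to $f$ has \emph{total} Parikh image in $C$, even if that run passes through $f$ in the middle with an intermediate image outside $C$. For such $v$ the inclusion $U_f\cdot V_f^\omega\subseteq SR_\omega(\Amc)$ fails, because in $\Amc$ every intermediate visit of $f$ triggers a reset at which the Parikh condition must already hold. Concretely, take $d=1$, a leaf $\{f,p\}$ with transitions $(f,c,1,f)$, $(f,a,1,p)$, $(p,b,0,f)$, and $C=\{0\}\cup\{2\}$: then $cab\in V_f$ via the run of total image $2$, yet $u(cab)^\omega\notin SR_\omega(\Amc)$ for any $u\in U_f$, since after reading $c$ the automaton sits in the accepting state $f$ with image $1\notin C$ and no alternative run exists. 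So the set you write down strictly contains $SR_\omega(\Amc)$ and the claimed decomposition is not valid. The missing ingredient is exactly the device from \autoref{thm:lpapa} that you cite but do not carry over: add one extra counter that is $1$ on every transition entering $f$ and $0$ elsewhere, and replace $C$ by $C\cdot\{1\}$, forcing accepting runs of the $V_f$-automaton to visit $f$ exactly once, at the very end (this is the paper's $\Amc_{f,f}$). With that change the rest of your argument — zero vectors before the first entry into a leaf by (b), hence a regular prefix language and $\0\in C$ whenever some $f$ is productive — matches the paper's proof and goes through.
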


Observe that property (a) is a stronger property than the one of \Cref{thm:lpapa}, hence, strong reset~PA with this restriction are at most as powerful as those that characterize~$\LPAPA$. However, as a side product of the proof we get that property (a) is equivalent to the property of \Cref{thm:lpapa}. Hence, property (b) is necessary to sufficiently weaken strong reset~PA such that they capture $\LRegPA$. 
In fact, using the notion of normalization, we can re-use most of the ideas in the proof of \Cref{thm:lpapa}.

\begin{proof}[Proof of \Cref{thm:lregpa}]

We can trivially convert an NFA into an equivalent~PA by labeling every transition with $0$ and choosing $C = \{0\}$, showing the implication $(1) \Rightarrow (2)$. 
Let $\Amc$ be an arbitrary~PA and assume that it is normalized; in particular implying that it is only a single SCC. 
Again, we have $L(\Amc)^\omega = S_\omega(\Amc)$ and the constructions for concatenation and union preserve the properties, hence, we obtain a strong reset~PA of the desired form.

\medskip
To show the implication $(2) \Rightarrow (1)$, let $\Amc = (Q, \Sigma, q_0, \Delta, F, C)$ be a strong reset~PA of dimension $d$ with properties~(a) and (b). Fix $f \in F$ and let 
\[\Bmc_f = (Q_f, \Sigma, q_0, \{(p, a, q) \mid (p, a, \vbf, q) \in \Delta, p,q \in Q_f\}, \{f\})\] with $Q_f = \{q \in Q \mid q \text{ appears in a non-leaf SCC of } C(\Amc)\} \cup \{f\}$
be the NFA obtained from $\Amc$ by removing all leaf states except $f$, and removing all labels from the transitions.
Recycling the automaton from \Cref{thm:lpapa}, let \mbox{$\Amc_{f,f} = (Q, \Sigma, f, \Delta_{f}, \{f\}, C \cdot \{1\})$} with $\Delta_{f} = \{(p,a,\vbf \cdot 0,q) \mid (p,a,\vbf, q) \in \Delta, q \neq f\} \cup \{(p, a, \vbf \cdot 1, f) \mid (p, a, \vbf, f) \in \Delta\}$.
We claim $SR_\omega(\Amc) = \bigcup_{f \in F} L(\Bmc_f) \cdot L(\Amc_{f,f})^\omega$.

To show $SR_\omega(\Amc) \subseteq \bigcup_{f \in F} L(\Bmc_f) \cdot L(\Amc_{f,f})^\omega$, let $\alpha \in SR_\omega(\Amc)$ with accepting run $r = r_1r_2r_3 \dots$ where $r_i = (p_{i-1}, \alpha_i, \vbf_i, p_i)$, and let $k_1< k_2< \dots$ be the positions of the accepting states in $r$, and consider the partial run $r_1 \dots r_{k_1}$ (if $k_1 = 0$, \ie, the initial state is already accepting, then $r_1 \dots r_{k_1}$ is empty). 

By property (a) we have that $p_{k_1}$ is the first state visited in $r$ that is, located in a leaf of~$C(\Amc)$. Hence $r'_1 \dots r'_{k_1}$, where $r'_i = (p_{i-1}, \alpha_i, p_i)$, is an accepting run of $\Bmc_{p_{k_1}}$ on $\alpha[1, k_1]$ (in the case $k_1 = 0$ we define $\alpha[1, k_1] = \varepsilon$).

By the same argument as in the proof of \Cref{thm:lpapa} we have $p_{k_i} = p_{k_j}$ for all $i,j \geq 1$, hence $\alpha[k_i + 1, k_{i+1}] \in L(\Amc_{p_{k_1}, p_{k_1}})$, and hence $\alpha \in L(\Bmc_{p_k}) \cdot L(\Amc_{p_{k_1}, p_{k_1}})^\omega$.

To show $\bigcup_{f \in F} L(\Amc_{f}) \cdot L(\Amc_{f,f})^\omega \subseteq SR_\omega(\Amc)$, let $u \in L(\Bmc_{f})$, and $v_1, v_2, \dots \in L(\Amc_{f,f})$ for some $f \in F$. We show that $uv_1v_2 \dots \in S_\omega(\Amc)$.

First observe that properties (a) and (b) enforce that $\0 \in C$, as the accepting state of a leaf of~$C(\Amc)$ is visited before a transition labeled with a non-zero can be used.
Let $u = u_1 \dots u_n$ and $s_1 \dots s_n$ with $s_i = (p_{i_1}, u_i, p_i)$ be an accepting run of $\Bmc_f$ on $u$. Define $s'_i = (p_{i_1}, u_i, \0, p_i)$ and observe that $s'_1 \dots s'_n$ is a partial run of $\Amc$ with $\rho(s'_1 \dots s'_n) \in C$ and $p_n = f$ by the observation above.
Again we can very similarly continue the run on $v_1, v_2, \dots$ using the same argument. Hence $uv_1v_2 \dots \in SR_\omega(\Amc)$, concluding the proof.
\end{proof}

\subsection{Blind Counter Automata}

Blind counter automata on infinite words were first studied by Fernau and Stiebe \cite{blindcounter}. 
In this section we first recall the definition of blind counter automata as introduced by Fernau and Stiebe \cite{blindcounter}. 
The definition of these automata admits $\varepsilon$-transitions. 
It is easily observed that Büchi~PA with $\varepsilon$-transitions are equivalent to blind counter automata.
Therefore, we extend all models studied in this paper with $\epsilon$-transitions and consider the natural question whether they admit $\varepsilon$-elimination.
We show that almost all models allow $\varepsilon$-elimination, the exception being safety and co-Büchi~PA. 
As it turns out, co-Büchi~PA with $\varepsilon$-transitions are powerful enough to simulate Büchi~PA, and safety~PA with $\varepsilon$-transitions are powerful enough to simulate reset~PA. This is a stark contrast to their variants without $\varepsilon$-transitions, as they do not even recognize all $\omega$-regular languages.
%For the latter two models we observe that $\varepsilon$-transitions allow to encode $\omega$-regular conditions, meaning that such transitions give the models enough power such that they can recognize all $\omega$-regular languages.
%\textcolor{red}{We furthermore extend the models to \emph{multi-PA}, which can have one semi-linear set for each individual state. 
%This is convenient but does not give additional expressive power in all models. TODO: Check if this is still needed.}

%In order to avoid confusion (and sticking to the notation in the literature), we use the terms blind multicounter machines (BMCM), meaning the model operating on finite words introduced in \Cref{chap:finite}, while blind counter automata (BCA) always denotes the model operating on infinite words which we introduce now.
A \emph{blind counter automaton} (BCA) is a tuple $\Mmc = (Q, \Sigma, q_0, \Delta, F)$ where $\Delta \subseteq Q \times (\Sigma \cup \{\varepsilon\}) \times \Zbb^d \times Q$ for some $d \geq 1$ is a finite set. 
We adapt the definition of configuration to infinite words as follows.
%A \emph{configuration} for an infinite word $\alpha = \alpha_1\alpha_2\alpha_3\dots$ of $\Mmc$ is a tuple of the form $c = (p, \alpha_1 \dots \alpha_i, \alpha_{i+1} \alpha_{i+2} \dots, \vbf) \in Q \times \Sigma^* \times \Sigma^\omega \times \Zbb^k$ for some $i \geq 0$. 
A \emph{configuration} of $\Mmc$ is a tuple $c = (q, \alpha, \vbf) \in Q \times \Sigma^\omega \times \Zbb^d$, where $q$ is the current state, $\alpha$ is the suffix of the input word that has not been read yet, and $\vbf$ represents the current counter value.
%Slightly abusing notation, we simply call $c$ configuration for the remainder of this section, as we do not consider BMCM on finite words here.
We say $c$ \emph{derives} into $c'$, written $c \vdash_\Mmc c'$, if $c = (q, a\beta, \vbf)$, $c' = (q', \beta, \vbf')$ and $(q, a, \ubf, q') \in \Delta$ with $\vbf' = \vbf + \ubf$; or if $c = (q, \beta, \vbf)$, $c' = (q', \beta, \vbf')$ and $(q, \varepsilon, \ubf, q') \in \Delta$ with $\vbf' = \vbf + \ubf$.
%
%The configuration~$c$ \emph{derives} into a configuration $c'$, written $c \vdash c'$, if either $c' = (q, a \beta \dots, \vbf + \ubf)$ and $(p, \alpha_{i+1}, \ubf, q) \in \Delta$, or $c' = (q, \alpha_1 \dots \alpha_i, \alpha_{i+1}\alpha_{i+2} \dots, \vbf + \ubf)$ and $(p,\varepsilon, \ubf,q) \in \Delta$. 
Wa say $\Mmc$~\emph{accepts} an infinite word $\alpha$ if there is an infinite sequence of configuration derivations $c_1 \vdash c_2 \vdash c_3 \vdash \dots$ with $c_1 = (q_0, \varepsilon, \alpha, \0)$ such that for infinitely many $i$ we have $c_i = (p_i, \alpha_{j+1} \alpha_{j+2} \dots, \0)$ with $p_i \in F$ and for all $j \geq 1$ there is a configuration of the form $(p, \alpha_{j+1} \alpha_{j+2} \dots, \vbf)$ for some $p \in Q$ and $\vbf \in \Zbb^k$ in the sequence.
That is, an infinite word is accepted if we infinitely often visit an accepting state when the counters are $\0$, and every symbol of $\alpha$ is read at some point.
We define the $\omega$-language recognized by $\Mmc$ as $L_\omega(\Mmc) = \{\alpha \in \Sigma^\omega \mid \Mmc \text{ accepts } \alpha\}$.

Parikh automata naturally generalize to Parikh automata with \mbox{$\epsilon$-transitions}.
An \mbox{$\epsilon$-PA} is a tuple $\Amc = (Q, \Sigma, q_0, \Delta, \Emc, F, C)$ where $\Emc \subseteq Q \times \{\varepsilon\} \times \Nbb^d \times Q$ is a finite set of \emph{labeled $\varepsilon$-transitions}, and all other entries are defined as for~PA. 
A run of $\Amc$ on an infinite word $\alpha_1\alpha_2\alpha_3 \dots$ is an infinite sequence of transitions $r \in (\Emc^* \Delta)^\omega$, say $r = r_1r_2r_3 \dots$ with $r_i = (p_{i-1}, \gamma_i, \vbf_i, p_i)$ such that $p_0 = q_0$, and $\gamma_i = \varepsilon$ if $r_i \in \Emc$, and $\gamma_i = \alpha_j$ if $r_i \in \Delta$ is the $j$-th occurrence of a (non-$\varepsilon$) transition in $r$. 
The acceptance conditions of the models translate to runs of $\varepsilon$-PA in the obvious way. We use terms like $\varepsilon$-safety~PA, $\varepsilon$-reachability~PA, etc, to denote an $\varepsilon$-PA with the respective acceptance condition.

Note that we can treat every~PA as an $\epsilon$-PA, that is, a~PA $\Amc = (Q,\Sigma, q_0, \Delta, F, C)$ is equivalent to the $\varepsilon$-PA $\Amc' = (Q, \Sigma, q_0, \Delta, \varnothing, F, C)$. 

We start with the following simple observation.

\begin{lemma}
\label{lem:kcountertoPPBA} 
BCA and $\varepsilon$-Büchi~PA are equivalent. 
\end{lemma}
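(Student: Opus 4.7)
The plan is to prove both directions. The first direction, CM to $\varepsilon$-Büchi PA, uses the classical positive/negative decomposition, doubling the dimension from $d$ to $2d$. For a CM transition with label $\ubf = \ubf^+ - \ubf^- \in \Zbb^d$ (where $\ubf^+, \ubf^- \in \Nbb^d$), the $\varepsilon$-Büchi PA uses label $\ubf^+ \cdot \ubf^- \in \Nbb^{2d}$. The semi-linear set is the diagonal $C = \{(z_1, \ldots, z_d, z_1, \ldots, z_d) \mid z_i \in \Nbb\}$, which encodes that the positive and negative parts agree, i.e., the net integer counters are $\0$. Accepting states are preserved. An accepting run of $\Mmc$ then translates step-by-step into an accepting Büchi run of the PA and vice versa; the Büchi condition together with $C$ captures exactly ``accepting state with counters $\0$''. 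Note that the ``every symbol is read'' clause for CM and the fact that PA runs extend over all of $\alpha$ align naturally.

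For the reverse direction, given an $\varepsilon$-Büchi PA $\Amc$ with semi-linear $C = \bigcup_j C_j$, I would use closure of CM under union (trivial via a fresh initial state with $\varepsilon$-transitions) together with $L_\omega(\Amc) = \bigcup_j L_\omega(\Amc_j)$, where $\Amc_j$ is $\Amc$ with $C$ replaced by the linear $C_j = \{b_0^{(j)} + \sum_k b_k^{(j)} z_k \mid z_k \in \Nbb\}$, and handle each $\Amc_j$ separately. The CM $\Mmc_j$ mirrors the transitions of $\Amc$ (possible because $\Nbb^d \subseteq \Zbb^d$) and has two phases: before and after the first verification. At any visit to an accepting state of $\Amc$, $\Mmc_j$ may nondeterministically enter a verification gadget via an $\varepsilon$-transition. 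In phase one the gadget first subtracts $b_0^{(j)}$ via a single $\varepsilon$-transition and then any non-negative combination of the $b_k^{(j)}$ via $\varepsilon$-self-loops, ending in a designated accepting state $q_{\mathsf{acc}}$; from $q_{\mathsf{acc}}$ an $\varepsilon$-transition returns to the current state in a phase-two copy of the state set, where the gadget is the same but omits the subtraction of $b_0^{(j)}$. $\Mmc_j$ accepts iff $q_{\mathsf{acc}}$ is reached with counters $\0$ infinitely often.

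The main obstacle is to verify $L_\omega(\Mmc_j) = L_\omega(\Amc_j)$. The inclusion $L_\omega(\Mmc_j) \subseteq L_\omega(\Amc_j)$ follows because every successful verification certifies that the Parikh image accumulated since the last $q_{\mathsf{acc}}$-visit (or the start, in the first phase) lies in the appropriate generator span; telescoping shows that the cumulative Parikh image of $\Amc$ at each verification lies in $C_j$, providing the required infinitely many Büchi hits. For the reverse inclusion, given an accepting run of $\Amc_j$ on $\alpha$, write the Parikh images at its accepting hits as $\vbf_m = b_0^{(j)} + \sum_k b_k^{(j)} z_k^{(m)}$ and apply Dickson's lemma to the coefficient tuples $(z_k^{(m)})_k \in \Nbb^{\ell_j}$ to extract a subsequence $m_1 < m_2 < \ldots$ along which every component is non-decreasing. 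Then each difference $\vbf_{m_{t+1}} - \vbf_{m_t}$ lies in the base-free set $\{\sum_k b_k^{(j)} z_k \mid z_k \in \Nbb\}$, so $\Mmc_j$ can successfully verify along this subsequence and thereby produce infinitely many $q_{\mathsf{acc}}$-visits with zero counters. The ``every symbol is read'' requirement of CM acceptance is automatic since verification uses only $\varepsilon$-transitions, while the main simulation consumes $\alpha$ in full.
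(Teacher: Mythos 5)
Your proof is correct and follows essentially the same route as the paper: the positive/negative counter doubling with the diagonal semi-linear set for the forward direction, and, for the converse, reduction to a linear $C$ via the infinite pigeonhole principle and union-closure of CM, followed by subtracting the base and period vectors through $\varepsilon$-transitions attached to accepting states. The only cosmetic differences are that the paper subtracts $b_0$ once at a fresh initial state rather than in a first-phase gadget, and that it leaves implicit the Dickson's-lemma argument for completeness that you spell out.
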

\begin{proof}
We first show that for every BCA $\Mmc$ there is an equivalent $\varepsilon$-Büchi~PA $\Amc$.
Let $\Mmc = (Q, \Sigma, q_0, \Delta, F)$ be a BCA of dimension $d$. For a vector \mbox{$(x_1, \dots, x_d) \in \Zbb^d$} we define the vector $\vbf^\pm = (x_1^+, \dots x_d^+, x_1^-, \dots x_d^-) \in \Nbb^{2d}$ as follows: if $x_i$ is positive, then $x_i^+ = x_i$ and $x_i^- = 0$. Otherwise, $x_i^+ = 0$ and $x_i^- = |x_i|$. 
We construct an equivalent $\varepsilon$-Büchi~PA $\Amc = (Q, \Sigma, q_0, \Delta', \Emc', F, C)$ of dimension $2d$, where $\Delta' = \{(p, a, \vbf^\pm, q) \mid (p, a, \vbf, q) \in \Delta\}$ and $\Emc' = \{(p, \varepsilon, \vbf^\pm, q) \mid (p, \varepsilon, \vbf, q) \in \Delta\}$. 
Finally, let $C = \{(x_1, \dots, x_d, x_1, \dots, x_d) \mid x_i \in \Nbb\}$. It is now easily verified that $L_\omega(\Mmc) = P_\omega(\Amc)$.

\medskip
For the reverse direction we show that for every Büchi~PA $\Amc$ there is an equivalent BCA~$\Mmc$.
Let $\Amc = (Q, \Sigma, q_0, \Delta, F, C)$ be a Büchi~PA of dimension $d$ where $C = \bigcup_{i \leq \ell} C(\bbf_i, P_i)$. Note that we have $B_\omega(\Amc) = \bigcup_{i \leq \ell} B_\omega(Q, \Sigma, q_0,\Delta, F, C(\bbf_i, P_i))$ by the infinite pigeonhole principle. Hence, we can assume that~$C = C(\bbf, P)$ is linear as BCA are closed under union \cite{blindcounter}.
We construct a blind $d$-counter machine $\Mmc$ that simulates $\Amc$ as follows: $\Mmc$ consists of a copy of $\Amc$ where the accepting states have are equipped with additional $\varepsilon$-transitions subtracting the period vectors $\pbf \in P$.
We only need to consider the base vector $\bbf$ a single time, hence we introduce a fresh initial state $q_0'$ and a $\varepsilon$-transition from $q'_0$ to $q_0$ subtracting~$\bbf$.
%Observe that a vector $\vbf$ lies in $C = \{b_0 + b_1z_1 + \dots + b_\ell z_\ell \mid z_1, \dots, z_\ell\}$ if and only if $\vbf - b_1z_1 - \dots - b_\ell z_\ell - b_0 = \0$ for some $z_i$. 
%Intuitively, $\Mmc$ computes the vector $\vbf$ in the copies of $Q$ and guesses the $z_i$ in the accepting states.
Formally, we construct $\Mmc = (Q \cup \{q_0'\}, \Sigma, q_0', \Delta', F)$ where
\[\Delta' = \Delta \cup \{(q_0', \varepsilon, -\bbf, q_0\} \cup \{(q_f, \varepsilon, -\pbf, q_f) \mid q_f \in F, \pbf \in P\}.\]
It is now easily verified that $B_\omega(\Amc) = L_\omega(\Mmc)$.
\end{proof}

\subsection{$\epsilon$-elimination}

We now show that almost all~PA models admit $\epsilon$-elimination. 
We first consider Büchi~PA, where $\epsilon$-elimination implies the equivalence of blind counter automata and Büchi~PA by~\Cref{lem:kcountertoPPBA}. 
We provided an elementary combinatorial proof on the automaton level in the manuscript~\cite[Theorem 4.2]{infiniteOurs}. 
We thank Georg Zetzsche for outlining a much simpler proof (using a heavier toolbox) which we present here.

\begin{theorem}
    $\epsilon$-Büchi~PA admit $\epsilon$-elimination. 
\end{theorem}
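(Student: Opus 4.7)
The plan is to reduce $\epsilon$-elimination for $\epsilon$-Büchi PA to the classical $\epsilon$-elimination for finite-word PA by routing through the characterization in Lemma~\ref{lem:charBuchiPA}. The first step is to lift the $(1) \Rightarrow (2)$ direction of that lemma to $\epsilon$-Büchi PA: starting from an $\epsilon$-Büchi PA $\Amc$ with $B_\omega(\Amc) = L$, I would show $L = \bigcup_{f \in F} U_f \cdot V_f^\omega$, where each $U_f$ is recognized by an $\epsilon$-PA on finite words and each $V_f$ is recognized by a normalized $\epsilon$-PA whose Parikh condition is a linear set without base vector. The construction parallels the $\epsilon$-free decomposition from~\cite{infiniteZimmermann}: for every accepting state $f \in F$ that occurs at infinitely many accepting hits along some accepting run, split the run into a stem from $q_0$ to the first accepting hit at $f$ followed by an infinite sequence of loops from $f$ back to $f$ between consecutive accepting hits. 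Because the Parikh condition must be satisfied at every accepting hit, the counter increments along any such loop are forced to come from a linear set without base vector.

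The second step is to apply classical $\epsilon$-elimination for finite-word $\epsilon$-PA to each $U_f$ and each $V_f$. This follows from Parikh's theorem: the set of counter vectors achievable along $\epsilon$-paths between any two states of an $\epsilon$-PA is semi-linear and can be absorbed into the Parikh condition of an ordinary PA via auxiliary counters. Combined with Lemma~\ref{lem:normalized} to re-normalize if needed, this yields ordinary PA for $U_f$ and a normalized PA with linear $C$ without base vector for each $V_f$. Then the $(2) \Rightarrow (1)$ direction of Lemma~\ref{lem:charBuchiPA} reassembles these components into an $\epsilon$-free Büchi PA recognizing $L$.

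The main obstacle is preserving the structural restriction on the loop automaton (normalized, linear $C$, no base vector) through $\epsilon$-elimination. A naïve elimination would absorb the ``one-shot'' cost of $\epsilon$-prefixes and $\epsilon$-suffixes of each loop run into the Parikh condition as a base vector, which would break the structural property demanded by Lemma~\ref{lem:charBuchiPA}. I would address this by choosing the split points of the decomposition so that any such one-shot contributions are absorbed into the stem $U_f$ rather than the loop $V_f$: concretely, by letting each loop boundary be a ``pure'' non-$\epsilon$ visit of $f$, so that the $\epsilon$-bursts surrounding each accepting hit are split off into the neighbouring stem or loop piece in a way that does not introduce a base vector. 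A secondary technical point is that accepting hits may occur in the middle of $\epsilon$-bursts; these must be enumerated carefully so that the overall decomposition covers all accepting runs.
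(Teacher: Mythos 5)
Your overall strategy is the same as the paper's: decompose the $\epsilon$-Büchi PA into $\bigcup_i U_iV_i^\omega$ with the $V_i$ of the special form required by \autoref{lem:charBuchiPA}, eliminate $\epsilon$-transitions on finite words, and reassemble via \autoref{lem:charBuchiPA}~$(2)\Rightarrow(1)$. The difference is \emph{where} the finite-word $\epsilon$-elimination happens: the paper translates to blind counter machines, where $\epsilon$-elimination preserving $F=\{q_0\}$ is a cited known result (\cite{greibach,latteux,georgBlindCounter}) and where the translation back to PA (Theorem~32 of \cite{klaedtkeruess}) automatically yields a normalized PA with a \emph{linear} set \emph{without base vector}, because the CM acceptance condition ``all counters are zero'' is of exactly that shape. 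You instead try to carry out $\epsilon$-elimination directly on the $\epsilon$-PA for $V_f$ while preserving ``normalized, linear $C$, no base vector'', and this is where your argument has a genuine gap.

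Concretely, the obstacle you identify --- one-shot costs of $\epsilon$-prefixes/suffixes of each loop becoming a base vector --- is not the real problem: in the standard elimination those contributions are attached to the labels of the shortcut transitions in $\Emc^*\Delta\Emc^*$, not to $C$, so they never create a base vector. The real problem is the contribution of $\epsilon$-\emph{cycles}, whose number of repetitions is unbounded and which therefore must be absorbed into the Parikh condition (as in the set $C'$ of \autoref{lem:SPBAepselim}). Starting from a linear $C$ without base vector, the resulting set is in general only semi-linear, i.e.\ a finite union of linear sets possibly with base vectors. At that point \autoref{lem:charBuchiPA}(2) no longer applies, and you cannot simply split $V_f$ into the languages $W_1,\dots,W_m$ corresponding to the linear components, because $(W_1\cup\dots\cup W_m)^\omega\neq W_1^\omega\cup\dots\cup W_m^\omega$ --- an accepted $\omega$-word may interleave loops from different components. (The analogous issue is why the paper reduces a Büchi PA to linear $C$ \emph{before} forming the $\omega$-power, via the infinite pigeonhole principle, in \autoref{lem:kcountertoPPBA}.) Your proposal also leaves open the case where every accepting hit is reached only in the middle of an $\epsilon$-burst, so that no ``pure'' non-$\epsilon$ visit of $f$ exists to serve as a split point; handling this requires the kind of $(p,S,q)$-bookkeeping the paper develops for reset PA. To close the gap you would either have to prove directly that normalized PA with linear $C$ and no base vector are closed under $\epsilon$-elimination --- which is precisely the non-trivial fact the paper imports from the blind-counter-machine literature --- or reroute through CM as the paper does.
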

\begin{proof}
    Observe that the construction in \Cref{lem:kcountertoPPBA} translates $\epsilon$-free BCA into $\epsilon$-free Büchi~PA. We can hence translate a given Büchi~PA into a BCA and eliminate $\epsilon$-transitions and then translate back into a Büchi~PA. Therefore, all we need to show is that BCA admit $\epsilon$-elimination. 
    To show that BCA admit $\epsilon$-elimination we observe that 
    \[\text{$L$ is recognized by a BCA} \quad \Leftrightarrow \quad L=\bigcup_i U_iV_i^\omega,\]
    where $U_i$ is a language of finite words that is recognized by a finite-word BCA (we do not give a formal definition here as it does not yield further insights) and~$V_i$ is a language that is recognized by a finite-word BCA where the initial state in the only accepting state. 
    The proof of this observation is very similar to the proof of \Cref{lem:charBuchiPA} and we leave the details to the reader. 

    As shown in \cite{greibach,latteux,georgBlindCounter}, finite-word BCA admit $\varepsilon$-elimination. Furthermore, from the proof technique established in \cite[Lemma 7.7]{georgBlindCounter} it is immediate that the condition $F = \{q_0\}$ can be preserved. 
    We obtain $\epsilon$-free finite-word BCA $\Amc_i'$ and $\Bmc_i'$ for the languages~$U_i$ and $V_i$. 
    Using the construction of~\cite[Theorem 32]{klaedtkeruess}, we can translate~$\Amc_i'$ and $\Bmc_i'$ into~PA~$\Amc_i$ and $\Bmc_i$, where the $\Bmc_i$ satisfy $F_i=\{q_0\}$ and the sets~$C_i$ are homogeneous linear sets. Now the statement follows by \Cref{lem:charBuchiPA}.
\end{proof}

We continue with $\varepsilon$-reachability, $\varepsilon$-reachability-regular and $\varepsilon$-limit~PA, as we show $\varepsilon$-elimination using the same technique for these models.
As shown in \Cref{obs:reachability} and \Cref{thm:LimitEqualsReach}, the class of $\omega$-languages recognized by reachability~PA coincides with the class of $\omega$-languages of the form $\bigcup_i U_i \Sigma^\omega$ for Parikh recognizable $U_i$, and the class of reachability-regular and limit~PA recognizable $\omega$-languages coincides with the class of $\omega$-languages of the form $\bigcup_i U_i V_i^\omega$ for Parikh recognizable $U_i$ and regular $V_i$, respectively. 
It is well-known that NFA and~PA on finite words are closed under homomorphisms and hence admit \mbox{$\varepsilon$-elimination}~\cite{klaedtkeruess} (as a consequence of \cite[Proposition II.11]{latteux}, $\varepsilon$-transitions can even be eliminated without changing the semi-linear set). 
The characterizations allow us to reduce $\varepsilon$-elimination of these infinite word~PA to the finite case.
\begin{lemma}
 $\varepsilon$-reachability, $\varepsilon$-reachability-regular, and $\varepsilon$-limit~PA admit $\varepsilon$-elimination.
\end{lemma}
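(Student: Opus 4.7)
The plan is to reduce $\varepsilon$-elimination for the three $\omega$-models to the known $\varepsilon$-elimination results for PA and NFA on finite words, via the Büchi-style characterizations of \autoref{obs:reachability} and \autoref{thm:LimitEqualsReach}. Concretely, I would show that the characterizations continue to hold when the ambient automaton is allowed to have $\varepsilon$-transitions, but with the finite-word components being $\varepsilon$-PA and $\varepsilon$-NFA respectively; then invoke $\varepsilon$-elimination on those components to obtain $\varepsilon$-free PA and NFA; and finally apply the forward (easy) directions of the characterizations, which only consume $\varepsilon$-free PA and NFA, to build an $\varepsilon$-free $\omega$-PA in the appropriate model.

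In more detail, for $\varepsilon$-reachability PA I would rerun the decomposition in the proof of \autoref{obs:reachability}: an accepting run on $\alpha$ has a first accepting hit at some position $k$, and the prefix of the run up to that hit gives an $\varepsilon$-PA on finite words accepting the consumed prefix of $\alpha$, while any continuation is accepted thanks to completeness. The interleaving $\varepsilon$-transitions cause no difficulty because they do not consume input and contribute to the Parikh image exactly as in the finite case. For $\varepsilon$-reachability-regular PA I would adapt the construction of $\Amc_f$ and $\Bmc_f$ in the proof of $(3)\Rightarrow(1)$ of \autoref{thm:LimitEqualsReach}: $\Amc_f$ becomes an $\varepsilon$-PA on finite words and $\Bmc_f$ becomes an $\varepsilon$-NFA (or $\varepsilon$-Büchi automaton) obtained by forgetting counter labels. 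For $\varepsilon$-limit PA I would similarly adapt the $(2)\Rightarrow(1)$ proof: the guess of the subset $D$ of counters tending to infinity and the split into an $\varepsilon$-PA $\Amc_f^D$ (with the modified semi-linear set $C_D$) and an $\varepsilon$-TGBA $\Bmc_f^D$ goes through verbatim, since the split point is characterized purely by structural run properties.

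Having extended the characterizations, I would then invoke $\varepsilon$-elimination on finite-word automata: $\varepsilon$-NFA trivially admit $\varepsilon$-elimination, TGBA with $\varepsilon$-transitions can be converted to $\varepsilon$-free Büchi automata by the standard subset-closure approach, and $\varepsilon$-PA admit $\varepsilon$-elimination by Latteux's construction as used in \cite{klaedtkeruess}. Applied componentwise, this replaces each $U_i$ and each $V_i$ with an equivalent $\varepsilon$-free PA and NFA, and then the forward directions of \autoref{obs:reachability} and \autoref{thm:LimitEqualsReach} (which only build on $\varepsilon$-free components) yield an $\varepsilon$-free reachability, reachability-regular, or limit PA as required.

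The main obstacle I anticipate is verifying that the decomposition arguments for limit PA lift cleanly to the $\varepsilon$-setting: the TGBA $\Bmc_f^D$ must track the condition ``for each counter in $D$ some transition with a non-zero entry in that coordinate is used infinitely often'', and once $\varepsilon$-transitions are allowed, such transitions must be included among the accepting components so that the run witness in the converse direction can freely insert $\varepsilon$-moves. This is a bookkeeping extension rather than a conceptual difficulty, but it requires care to ensure that the round-robin argument used at the end of the $(2)\Rightarrow(1)$ proof still forces every relevant transition (now possibly $\varepsilon$-labeled) to appear infinitely often.
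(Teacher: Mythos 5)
Your proposal is correct, but it routes the $\varepsilon$-handling through a different place than the paper does. You re-prove the Büchi-like characterizations (\autoref{obs:reachability} and \autoref{thm:LimitEqualsReach}) for automata \emph{with} $\varepsilon$-transitions, obtaining decompositions whose finite-word components are $\varepsilon$-PA and $\varepsilon$-NFA/TGBA, and only then eliminate $\varepsilon$ componentwise before reassembling via the forward directions. The paper instead keeps the characterizations entirely in the $\varepsilon$-free world: it replaces every $\varepsilon$-transition by a fresh visible letter $e$, applies the already-proven ($\varepsilon$-free) characterization to the resulting automaton over $\Sigma\cup\{e\}$, and then erases $e$ with the homomorphism $h(e)=\varepsilon$, invoking closure of Parikh-recognizable and regular languages under homomorphisms to land back in the required shape $\bigcup_i h(U_i)\cdot\Sigma^\omega$ (resp.\ $\bigcup_i h(U_i)h(V_i)^\omega$). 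The paper's route buys brevity: the characterization theorems are used as black boxes and nothing about their proofs needs to be revisited, whereas your route requires exactly the re-verification you flag as the ``main obstacle'' --- that the run-decomposition arguments (in particular the split point for limit PA and the TGBA components tracking infinitely-often-used, possibly $\varepsilon$-labeled, transitions) survive the insertion of $\Emc^*$-blocks. That verification is routine but not free, and your appeal to $\varepsilon$-elimination for TGBA adds a further (standard) construction the paper avoids entirely. Conversely, your approach has the mild advantage of not needing the homomorphic-image step, where one must be slightly careful that erasing $e$ from an infinite word still yields an infinite word; the paper's parenthetical about $\{\varepsilon\}^\omega=\varnothing$ is addressing precisely that point. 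Both arguments are sound.
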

\begin{proof}
We show the statement for $\varepsilon$-reachability~PA. The technique can very easily be translated to the other two models.
Let~$\Amc$ be an $\varepsilon$-reachability~PA with $R_\omega(\Amc) = L \subseteq~\Sigma^\omega$. 
Let~$\Amc_e$ be the reachability~PA obtained from $\Amc$ by replacing every $\varepsilon$-transition with an $e$-transition, where $e$ is a fresh symbol that does not appear in $\Sigma$. 
Let $h$ be the homomorphism that erases the symbol $e$, i.e., $h(e) = \varepsilon$.
Observe that $\Amc_e$ recognizes an $\omega$-language $L_e \subseteq (\Sigma \cup \{e\})^\omega$ with the property that $h(L_e) = L$ (note that by definition $\{\epsilon\}^\omega=\varnothing$).
Now, by \Cref{obs:reachability} we can write $L_e$ as $\bigcup_i U_i \cdot (\Sigma \cup \{e\})^\omega$ where $U_i \subseteq (\Sigma \cup \{e\})^*$ is Parikh recognizable.
As the class of Parikh recognizable languages is closed under homomorphisms~\cite{klaedtkeruess}, we have
\[L = h(L_e) = h\left(\bigcup_i U_i \cdot (\Sigma \cup \{e\})^\omega\right) = \bigcup_i h(U_i)\cdot \Sigma^\omega,\]
and can hence find a reachability~PA for $L$. 
The proof for reachability-regular and limit~PA works the same way, as the regular languages are also closed under homomorphisms.
\end{proof}

Now we show that strong $\varepsilon$-reset~PA and weak $\varepsilon$-reset~PA admit $\varepsilon$-elimination.
We show that these two models are equivalent. Hence to show this statement we only need to argue that strong $\varepsilon$-reset~PA admit $\varepsilon$-elimination.

\begin{lemma}
\label{lem:SPBAtoWPBA}
Every strong $\varepsilon$-reset~PA $\Amc$ is equivalent to a weak $\varepsilon$-reset~PA $\Amc'$ that has the same set of states and uses one additional counter. If $\Amc$ is a strong reset~PA, then $\Amc'$ is a weak reset~PA.
\end{lemma}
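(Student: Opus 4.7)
The plan is to add a single counter to $\Amc$ that tracks visits to accepting states, and tighten the Parikh condition so that between any two consecutive reset positions of the weak semantics exactly one accepting state is entered. This will force every accepting-state visit to be a reset position, which is precisely the strong semantics.

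Concretely, given $\Amc = (Q, \Sigma, q_0, \Delta, \Emc, F, C)$ of dimension $d$, I would define $\Amc' = (Q, \Sigma, q_0, \Delta', \Emc', F, C')$ of dimension $d+1$, where $\Delta'$ and $\Emc'$ are obtained from $\Delta$ and $\Emc$ by appending to the vector of each transition $(p, a, \vbf, q)$ the extra component $1$ if $q \in F$, and $0$ otherwise. The Parikh condition is $C' = C \cdot \{1\}$. Clearly, if $\Amc$ has no $\varepsilon$-transitions then neither does $\Amc'$, so the addendum for reset PA (without $\varepsilon$) is immediate.

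For the inclusion $SR_\omega(\Amc) \subseteq W\!R_\omega(\Amc')$, take an accepting run $r$ of $\Amc$ and let $k_1 < k_2 < \dots$ be the (infinite) sequence of positions of all accepting states in $r$. By the strong reset condition, $\rho(r_{k_{i-1}+1} \dots r_{k_i}) \in C$ for every $i$. In the corresponding run of $\Amc'$ the extra counter increments exactly once in each interval $(k_{i-1}, k_i]$, namely at the transition entering $p_{k_i}$, so the extended Parikh image on this interval lies in $C \cdot \{1\} = C'$. Hence $(k_i)_i$ is a valid infinite sequence of weak reset positions.

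For the converse inclusion, suppose the run $r$ of $\Amc'$ meets the weak reset condition with reset positions $k_1 < k_2 < \dots$. Because the extra counter of the Parikh image over $(k_{i-1}, k_i]$ must equal $1$, there is exactly one transition in this block that enters an accepting state; since $p_{k_i} \in F$, this transition is necessarily the last one of the block. The same argument applied to $(0, k_1]$ shows that no accepting state is visited strictly before $k_1$. It follows that $k_1 < k_2 < \dots$ is exactly the list of all accepting-state positions of $r$, and discarding the last coordinate gives $\rho(r_{k_{i-1}+1} \dots r_{k_i}) \in C$, so $r$ (viewed in $\Amc$) satisfies the strong reset condition. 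The only subtle point is this ``no skipping'' argument, which is the reason the extra coordinate is required to be \emph{exactly} $1$ rather than ``at least $1$''; the rest of the proof is bookkeeping.
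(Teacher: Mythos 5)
Your construction is exactly the paper's: one extra counter incremented on transitions entering accepting states, with $C' = C \cdot \{1\}$ forcing that counter to be exactly $1$ between consecutive weak reset positions, so that no accepting state can be passed without resetting. The paper only sketches this, and your two inclusions (including the "no skipping" argument for the converse) correctly fill in the details of the same argument.
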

\begin{proof}
Let $\Amc = (Q, \Sigma, q_0, \Delta, \Emc, F, C)$ be a strong $\varepsilon$-reset~PA. We construct an equivalent weak $\varepsilon$-reset~PA $\Amc'$ that simulates $\Amc$, ensuring that no run visits an accepting state without resetting. 
To achieve that, we add an additional counter that tracks the number of visits of an accepting state (without resetting). 
Moreover, we define~$C'=C\cdot \{1\}$, such that this new counter must be set to 1 when visiting an accepting state, thus disallowing to pass such a state without resetting. 
Now it is clear that $\Amc'$ is a weak $\varepsilon$-reset~PA equivalent to $\Amc$. 
Observe that if $\Amc$ has no $\epsilon$-transitions, then $\Amc'$ has no $\epsilon$-transitions.
\end{proof}

\begin{lemma}
\label{lem:WPBAtoSPBA}
Every weak $\varepsilon$-reset~PA $\Amc$ is equivalent to a strong $\varepsilon$-reset~PA $\Amc'$ with at most twice the number of states and the same number of counters. If~$\Amc$ is a strong reset~PA, then~$\Amc'$ is a weak reset~PA.
\end{lemma}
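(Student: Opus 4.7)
The plan is to construct $\Amc'$ by adjoining to $Q$ a fresh copy $F_r = \{q_r : q \in F\}$ of the accepting states of $\Amc$, to act as ``designated reset points.'' I set the accepting set of $\Amc'$ to $F_r$ (so no state of the original copy of $Q$ inside $\Amc'$ is accepting), keep the initial state as $q_0$, and retain the semi-linear set $C$ and its dimension. For every transition $(p, a, \vbf, q) \in \Delta$ I put $(p, a, \vbf, q)$ into $\Delta'$, add $(p, a, \vbf, q_r)$ whenever $q \in F$, and, whenever $p \in F$, also add $(p_r, a, \vbf, q)$ together with $(p_r, a, \vbf, q_r)$ (the latter only if $q \in F$). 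The set $\Emc'$ is built from $\Emc$ in exactly the same way. This gives $|Q'| \leq |Q| + |F| \leq 2|Q|$, the same counter count, and no new $\varepsilon$-transitions, so if the input is in particular a (plain) weak reset PA then the output is a (plain) strong reset PA.

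The intuition is that in $\Amc'$ the original copy $Q$ is invisible to the strong-reset acceptance condition because it contains no accepting state, so the run can freely pass through old accepting states of $\Amc$ without being forced to reset. A reset in $\Amc'$ happens precisely when the run is routed into a state of $F_r$, and at that moment strong semantics demand exactly $\rho\in C$, which is the weak-reset condition at the corresponding accepting visit. Hence the nondeterministic choice of whether to treat an accepting visit as a reset in a weak-reset run is faithfully encoded by the choice between edges ending in $q$ (stay in the main copy, no reset) and edges ending in $q_r$ (go to the reset copy, trigger a reset).

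Correctness is then a pair of direct simulations. For $W\!R_\omega(\Amc) \subseteq SR_\omega(\Amc')$, I take a weak-reset accepting run $r$ of $\Amc$ with reset positions $0 = k_0 < k_1 < k_2 < \dots$ and reroute each transition $r_{k_i}$ so that it lands in $(p_{k_i})_r$, while leaving all other transitions unchanged; the resulting run visits $F_r$ exactly at the positions $k_i$ and each strong-reset Parikh check is the weak-reset condition $\rho(r_{k_{i-1}+1}\dots r_{k_i}) \in C$. Conversely, I project an accepting run of $\Amc'$ via $q_r \mapsto q$ and designate its $F_r$-visits as weak reset positions; every strong-reset Parikh check transfers verbatim, and the weak-reset convention $k_0 = 0$ imposes no condition on $q_0$, which matches the fact that the initial state of $\Amc'$ lies in the non-accepting main copy. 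I expect no serious obstacle; the only point requiring care is that after a reset the run must be able both to continue without an immediate further reset (via an edge $(q_r, a, \vbf, q')$ back into the main copy) and to reset again at the very next accepting state (via $(q_r, a, \vbf, q'_r)$), which is precisely why both families of outgoing transitions from $F_r$ are included in $\Delta'$.
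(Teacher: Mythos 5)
Your construction is correct and is essentially the paper's: both duplicate the accepting states and use nondeterminism to decide at each accepting visit whether to reset, the only cosmetic difference being that you make the fresh copy $F_r$ accepting and the original copy non-accepting, whereas the paper keeps the originals accepting and adds a non-accepting ``bypass'' copy of $F$. The state and counter bounds match, and your explicit simulation in both directions (including the observation that no $\varepsilon$-transitions are introduced, so a weak reset PA yields a strong reset PA) merely fills in details the paper leaves to the reader.
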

\begin{proof}
Let $\Amc = (Q, \Sigma, q_0, \Delta, \Emc, F, C)$ be a weak $\varepsilon$-reset~PA. We construct an equivalent strong $\varepsilon$-reset~PA $\Amc'$ that simulates $\Amc$ by having the option to ``avoid'' accepting states arbitrarily long. For this purpose, we create a non-accepting copy of $F$. Consequently, $\Amc'$ can decide to continue or reset a partial run using non-determinism. Again, it is clear that~$\Amc'$ is equivalent to~$\Amc$. Observe that if $\Amc$ has no $\epsilon$-transitions, then $\Amc'$ has no $\epsilon$-transitions.

\end{proof}

\begin{lemma}
\label{lem:SPBAepselim}
Strong $\varepsilon$-reset~PA admit $\varepsilon$-elimination.
\end{lemma}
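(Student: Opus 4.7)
The plan is to mimic the $\varepsilon$-elimination proof for $\varepsilon$-Büchi PA: decompose the $\omega$-language into finite-word segment languages, apply $\varepsilon$-elimination there, and reassemble without $\varepsilon$-transitions. Let $\Amc = (Q, \Sigma, q_0, \Delta, \Emc, F, C)$ be a strong $\varepsilon$-reset PA. For each pair $(p, f) \in (\{q_0\} \cup F) \times F$, I define the finite-word segment language $W_{p, f}$ of all words $w$ admitting a run of $\Amc$ from $p$ to $f$ on $w$ with Parikh image in $C$ and no accepting state visited strictly between the endpoints. Directly from the strong reset acceptance condition, $\alpha \in SR_\omega(\Amc)$ if and only if $\alpha = w_1 w_2 w_3 \cdots$ can be split so that $w_1 \in W_{q_0, f_1}$ and $w_{i+1} \in W_{f_i, f_{i+1}}$ for some $f_1, f_2, \ldots \in F$.

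Each $W_{p, f}$ is recognized by an $\varepsilon$-PA on finite words, obtained from $\Amc$ by taking $p$ as initial state, $\{f\}$ as the unique accepting state, and removing all outgoing transitions from the other accepting states. Invoking $\varepsilon$-elimination for $\varepsilon$-PA on finite words (\cite{klaedtkeruess, latteux}) yields an equivalent $\varepsilon$-free PA $\Bmc'_{p, f}$ for each such segment language, which I may further normalize via \autoref{lem:normalized} so that its only accepting state is its initial state.

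The main obstacle is that $W_{p, f}$ may contain $\varepsilon$, namely when $\Amc$ can traverse from $p$ to $f$ using only $\varepsilon$-transitions with valid Parikh image and no intermediate accepting visit. Such empty segments cannot be realized in an $\varepsilon$-free strong reset PA, since there every segment must consume at least one input symbol. To handle this, I define an $\varepsilon$-reachability relation $\leadsto$ on $\{q_0\} \cup F$ by setting $p \leadsto g$ iff there is a finite hub walk $p = g_0, g_1, \ldots, g_k = g$ with $\varepsilon \in W_{g_{j-1}, g_j}$ for every $j$; this relation is computable. Since $\alpha$ is infinite, every accepting segmentation of $\alpha$ contains infinitely many non-empty segments, so leading empty segments can always be absorbed into the next non-empty one. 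Setting $W'_{p, f} := \bigcup_{g\,:\,p \leadsto g} (W_{g, f} \setminus \{\varepsilon\})$, which is recognizable by an $\varepsilon$-free PA by closure under finite union and by making the initial state of $\Bmc'_{g, f}$ non-accepting to exclude $\varepsilon$, one verifies $\alpha \in SR_\omega(\Amc)$ iff $\alpha = w_1 w_2 \cdots$ with each $w_i \neq \varepsilon$, $w_1 \in W'_{q_0, f_1}$, and $w_{i+1} \in W'_{f_i, f_{i+1}}$ for some $f_1, f_2, \ldots \in F$.

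Finally, I reassemble the $\varepsilon$-free PA for the $W'_{p, f}$ into a single $\varepsilon$-free strong reset PA $\Amc'$. Its states consist of one hub per element of $\{q_0\} \cup F$ together with disjoint copies of the interior states of each $\Bmc'_{p, f}$; outgoing transitions of $\Bmc'_{p, f}$'s initial state emanate from hub $p$, and incoming transitions to its accepting state are redirected into hub $f$. The hubs form the accepting set of $\Amc'$, so every segment of a run of $\Amc'$ runs entirely within one copy $\Bmc'_{p, f}$ between two hubs. To prevent Parikh conditions of different segment types from interfering, each $\Bmc'_{p, f}$ is assigned a separate block of counters, and the new semi-linear set $C'$ is the union over $(p, f)$ of the Parikh-accepting set of $\Bmc'_{p, f}$ embedded into the corresponding block and padded with zeros elsewhere; this union is semi-linear. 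Checking both inclusions (using the segment-based characterization) yields $SR_\omega(\Amc') = SR_\omega(\Amc)$, establishing $\varepsilon$-elimination for strong $\varepsilon$-reset PA.
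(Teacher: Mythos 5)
Your proposal is correct in outline but takes a genuinely different route from the paper. The paper eliminates $\varepsilon$-transitions directly on the reset PA: it introduces shortcut transitions for blocks in $\Emc^*\Delta\Emc^*$, auxiliary states $(p,S,q)$ encoding sequences of accepting states traversed purely by $\varepsilon$-moves, extra counters recording which states a compressed block visited, and a modified semi-linear set that permits subtracting the (semi-linear, by Parikh's theorem) contributions $C_q$ of $\varepsilon$-cycles through visited states. You instead reduce to the finite-word case: you slice an accepting run at its reset positions, observe that the segment languages $W_{p,f}$ are recognizable by finite-word $\varepsilon$-PA, invoke the known $\varepsilon$-elimination for PA on finite words, absorb the $\varepsilon$-only segments via a reachability closure on the accepting ``hubs'', and reassemble with one counter block per segment type. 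This mirrors how the paper itself handles $\varepsilon$-Büchi PA (via the $\bigcup_i U_iV_i^\omega$ characterization) and is arguably more modular; the paper's direct construction, by contrast, stays within a single automaton and yields an explicit size bound. Note that your appeal to \autoref{lem:normalized} does real work in the reassembly: the normalization counter forces every accepting run of $\Bmc'_{p,f}$ to enter its initial-and-accepting state exactly once, so no segment run gets chopped at an unintended intermediate hub visit.

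Three details need repairing, all by standard tricks. First, your segment automaton for $W_{p,f}$ only removes outgoing transitions from accepting states \emph{other than} $f$, so it still accepts words whose witnessing runs pass through $f$ strictly between the endpoints; this over-approximates $W_{p,f}$ and would let the reassembled automaton accept runs that fail to reset at that intermediate visit. Introduce a fresh final copy of $f$ with no outgoing transitions (redirecting the incoming transitions of $f$ to it) and strip outgoing transitions from all of $F$. Second, after normalization the initial state is the \emph{only} accepting state, so literally ``making it non-accepting'' to exclude $\varepsilon$ empties the language; you want a fresh non-accepting initial copy, exactly as in the proof of \autoref{lem:concatenation}. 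Third, the union defining $C'$ needs the disjointness marker from the paper's union construction: otherwise a segment in block $(p,f)$ whose Parikh vector is $\0$ could be wrongly validated by $\0$ lying in a different block's set. None of these affects the viability of the approach.
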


\begin{proof} 
    Let $\Amc = (Q, \Sigma, q_0, \Delta, \Emc, F, C)$ be a strong $\varepsilon$-reset~PA of dimension $d$. We assume without loss of generality that~$q_0$ has no incoming transitions (this can be achieved by introducing a fresh copy of~$q_0$). Furthermore, we assume that $F \neq \varnothing$ (otherwise \mbox{$SR_\omega(\Amc) = \varnothing$}). 
    Let the states of $Q$ be ordered arbitrarily, say $Q = \{q_0, \dots, q_{n-1}\}$.
    We construct an equivalent strong reset~PA $\Amc' = (Q', \Sigma, q_0, \Delta', F', C')$ of dimension $d + n$. In the beginning, $\Amc'$ is a copy of $\Amc$ (keeping the $\epsilon$-transitions for now), which is modified step-by-step.
    The purpose of the new counters is to keep track of the states that have been visited (since the last reset). Initially, we hence modify the transitions as follows:
    for every transition $(q_i, \gamma, \vbf, q_j) \in \Delta \cup \Emc$ we replace $\vbf$ by~$\vbf \cdot \ebf_j^n$.

    Let $p, q \in Q$. Assume there is a sequence of transitions $\tilde\lambda = r_1 \dots r_j \dots r_k \in \Emc^* \Delta \Emc^*$; $1\leq j \leq k \leq 2n+1$, where 
  
    \begin{itemize}
        \item $r_j = (p_{j-1}, a, \vbf_j, p_j) \in\Delta$, and 
        \item $r_i = (p_{i-1}, \varepsilon, \vbf_i, p_i) \in \Emc$ for all $i \neq j, i \leq k$,
        \item such that $p_0 = p, p_k = q$, and $p_i\neq p_\ell$ for $i,\ell\leq j$ and $p_i\neq p_\ell$ for $i,\ell\geq j$, and
        \item all internal states are non-accepting, \ie, $p_i \notin F$ for all $0 < i < k$.
    \end{itemize}  
    
    Then we introduce the \emph{shortcut} $(p, a, \rho(\tilde\lambda), q)$, where $\rho(\tilde\lambda)$ is computed already with respect to the new counters, tracking that the $p_i$ in $\tilde\lambda$ have been visited, \ie, the counters corresponding to the $p_i$ in this sequence have non-zero values. 
    %\item First, for every pair $(p,q)$ if there is an $\epsilon^{\leq |Q|}a\epsilon^{\leq |Q|}$-path between $p$ and $q$ without internal accepting states, then we introduce the $a$ transition from $p$ to $q$ with vector value as on the path. We now assume that these transitions are there (we introduced no new states yet).
    %\item We assume there is a accepting state, otherwise the language is empty. 
    
    Let $p,q\in Q$. We call a (possibly empty) sequence $\lambda = r_1 \dots r_k \in \Emc^*$ with $r_i = (p_{i-1}, \varepsilon, \vbf_i, p_i)$ and $p_0 = p, p_k = q$ a \emph{no-reset $\varepsilon$-sequence} from $p$ to~$q$ if all internal states are non-accepting, \ie, $p_i \notin F$ for all $0 < i < k$. A \emph{no-reset $\epsilon$-path} is a no-reset sequence such that $p_i\neq p_j$ for $i\neq j$. 
    Observe that the set of no-reset $\varepsilon$-paths from $p$ to $q$ is finite, as the length of each path is bounded by $n-1$.
    We call the pair $(p,q)$ a \emph{$C$-pair} if there is a no-reset $\varepsilon$-sequence $r$ from $p$ to $q$ with $\rho(r) \in C$, where $\rho(r)$ is computed in $\Amc$. %\textcolor{red}{Note that we can decide if a pair is good by constructing $\hat\Bmc$ and using Parikh's theorem.}
    %\item Let $p,q\in Q$. We call the pair $(p,q)$ \emph{$\epsilon$-good} if there is a $p$-$q$-walk of $\epsilon$-transitions such that all internal nodes are non-accepting and such that the vector of the walk belongs to $C$. 
    
    Let $S=(f_1,\ldots, f_\ell)$ be a non-empty sequence of pairwise distinct accepting states (note that this implies $\ell \leq n$). We call $S$ a \emph{$C$-sequence} if each $(f_i,f_{i+1})$ is a $C$-pair.
    
    For all $p,q\in F$ and $C$-sequences $S$ such that $p=f_1$ if $p\in F$ and $q=f_\ell$ if $q\in F$, we introduce a new state $(p,S,q)$.
    We add $(p,S,q)$ to $F'$, that is, we make the new states accepting. 
    State $(p,S,q)$ will represent a partial run of the automaton with only $\epsilon$-transitions starting in $p$, visiting the accepting states of $S$ in that order, and ending in $q$.

    Observe that in the following we introduce only finitely many transitions by the observations made above; we will not repeat this statement in each step.
    Let $p, q \in Q$ and $S = (f_1, \dots, f_\ell)$ be a $C$-sequence.
    For every transition of the form $(s, a, \vbf, p) \in \Delta$ we insert new transitions 
    \[\{(s, a, \vbf + \rho(\lambda), (p,S,q)) \mid \lambda \text{ is a no-reset $\varepsilon$-path from $p$ to $f_1$}\}\] 
	to~$\Delta'$. 
    Similarly, for every transition of the form $(q, a, \vbf, t) \in \Delta$ we insert new transitions 
    \[\{((p,S,q), a, \vbf + \rho(\lambda), t) \mid \lambda \text{ is a no-reset $\varepsilon$-path from $f_\ell$ to $q$}\}\]
    to $\Delta'$. 
    Again this set is finite.
    Additionally, let $p', q' \in Q$ and $S' = (f_1', \dots, f'_k)$ be a $C$-sequence. For every sequence $\tilde\lambda=\lambda \delta \lambda'$ where $\lambda$ is a no-reset $\varepsilon$-path from $f_\ell$ to $q$, $\delta = (q, a, \vbf, p')$, and~$\lambda'$ is a no-reset $\varepsilon$-path from~$p'$ to $f'_1$
    we add $\big((p,S,q), a, \rho(\tilde{\lambda}), (p',S',q')\big)$ to $\Delta'$. 
    
    Lastly, we connect the initial state $q_0$ in a similar way (recall that we assume that $q_0$ has no incoming transitions, and in particular no loops). 
    For every transition $(p, a, \vbf, q) \in \Delta$ and every $C$-sequence $S = (f_1, \dots, f_l)$ with the property that $(q_0, f_1)$ is a $C$-pair and there is a no-reset $\varepsilon$-path $\lambda$ from $f_\ell$ to $p$, we introduce the transition $(q_0, a, \rho(\lambda) + \vbf, q)$ for every such path $\lambda$. 
    Additionally, for every $C$-sequence $S' = (f_1', \dots f'_k)$ such that there is a no-reset $\varepsilon$-path~$\lambda'$ from $q$ to $f_1'$, we introduce the transition $\big(q_0, a, \rho(\lambda) + \vbf + \rho(\lambda'), (q,S',t)\big)$ for all such paths $\lambda, \lambda'$ and $t\in Q$. 
    Furthermore, for every no-reset $\varepsilon$-path $\hat\lambda$ from $q_0$ to~$p$, we introduce the transition $(q_0, a, \rho(\hat{\lambda}) + \vbf + \rho(\lambda'), (q,S',t))$ for all $t \in Q$. 
    A reader who is worried that we may introduce too many transitions at this point shall recall that $(q,S',t)$ has no outgoing transition if there does not exist a no-reset $\epsilon$-path from $f_k'$ to $t$.
    %\item For every state $s$ with a symbol-transition $(a,x)$ to $p$ and every state $t$ with a symbol-transition $(b,y)$ from $q$ to $t$ we insert the symbol-transition $(a,x+u)$ from $s$ to $(p,q,S)$ if the pair $(p,f_1)$ is $\epsilon$-good and there is an $\epsilon$-walk from $p$ to $f_1$ in $\Bmc_{p,f_1,S}^\epsilon$ with vector value $u$. Similarly, we add the symbol-transition $(y+w)$ from $(p,q,S)$ to $t$ if the pair $(f_\ell, q)$ is $\epsilon$-good and there is an $\epsilon$-walk from $f_\ell$ to $q$ in $\Bmc_{f_\ell,q,S}$ with vector value $w$. Here, $\Bmc_{p,q,S}$ is defined as $\Bmc_{p,q}$ where the nodes of $S$ have been deleted. 
    %\item We introduce new counters. For all new transitions outgoing of $p$, we count the nodes that we have seen on the $p$-$f_1$-path. For the new transitions outgoing of $(p,q,S)$ we count the nodes that we have seen on the $f_\ell$-$q$-path and verify with the semi-linear sets that we took valid paths.
    Finally, we delete all $\epsilon$-transitions.

    We define $C'$ similar to the construction by Klaedtke and Ruess \cite{klaedtkeruess} used to eliminate $\varepsilon$-transitions in the finite setting.
    For every $q \in Q \setminus F$ we define $C_q = \{\rho(r) \mid r \in \Emc^*$ is partial run of $\Amc$ starting and ending in~$q$ that does not visit any accepting state$\}$. As a consequence of Parikh's theorem \cite{parikh1966context} and \cite[Lemma 5]{klaedtkeruess}, the sets $C_q$ are semi-linear.
    %
    %For every $q \in Q$ we define $C_q = \sum \hat{\Bmc}_{q,q}$, where $\hat{\Bmc}_{q,q}$ is defined as $\Bmc_{q,q}$ but without any accepting states, that is, $\hat{\Bmc}_{q,q} = (Q \setminus F, \Gamma, q, \{(p, \vbf, p') \mid (p, \varepsilon, \vbf, p') \in \Emc, p, p' \notin F\},\{q\})$ for a suitable alphabet $\Gamma \subseteq \Nbb^d$. 
    Then $C' = \{\vbf \cdot (x_0, \dots, x_{n-1}) \mid \vbf + \ubf \in C, \ubf \in \sum_{x_i \geq 1} C_{q_i}\}$. 
    By this, we substract the $C_{q_i}$ if the counter for $q_i$ is greater or equal to one, that is, the state has been visited. 
    This finishes the construction. 
    
    We now prove that $\Amc'$ is equivalent to $\Amc$. In the one direction we compress the run by using the appropriate shortcuts, in the other direction we unravel it accordingly. 

    \medskip
    To show that $SR_\omega(\Amc)\subseteq SR_\omega(\Amc')$, let $\alpha \in SR_\omega(\Amc)$ with accepting run $r = r_1 r_2 r_3 \dots$. If there are no $\varepsilon$-transitions in $r$, we are done (as $r$ is also an accepting run of $\Amc'$ on $\alpha$).
    
    Otherwise, we construct an accepting run $r'$ of $\Amc'$ on $\alpha$ by replacing maximal \mbox{$\varepsilon$-sequences} in $r$ step-by-step. Let $i$ be minimal such that $r_i \dots r_j$ is a maximal $\varepsilon$-sequence. 
    Let $r_i = (p_{i-1}, \varepsilon, \vbf_i, p_i), r_j = (p_{j-1}, \varepsilon, \vbf_j, p_j)$, and $r_{j+1} = (p_j, \alpha_z, \vbf_{j+1}, p_{j+1})$. It might be the case that $i = 1$, \ie, the run $r$ starts with an $\varepsilon$-transition leaving $q_0$. Otherwise $i > 1$ and we can write $r_{i-1} = (p_{i-2}, \alpha_{z-1}, \vbf_{i-1}, p_{i-1})$.
    By allowing the empty sequence, we may assume that there is always a second (possibly empty) maximal $\varepsilon$-sequence $r_{j+2} \dots r_{k}$ starting directly after $r_{j+1}$. 
    %If $r_{i+2} \in \Emc$, this makes sense. Otherwise we assume that $r_{i+2} \dots r_k$ is empty by assuming that $k = i+1$. This looks weird, but significantly simplifies the following case distinction, 
    We distinguish (the combination of) the following cases.
    \begin{itemize}
        \item At least one state in $r_i \dots r_j$ is accepting, \ie, there is a position $i-1 \leq \ell \leq j$ such that $p_\ell \in F$ (F) or not (N).
        \item At least one state in $r_{j+2} \dots r_k$ is accepting, \ie, there is a position $j+1 \leq \ell' \leq k$ such that $p_{\ell'} \in F$ (F) or not (N). If $r_{j+2} \dots r_k$ is empty, we are in the case~(N).
    \end{itemize}

Hence, we consider four cases in total.
    
    \begin{itemize}
        \item Case (NN). That is, there is no accepting state in $r_i \dots r_k$. Note that the $\varepsilon$-sequence $r_i \dots r_j$ can be decomposed into an $\varepsilon$-path and $\varepsilon$-cycles as follows. 
    If we have $p_{i_1} \neq p_{j_1}$ for all $i \leq i_1 < j_1 \leq j$ we are done as $r_i \dots r_j$ is already an $\varepsilon$-path. Otherwise let $i_1 \geq i$ be minimal such that there is $j_1 > i_1$ with $p_{i_1} = p_{j_1}$, that is, $r_{i_1+1} \dots r_{j_1}$ is an $\varepsilon$-cycle. 
    If $r_i \dots r_{i_1} r_{j_1+1} \dots r_j$ is an $\varepsilon$-path, we are done. 
    Otherwise, let $i_2 > j_1$ be minimal such that there is $j_2 > i_2$ with $p_{i_2} = p_{j_2}$, that is, $r_{i_2+1} \dots r_{j_2}$ is an $\varepsilon$-cycle. 
    Then again, if $r_i \dots r_{i_1} r_{j_1+1} \dots r_{i_2} r_{j_2+1} \dots r_j$ is an $\varepsilon$-path, we are done. 
    Otherwise, we can iterate this argument and obtain a set of $\varepsilon$-cycles $r_{i_1+1} \dots r_{j_1}, \dots, r_{i_m+1} \dots r_{j_m}$ for some $m$, and an $\varepsilon$-path $\hat{r}_{i,j} = r_i \dots r_{i_1} r_{j_1+1} \dots r_{i_{m}} r_{j_m+1} \dots r_j$ which partition $r_i \dots r_j$. 
    Now observe that $\rho(r_{i_1+1} \dots r_{j_1}) + \dots + \rho(r_{i_m+1} \dots r_{j_m}) \in C_{p_{i_1}} + \dots + C_{p_{i_m}}$. 
    We can do the same decomposition for the $\varepsilon$-sequence $r_{j+2} \dots r_k$ into a set of $\varepsilon$-cycles and an $\varepsilon$-path $\hat{r}_{j+2,k}$. 
    By the construction of $\Delta'$, there is a shortcut 
    \[\delta = (p_{i-1}, \alpha_z, (\rho(\hat{r}_{i,j}) + \vbf_{j+1} + \rho(\hat{r}_{j+2,k})) \cdot \hat\ebf^n, p_k),\] 
	where $\hat\ebf^n$ is the $n$-dimensional vector counting the states appearing in $\hat r_{i,j}$ and $\hat r_{j+2,k}$ and the state $p_{j+1}$. 
    By the construction of $\Delta'$ and $C'$, we may subtract all $\varepsilon$-cycles that have been visited in $r_i \dots r_k$, hence, we may replace $r_i \dots r_k$ by $\delta$ to simulate exactly the behavior of~$\Amc$.

\item 
    Case (NF). That is, there is no accepting state in $r_i \dots r_j$ but at least one accepting state in $r_{i+2} \dots r_k$ (in particular, this sequence is not empty). 
    %First, we assume $i > 0$ (the current $\varepsilon$-sequence does not start at the beginning of $r$). 
    Let $\ell_1, \dots, \ell_m$ denote the positions of accepting states in $r_{i+2} \dots r_k$, and let $\ell_0 < \ell_1$ be maximal such that~$\ell_0$ is resetting (this is before $r_i$, and if such an $\ell_0$ does not exist, let $\ell_0 = 0$), \ie, $\ell_0$ is the position of the last reset before the reset at position $\ell_1$. 
    As $r$ is an accepting run, the sequence $S = (\ell_1, \dots, \ell_m)$ is a $C$-sequence (we may assume that all states in~$S$ are pairwise distinct, otherwise there is a reset-cycle, which can be ignored). 
    In the same way as in the previous case we can partition the $\varepsilon$-sequence $r_i \dots r_j$ into an $\varepsilon$-path $\hat{r}_{i,j}$ and a set of $\varepsilon$-cycles, which may be subtracted from~$C$. 
    Likewise, we can partition the sequence $r_{j+2} \dots r_{\ell_1}$ into an $\varepsilon$-path $\hat{r}_{j+2,\ell_1}$ and $\varepsilon$-cycles with the same property. 
    By the construction of $\Delta'$ there is a shortcut $(p_{i-1}, a, \rho(\hat{r}_{i,j}) + \vbf_{j+1}, p_{j+1})$ and hence a transition 
    \[\delta = (p_{i-1}, a, \rho(\hat{r}_{i,j}) + \vbf_{j+1} + \rho(\hat{r}_{j+2, \ell_1}), (p_{j+1}, S,p_k)).\] 
    Note that this is also the case if~$i = 1$. 
    Thus, we replace $r_i \dots r_k$ by~$\delta$. 
    In particular, $\rho(r_{\ell_0+1} \dots r_{i-1}\delta)$ can be obtained from $\rho(r_{\ell_0+1} \dots r_{\ell_1})$ by subtracting all $\varepsilon$-cycles that have been visited within this partial run. 
    Furthermore, observe that the containment of $\rho(r_{\ell_1+1} \dots r_{\ell_2}), \dots, \rho(r_{\ell_{m-1}+1} \dots r_{\ell_m})$ in $C$ depends only on the automaton, and not the input word. 
    As the counters are reset in $r_{\ell_m}$, we may continue the run from $\delta$ the same way as in $r_k$, using an appropriate transition from~$\Delta'$ that adds the vector $\rho(\hat{r}_{\ell_{m}+1, k})$, thus respecting the acceptance condition.
    %everything this is always the same dear readers if you came this far you probably do not need any further explanation right? Thus let us just abbreviate the rest.

    \item
    Case (FN). Similar to (NF), but this time we replace $r_{i-1}r_i \dots r_j$ by an appropriate transition into a state of the form $(p_{i-2}, \alpha_z, \vbf, (p_{i-1}, S, p_j))$ for a suitable $C$-sequence $S$ and vector $\vbf$, followed by a shortcut leading to $p_k$. If $i = 0$ (we enter a $C$-sequence before reading the first symbol), we make use of the transitions introduced especially for $q_0$.

    \item 
    Case (FF). Similar to (FN) and (NF), but we transition from a state of the form $(p_{i-1}, S, p_j)$ into a state of the form $(p_{j+1}, S', p_k)$ for suitable $C$-sequences $S, S'$, again respecting the case $i = 0$.
    \end{itemize}

	To show that $SR_\omega(\Amc')\subseteq SR_\omega(\Amc)$ we unravel the shortcuts and $(p, S,q)$-states introduced in the construction. Let $\alpha \in SR_\omega(\Amc')$ with accepting run $r' = r'_1 r'_2 r'_3 \dots$. We replace every transition $r'_i \in \Delta' \setminus \Delta$ (\ie, transitions that do not appear in $\Amc$) by an appropriate sequence of transitions in $\Amc$.
    Let $i \geq 1$ be minimal such that $r'_i$ is a transition in~$\Delta' \setminus \Delta$.

    %First we argue for the case $i = 1$, that is the first transition in $r'$ does not appear in $\Amc$. We distinguish three subcases.
    We distinguish the form of $r'_i$ and show that the possible forms correspond one-to-one to the cases in the forward direction.
    
    \begin{itemize}
        \item Case (NN). The case that $r'_i = (p, a, \rho(\tilde{\lambda}), q)$ is a shortcut, \ie, $\tilde{\lambda} \in \Emc^* \Delta \Emc^*$, corresponds to the case (NN). 
        In particular, there are no accepting states in $r$. Let $k < i$ be the position of the last reset before $r'_i$, and $k'$ the position of the first reset after $r'_i$, where $k' = i$ if $r'_i$ transitions into a accepting state. By the acceptance condition we have $\rho(r'_{k+1} \dots r'_{k'}) \in C - (\sum_{q \in Q'} C_q)$ for some set $Q' \subseteq Q$ based on the counter values. Hence, we can replace $r'_i$ by the partial run $\tilde{\lambda}$ filled with possible $\varepsilon$-cycles on some states in $Q'$.
        \item Case (NF). The case that $r'_i = (s, a, \vbf + \rho(\lambda), (p,S,q))$ such that $S = (f_1, \dots f_\ell)$ is a $C$-sequence, there is a transition $\delta = (s, a, \vbf, p) \in \Delta$ and $\lambda$ is a no-reset $\varepsilon$-path from~$p$ to~$f_1$, corresponds to the case (NF). 
        By the definition of $C$-sequence there is a sequence~$r_{f_1, f_\ell}$ of $\varepsilon$-transitions in $\Amc$ starting in $f_1$, ending in $f_\ell$, visiting the accepting states $f_1$ to~$f_\ell$ (in that order) such that the reset-acceptance condition is satisfied on every visit of one the accepting states. 
        Then we can replace $r'_i$ by $\delta \lambda r_{f_1, f_\ell}$, possibly again filled with some $\varepsilon$-cycles based on the state counters of~$\lambda$, similar to the previous case. Note that at this point we do not yet unravel the path from $f_\ell$ to $q$, as it depends on how the run $r'$ continues (as handled by the next two cases).
        \item Case (FN). The case that $r'_i = ((p,S,q), a, \vbf + \rho(\lambda), t)$ such that $S = (f_1, \dots f_\ell)$ is a $C$-sequence, there is a transition $\delta = (q, a, \vbf, t) \in \Delta$ and $\lambda$ is a no-reset $\varepsilon$-path from~$f_\ell$ to~$q$, corresponds to the case (FN). 
        Similar to the previous case, we can replace~$r'_i$ by~$\lambda\delta$, possibly again amended with some $\varepsilon$-cycles based on the state counters of~$\lambda$.
        If $i = 1$, the transition might also be of the form $r'_1 = (q_0, \alpha_1, \rho(\lambda) + \vbf, t)$ such that $S$ is a $C$-sequence with the property that $(q_0, f_1)$ is a $C$-pair. Then there is a sequence of $\varepsilon$-transitions $r_{q_0, f_\ell}$ in $\Amc$ as above. 
        Then we replace $r'_1$ by $r_{q_0, f_\ell} \lambda \delta$ (with possible $\varepsilon$-cycles) instead.
        \item Case (FF). The case that $r'_i = ((p,S,q), a, \rho(\tilde{\lambda}), (p', S', q')$ such that $S = (f_1, \dots f_\ell)$ and $S' = (f'_1, \dots, f'_k)$ are $C$-sequences, there is a transition $\delta = (q, a, \vbf, p') \in \Delta$ and $\tilde{\lambda} = \lambda \delta \lambda'$,  where $\lambda$ is a no-reset $\varepsilon$-path from $f_\ell$ to $q$ and $\lambda'$ is a no-reset $\varepsilon$-path from $p'$ to $f_1'$, corresponds to the case (FF).
        This case can be seen as the union of the previous cases. There is a sequence $r_{f'_1, f'_k}$ of $\varepsilon$-transitions in $\Amc$, as in the case (RF). Hence, we replace~$r'_i$ by $\tilde{\lambda} r_{f'_1, f'_k}$ (with possible $\varepsilon$-cycles).
        If $i = 1$, the transition might also be of the form $r'_1 = (q_0, \alpha_1, \rho(\lambda) + \vbf + \rho(\lambda'), (p', S', q'))$ such that $(q_0, f_1)$ is a $C$-pair. Then there is a sequence of $\varepsilon$-transitions $r_{q_0, f_\ell}$ in $\Amc$ as above, and we replace $r'_1$ by $r_{q_0, f_\ell} \tilde{\lambda} r_{f'_1, f'_k}$.
    \end{itemize}

Observe that the size of $\Amc'$ is in $\Omc(|\Amc|^2|\Amc|!)$. This finishes the proof of the lemma. 
\end{proof}

Finally we show that safety and co-Büchi~PA do not admit $\varepsilon$-elimination.
\begin{lemma}
\label{lem:safetyEps}
$\varepsilon$-safety~PA and $\varepsilon$-co-Büchi~PA do not admit $\varepsilon$-elimination.
\end{lemma}

\begin{proof}
Consider the automaton $\Amc$ in \Cref{fig:safetyEps} with $C=\{(z,z') \mid z' \geq z\}$.
\begin{figure}
    \centering
    \begin{tikzpicture}[->,>=stealth',shorten >=1pt,auto,node distance=3cm, semithick]
    \tikzstyle{every state}=[minimum size=1cm]

    \node[state, accepting, initial, initial text={}] (q0) {$q_0$};	
    \node[state, accepting] (q1) [right of=q0] {$q_1$};

    \path
    (q0) edge [loop above] node {$\varepsilon, \begin{pmatrix}0\\1\end{pmatrix}$} (q0)
    (q0) edge              node {$a, \begin{pmatrix}0\\0\end{pmatrix}$} (q1)
    (q1) edge [loop above] node {$b, \begin{pmatrix}1\\0\end{pmatrix}$} (q1)
    (q1) edge [bend left]  node {$b, \begin{pmatrix}0\\0\end{pmatrix}$} (q0)	
    ;
    \end{tikzpicture}
    \caption{The $\varepsilon$-PA with $C=\{(z,z') \mid z' \geq z\}$ for the proof of \Cref{lem:safetyEps}.}
    \label{fig:safetyEps}
\end{figure}

If we interpret $\Amc$ as an $\varepsilon$-safety or $\varepsilon$-co-Büchi~PA, we have we have $S_\omega(\Amc) = CB_\omega(\Amc) = \{ab^+\}^\omega$. This $\omega$-language is neither safety~PA nor co-Büchi~PA recognizable (one can easily adapt the proof in \cite[Theorem 3]{infiniteZimmermann} showing that $\{\alpha \in \{a,b\}^\omega \mid |\alpha|_a = \infty\}$ is neither safety~PA nor co-Büchi~PA recognizable).

Observe how $\Amc$ utilizes the $\varepsilon$-transition to enforce that $q_0$ is seen infinitely often: whenever the $b$-loop on $q_1$ is used, the first counter increments. The semi-linear set states that at no point the first counter value may be greater than the second counter value which can only be increased using the $\varepsilon$-loop on $q_0$. Hence, any infinite word accepted by $\Amc$ may contain arbitrary infixes of the form $b^n$ for $n < \infty$, as the automaton can use the $\varepsilon$-loop on~$q_0$ at least $n$ times before, but not $b^\omega$.
\end{proof}

We generalize the trick presented in the previous proof to show that $\varepsilon$-co-Büchi~PA recognize all Büchi~PA recognizable $\omega$-languages. Further adapting the trick we show that $\varepsilon$-safety~PA recognize all reset~PA recognizable $\omega$-languages.
\begin{lemma}
\label{lem:BukkuSubsetEpsCoBukki}
    The class of Büchi~PA recognizable $\omega$-languages is a strict subclass of the class of $\varepsilon$-co-Büchi~PA recognizable $\omega$-languages.
\end{lemma}
\begin{proof}
 Let $\Amc = (Q, \Sigma, q_0, \Delta, F, C)$ be a Büchi~PA of dimension $d$. Without loss of generality we assume $q_0 \notin F$ (this can be achieved by adding a fresh initial state). We construct an equivalent $\varepsilon$-co-Büchi~PA $\Amc'$ of dimension $d+3$ as follows.
 Let $Q' = Q \cup \{q_0'\} \cup \{q_f' \mid q_f \in F\}$ where $q_0'$ is a fresh state and the $q'_f$ are copies of the accepting states of $\Amc$ in which we do not expect good counter values. We define $\Amc' = (Q', \Sigma, q_0', \Delta', \Emc', Q', C')$. The idea is as follows: we use one of the three new counters to indicate that we would like to check the current counter values for membership in $C$. The other two additional counters are used to enforce that we see a state in $F$ infinitely often, that is that we indeed must check the current counter values for membership in $C$ infinitely often. To achieve that, we introduce $\varepsilon$-loops on every state that is accepting in $\Amc$ as well as on the new initial state. The automaton then guesses the number of symbols that are read before visiting the next accepting state and increases the second counter accordingly. The first counter is incremented every time a symbol of the infinite input word is read. At every state we expect the second counter value to be greater or equal the first counter value.

 Hence, we define
 \begin{align*}
     \Delta' =&\ \{(p, a, \vbf \cdot (1,0,0), q) \mid (p,a,\vbf,q) \in \Delta, p,q \notin F\} \\
     \cup&\ \{(p, a, \vbf \cdot (1,0,1), q_f), (p, a, \vbf \cdot (1,0,0), q_f') \mid (p,a,\vbf,q_f) \in \Delta, p \notin F, q_f \in F\} \\
     \cup&\ \{(p_f, a, \vbf \cdot (1,0,1), q), (p_f', a, \vbf \cdot (1,0,0), q) \mid (p_f,a,\vbf,q) \in \Delta, p_f \in F, q\notin F\} \\
     \cup&\ \{(p_f', a, \vbf \cdot (1,0,1), q_f), (p_f, a, \vbf \cdot (1,0,1), q_f') \mid (p_f,a,\vbf,q_f) \in \Delta, p_f, q_f \in F\} \\
     \cup&\ \{(p_f, a, \vbf \cdot (1,0,0), q_f), (p_f', a, \vbf \cdot (1,0,0), q_f') \mid (p_f,a,\vbf,q_f) \in \Delta, p_f, q_f \in F\},
 \end{align*}
 and
 \begin{align*}
     \Emc' =&\ \{(q'_0, \varepsilon, \0^{d} \cdot (0,0,1), q_0),(q'_0, \varepsilon, \0^d \cdot (0,1,0), q'_0)\} \cup \{(q_f, \varepsilon, \0^d \cdot (0,1,0), q_f) \mid q_f \in F\}.
 \end{align*}

 Finally, we define 
 \begin{align*}
 C' =&\ \{\0^{d} \cdot (0, y, 0) \mid y \in \Nbb\}\\
 \cup&\ \{(v_1, \dots, v_d, x, y, 2p+1) \mid p \in \Nbb, y \geq x, (v_1, \dots, v_d) \in \Nbb^{d}\} \\
  \cup&\ \{(v_1, \dots, v_d, x, y, 2p+2) \mid p \in \Nbb, y \geq x, (v_1, \dots, v_d) \in C\}.
 \end{align*}

 The interested reader can now (fairly) easy verify that $\Amc$ and $\Amc'$ are equivalent.

 The strictness of the inclusion follows immediately from~\cite{infiniteZimmermann}, as there are $\omega$-languages recognized by co-Büchi~PA but not by any Büchi~PA.
\end{proof}

Before we show that $\varepsilon$-safety~PA are powerful enough to simulate reset~PA,
we first introduce the following Myhill-Nerode-like notion for linear sets.
Let $C$ be a linear set of dimension $d$. We call $C$ \emph{congruent} if for every $\vbf_1, \vbf_2 \in C$ and every $\vbf \in \Nbb^d$ we have $\vbf_1 + \vbf \in C$ if and only of $\vbf_2 + \vbf \in C$. We call a congruent linear set $C$ \emph{resetting} if $\0 \in C$ (\ie, if $C$ is congruent and homogeneous). 

\begin{lemma}
    The class of reset~PA recognizable $\omega$-languages is a strict subclass of the class of $\varepsilon$-safety~PA recognizable $\omega$-languages.
\end{lemma}
\begin{proof}
Let $\Amc = (Q, \Sigma, q_0, \Delta, F, C)$ be a strong reset~PA of dimension $d$ with $C = \bigcup_{i \leq \ell} C(\bbf_i, P_i)$. We assume again without loss of generality that $q_0 \notin F$. We construct an equivalent $\varepsilon$-safety~PA $\Amc = (Q', \Sigma, q_0', \Delta', \Emc', Q', C')$ of dimension $2d + 3$. The idea is to combine the ideas from the previous proof while observing that we can simulate resets by testing membership in a resetting linear set. To be precise, the additional three counters serve the same purpose as in the last proof while the additional $d$ counters allow us to simulate resets.
Hence, we start by defining $C' = C_{\mathsf{check}} \cup C_{\mathsf{don't\ care}}$ where
\[C_{\mathsf{check}} = \{(v_1, \dots, v_d, v_1, \dots, v_d, z,z,2p) \mid p, z, v_1, \dots, v_d \in \Nbb\}\]
and
\[C_{\mathsf{don't\ care}} = \{(v_1, \dots, v_{2d} x,y,2p+1) \mid p \in \Nbb, y \geq x, (v_1, \dots, v_{2d}) \in \Nbb^{2d}\}.\]
The crucial observation is that $C_{\mathsf{check}}$ is a resetting linear set. In the next step we introduce a new initial state that is equipped with an $\varepsilon$-loop and a $\varepsilon$-transition to $q_0$ exactly as the in previous proof.
Then we replace every accepting state $f \in F$ by the gadget depicted in \Cref{fig:gadget}.
Here we use the following observation: whenever a vector $\vbf$ is contained in $C$, then it is in particular contained $C(\bbf_i, P_i)$ for a $i \leq \ell$ and can hence be written as $\bbf_i + \sum_{\pbf \in P_i} \pbf z_\pbf$ for some values $z_\pbf \in \Nbb$. While the first~$d$ counters of $\Amc'$ are copies of the~$d$ counters of $\Amc$, the second set of~$d$ counters are used to represent any vector that is contained in $C$, \ie, can be written as mentioned above.
The $\varepsilon$-loops in the state $f^{(i)}$ allow us to add any vector contained in $C(\bbf_i, P_i)$ to the counter values of the second set of $d$ counters. Hence, the vector induced by the first $d$ counters is contained in $C$ if and only if $f$ can be reached with a vector whose vector induced by the second set of~$d$ counters is equivalent to the vector induced by the first~$d$ counters. This is checked in the state $f$ where membership in $C_{\mathsf{check}}$ is tested (as the last counter has an even value when entering $f$).

We refrain from giving the details of the construction of $\Amc'$ as they do not yield further insights.
\begin{figure}
	\centering
	\begin{tikzpicture}[->,>=stealth',shorten >=1pt,auto,node distance=3cm, semithick]
	\tikzstyle{every state}=[minimum size=1cm]

 	\node[state, accepting] (fin) {$f_{in}$};	
  	\node[state, accepting] (f1) [above right = 5mm and 2cm of fin] {$f^{(1)}$};
    \node[state, accepting] (fl) [below right = 5mm and 2cm of fin] {$f^{(\ell)}$};
    \node[state, accepting] (f)  [above right = 5mm and 2cm of fl ] {$f$};
	\node[state, accepting] (fout) [right of=f] {$f_{out}$};
	
	\path
	(fin) edge node {$\varepsilon, \begin{pmatrix}\0^d\\\bbf_1 \\ 0 \\ 0 \\ 0\end{pmatrix}$} (f1)
 	(fin) edge node[below left] {$\varepsilon, \begin{pmatrix}\0^d\\\bbf_\ell \\ 0 \\ 0 \\ 0\end{pmatrix}$} (fl)
  	(f1) edge [loop above] node {$\varepsilon, \begin{pmatrix}\0^d\\\pbf_1 \\ 0 \\ 0 \\ 0\end{pmatrix}$} (f1)
    (f1) edge node {$\varepsilon, \begin{pmatrix}\0^d\\\0^d \\ 0 \\ 0 \\ 1\end{pmatrix}$} (f)
    (fl) edge [loop below] node {$\varepsilon, \begin{pmatrix}\0^d\\\pbf_\ell \\ 0 \\ 0 \\ 0\end{pmatrix}$} (fl)
    (fl) edge[below right] node {$\varepsilon, \begin{pmatrix}\0^d\\\0^d \\ 0 \\ 0 \\ 1\end{pmatrix}$} (f)
    (f) edge node {$\varepsilon, \begin{pmatrix}\0^d\\\0^d \\ 0 \\ 0 \\ 1\end{pmatrix}$} (fout)
    (fout) edge [loop above] node {$\varepsilon, \begin{pmatrix}\0^d\\\0^d \\ 0 \\ 1 \\ 0\end{pmatrix}$} (fout)
	;

    \node[right = 2cm of fin] {$\vdots$};
    \node (in1)  [above left = 4mm and 6mm of fin] {};
    \node (in2)  [below left = 4mm and 6mm of fin] {};
    \node (out1) [above right = 4mm and 6mm of fout] {};
    \node (out2) [below right = 4mm and 6mm of fout] {};

    \path[dashed]
    (in1) edge (fin)
    (in2) edge (fin)
    (fout) edge (out1)
    (fout) edge (out2)
    ;
        
	\end{tikzpicture}
	\caption{We replace every accepting state $f \in F$ by such a gadget. Every state $f^{(i)}$ is equipped with an $\varepsilon$-loop for every $\pbf_i \in P_i$. Note that the actual membership of the current counter values in~$C$ is (still) checked in $f$ as we require the current counter values to be in $C_{\mathsf{check}}$.}
	\label{fig:gadget}
\end{figure}
\end{proof}

Observe that $\varepsilon$-safety~PA generalize $\varepsilon$-co-Büchi~PA, as they can guess the number $n$ of transitions a $\varepsilon$-co-Büchi~PA uses before they constantly verify the Parikh-condition. Very similar to the previous proof, the $\varepsilon$-safety~PA uses an $\varepsilon$-transition on a fresh initial state $n$ times to postpone the verification of the Parikh-condition for $n$ transitions.

We conclude by arguing that $\varepsilon$-co-Büchi~PA do not generalize reset~PA. To achieve that,  we define the following property, which is heavily inspired by the pumping-style lemma for Parikh recognizable (finite word) languages as introduced in~\cite{cadilhac2011expressiveness}.

Let $\Lmc$ be a class of $\omega$-languages. We say that $\Lmc$ has the \emph{exchange property} if for every $L \in \Lmc$ there are $p, \ell \in \Nbb$ such that for every $\alpha \in L$ the following holds:
there is a partition $\alpha = uvxv\beta$ with $0 < |v| \leq p < |x|$ and $|uvxv| \leq \ell$ such that $uv^2x\beta \in L$ and $uxv^2\beta \in L$.

\begin{observation}
The classes of Büchi~PA and ($\varepsilon$-)co-Büchi~PA recognizable \mbox{$\omega$-languages} have the exchange property. The classes of reset~PA and ($\varepsilon$-)safety~PA recognizable $\omega$-languages do not have the exchange property.
\end{observation}

As an immediate consequence of this observation and the argument above we obtain the following corollaries.
\begin{corollary}
The class of $\varepsilon$-co-Büchi~PA recognizable $\omega$-languages is a strict subset of the class of $\varepsilon$-safety~PA recognizable $\omega$-languages.
\end{corollary}
\begin{corollary}
The classes of reset~PA recognizable $\omega$-languages and $\varepsilon$-co-Büchi~PA recognizable $\omega$-languages are incomparable.    
\end{corollary}

\subsection{Remaining Closure Properties}
As a preparation for the next section we establish the remaining closure properties. As limit~PA and reachability-regular~PA are equivalent by \Cref{thm:LimitEqualsReach}, we only argue for limit~PA. Similarly, as strong reset~PA and weak reset~PA are equivalent by \Cref{lem:SPBAtoWPBA} and \Cref{lem:WPBAtoSPBA}, we only argue for strong reset~PA.

First we observe that the $\omega$-languages recognized by reset~PA are ultimately periodic.
\begin{lemma}
\label{lem:resetUlti}
 Let $\Amc$ be a reset~PA. If $SR_\omega(\Amc) \neq \varnothing$, then $\Amc$ accepts an infinite word of the form $uv^\omega$.
\end{lemma}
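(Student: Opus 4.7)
The plan is to take an arbitrary accepting run on some $\alpha\in SR_\omega(\Amc)$, locate a ``loop'' in the run that (i) starts and ends in the same state, (ii) has $\mathbf{0}$-counters at both endpoints, (iii) consumes at least one letter of $\Sigma$, and (iv) only crosses accepting states at which the Parikh condition holds. Repeating this loop forever from its first occurrence produces an accepted word of the desired shape $uv^\omega$.

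Fix $\alpha\in SR_\omega(\Amc)$ and a corresponding accepting run $r=r_1r_2r_3\dots$ with $r_i=(p_{i-1},\gamma_i,\vbf_i,p_i)$ (where $\gamma_i\in\Sigma\cup\{\varepsilon\}$). Let $0=k_0<k_1<k_2<\dots$ be the reset positions guaranteed by the (strong or, by \cref{lem:SPBAtoWPBA,lem:WPBAtoSPBA}, equivalently weak) reset condition, so that $p_{k_i}\in F$ and $\rho(r_{k_{i-1}+1}\dots r_{k_i})\in C$ for every $i\geq 1$. Since $F$ is finite, the infinite pigeonhole principle yields some $f\in F$ with $p_{k_j}=f$ for infinitely many $j$; collect these indices in an infinite set $J\subseteq\Nbb$. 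Because $\alpha$ is infinite and a run in $(\Emc^*\Delta)^\omega$ must eventually process every letter, the number of non-$\varepsilon$ transitions in $r_1\dots r_{k_j}$ tends to infinity as $j\to\infty$. Hence there exist $j_1<j_2$ in $J$ such that the block $r_{k_{j_1}+1}\dots r_{k_{j_2}}$ contains at least one non-$\varepsilon$ transition.

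Let $u$ be the finite word spelled out by the non-$\varepsilon$ labels in $r_1\dots r_{k_{j_1}}$, and let $v\neq\varepsilon$ be the finite word spelled out by $r_{k_{j_1}+1}\dots r_{k_{j_2}}$. Define the infinite sequence of transitions
\[
r^\star \;=\; r_1\dots r_{k_{j_1}}\bigl(r_{k_{j_1}+1}\dots r_{k_{j_2}}\bigr)^\omega.
\]
By construction $r^\star$ is a legal run of $\Amc$ on $uv^\omega$: adjacent transitions fit because $p_{k_{j_1}}=p_{k_{j_2}}=f$, and each loop iteration begins in state $f$ and returns to $f$. The reset positions of $r^\star$ are the original resets up to $k_{j_1}$ together with shifted copies of the resets inside the loop. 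At each of them the state lies in $F$ (inherited from $r$), and the Parikh image of the corresponding segment equals one of the $\rho(r_{k_{i-1}+1}\dots r_{k_i})$ from $r$, which lies in $C$. Thus $r^\star$ satisfies the (strong) reset condition, so $uv^\omega\in SR_\omega(\Amc)$.

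The only real subtlety is ensuring $v\neq\varepsilon$; this is precisely where the observation about non-$\varepsilon$ transitions accumulating along the run is used, and it is the one place the presence of $\varepsilon$-transitions requires care. Everything else is a standard pumping argument over accepting states combined with the resetting of counters, which together guarantee that repeating the chosen loop does not spoil either the Büchi-style visit-$F$-infinitely-often requirement or the Parikh checks performed at each reset.
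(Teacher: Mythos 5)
Your proposal is correct and follows essentially the same route as the paper: pick two reset positions of the accepting run that carry the same accepting state with at least one genuine input symbol read in between, and pump the enclosed block forever. You are somewhat more explicit than the paper (the pigeonhole step, the argument that some block must contain a non-$\varepsilon$ transition, and the verification that the repeated run still satisfies the reset condition), but the underlying argument is identical.
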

\begin{proof}
Assume $SR_\omega(\Amc) \neq \varnothing$. Then there exists an infinite word $\alpha \in SR_\omega(\Amc)$ with accepting run $r = r_1 r_2 r_3 \dots$, where $r_i = (p_{i-1}, \alpha_i, \vbf_i, p_i)$. Let $k_1 < k_2 < \dots$ be the positions of all accepting states in $r$. Let $k_i < k_j$ be two such positions such that $p_{k_i} = p_{k_j}$.
Let $u = \alpha_1 \dots \alpha_{k_i}$ be the prefix of $\alpha$ read upon visiting $p_{k_i}$ and $v = \alpha_{k_i + 1} \dots \alpha_{k_j}$ the infix read between $p_{k_i}$ and~$p_{k_j}$. 
%Note that $v \neq \varepsilon$ by the choice of $k_j$.
Then $\Amc$ also accepts $uv^\omega$, as $r_1 \dots r_{k_i} (r_{k_i + 1} \dots r_{k_j})^\omega$ is an accepting run of $\Amc$ on $uv^\omega$ by definition.
\end{proof}

\begin{lemma}
\label{lem:intersectionNondet}
The classes of limit~PA recognizable and reachability-regular~PA recognizable $\omega$-languages are closed intersection. The class of reset~PA recognizable $\omega$-languages is not closed under intersection.
\end{lemma}
\begin{proof}
The closure of limit~PA (and hence reachability-regular~PA) under intersection follows from a simple product construction, exactly as for finite word~PA~\cite[Theorem 21]{klaedtkeruess}. 
The non-closure of reset~PA under intersection follows from an argument similar to the argument showing non-closure of blind counter machines~\cite{blindcounter} (on infinite words). Let $L_1 = \{a^n b^n \mid n > 0\}^\omega$ and $L_2 = \{a\}\{b^n a^{2n} \mid n > 0\}$. Then $L_1 \cap L_2$ contains only one infinite word, namely $ab a^2 b^2 a^4 b^4 \dots$ which is not ultimately periodic. Hence, $L_1 \cap L_2$ is not accepted by any strong reset~PA as a consequence of the previous lemma.
\end{proof}

Finally, we show that all our models are not closed under complement.
\begin{lemma}
\label{lem:complementNondet}	
The classes of limit~PA recognizable, reachability-regular~PA recognizable, and reset~PA recognizable $\omega$-languages are not closed complement.
\end{lemma}
\begin{proof}
Non-closure for reset~PA follows immediately from the closure under union and non-closure under intersection by De Morgan's law.

Hence, we only need to argue for limit~PA. Let $D = \{ww \mid w \in \{a,b\}^*\}$. As shown in~\cite[Lemma 26]{klaedtkeruess}, this language is not Parikh recognizable. However, its complement~$\overline{D}$ is Parikh recognizable.
Now let $L = \overline{D} \cdot \{c\}^\omega$. By \Cref{lem:concatenation}, this $\omega$-language is limit~PA recognizable.
Observe that $\overline{L} = \{a,b\}^\omega \cup \{a,b\}^* \{c\}^* \{a,b\} \{a,b,c\}^\omega \cup PAL \cdot \{c\}^\omega$.
The first two languages of the union are $\omega$-regular and hence limit~PA recognizable. Hence, it is sufficient to show that 
$D \cdot \{c\}^\omega$ is not limit~PA recognizable.
Observe that limit~PA have the exchange property. Cadilhac et al.~\cite{cadilhac2011expressiveness} have shown how to establish their pumping-syle lemma to show that the language $\{w\#w \mid w \in \{a,b\}^*\}$ is not Parikh recognizable.
The proof can easily adapted to show that $D$ is not Parikh recognizable (one only needs to remove the~\#), which again can easily adapted to show that $D \cdot \{c\}^\omega$ and hence $\overline{L}$ is not recognized by any limit~PA by exploiting the exchange property.
\end{proof}

At this point we are ready to present a complete picture of comparing all (non-deterministic) models, see~\Cref{fig:expr-nondet}.

\begin{figure}
    \begin{tikzpicture}[%
	node distance=27mm,>=Latex,
	initial text="", initial where=below left,
	every state/.style={rectangle,rounded corners,draw=black,thin,fill=black!5,inner sep=1mm,minimum size=6mm},
	every edge/.style={draw=black,thin}
	]
	
	\node[state] (reach) {reachability~PA};
	\node[state, below = 3.2cm of reach] (reg) {$\omega$-regular};
	\node[state, below right = 1cm and 0cm of reach, align=center] (limit) {reachability-regular~PA \\ = limit~PA = $\LPAReg$};
	\node[state, right = 18mm  of reach] (cobuchi) {co-Büchi~PA};
	\node[state, right = 18mm of cobuchi] (epscobuchi) {$\varepsilon$-co-Büchi~PA};
	\node[state, right = 2cm of reg] (regPA) {$\LRegPA$ = reset~PA $(**)$};
	\node[state, right = 2cm of regPA] (PAPA) {$\LPAPA$ = reset~PA $(*)$};
	\node[state, right = 1cm of limit] (buchi) {Büchi~PA = BCA};
	\node[state, right = 1.5cm of buchi] (reset) {reset~PA};
	\node[state, above right = 1cm and 0cm of epscobuchi] (safety) {safety~PA};
	\node[state, right = 18mm of epscobuchi] (epssafety) {$\varepsilon$-safety~PA};

	%\node at (2.8, -3.5) {$(*)$ suitable graph theoretical restrictions};
	
	%\node[red] at (2.8,0) {\Huge TODO};
	
	\path[-{Latex}]
	(safety) edge (epssafety)
	(cobuchi) edge (epscobuchi)
	(reach) edge (cobuchi)
	(reach) edge (limit)
	(limit) edge (buchi)
	(reg) edge (limit)
	(reg) edge (regPA)
	(regPA) edge (PAPA)
	(PAPA) edge (reset)
	(buchi) edge (PAPA)
	(buchi) edge (epscobuchi)
	(reset) edge (epssafety)
	(epscobuchi) edge (epssafety)
	;
	
    \node[align=left, anchor=west] at (-1.5, -5.5) {\small $(*)$ At most one state $q$ per leaf of $C(\Amc)$ may have incoming transitions from outside the  leaf, this \\\small\phantom{$(*)$} state $q$ is the only accepting state in the leaf, and there are no accepting states in non-leaves;};
	\node[align=left, anchor=west] at (-1.68, -6.25) {\small $(**)$ and only transitions connecting states in leaves may be labeled with non-zero vectors.};
\end{tikzpicture}   
\caption{Comparison of Parikh automata on infinite words. Arrows indicate strict inclusions while no (non-transitive) connections mean incomparability.}
\label{fig:expr-nondet}
\end{figure}

\subsection{Decision Problems}
\label{subsec:decision}
In this section, we study the following classical decision problems for~PA on infinite words. 
\begin{itemize}
    \item Emptiness: given a~PA $\Amc$, is the $\omega$-language of $\Amc$ empty?
    \item Membership: given a~PA $\Amc$ and finite words $u, v$, does $\Amc$ accept $uv^\omega$?
    \item Universality: given a~PA $\Amc$, does $\Amc$ accept every infinite word?
\end{itemize}

Furthermore, we study the classical model checking problem, where we are given a system~$\Kmc$ and a specification, \eg, represented as an automaton $\Amc$, and the question is whether every run of $\Kmc$ satisfies the specification, 
\ie, we ask $L(\Kmc) \subseteq L(\Amc)$, which is true if and only if $L(\Kmc) \cap \overline{L(\Amc)} = \varnothing$.
However, as complementing is often expensive or not even possible, another approach is to specify the set of all bad runs and ask whether no run of $\Kmc$ is bad, which boils down to the question is $L(\Kmc) \cap L(\Amc) = \varnothing$?
We call the first approach \emph{universal model checking} and the latter approach \emph{existential model checking}.
In our setting we assume the specification $\Amc$ to be a~PA operating on infinite words, while the system $\Kmc$ may be given as a Kripke-structure (which can be seen as a safety automaton~\cite{ClarkeHandbook}), in which case the goal is to solve \emph{safety model checking}, or also as a~PA operating on infinite words, in which case the goal is to solve \emph{PA model checking}.
Hence, we consider four problems in total, which boil down to the following decision problems.

\begin{itemize}
    \item Inclusion: given a safety automaton or a~PA $\Amc_1$, and a~PA $\Amc_2$, is the $\omega$-language of~$\Amc_1$ a subset of the $\omega$-language of $\Amc_2$?
    \item Intersection emptiness: given a safety automaton or a~PA $\Amc_1$, and a~PA $\Amc_2$, is the $\omega$-language of $\Amc_1$ disjoint from the $\omega$-language of $\Amc_2$?
\end{itemize}

Observe that our characterization results imply that we can efficiently translate reachability-regular~PA into Büchi~PA which in turn can efficiently be translated into strong reset~PA.
Furthermore strong reset~PA and weak reset~PA can efficiently be translated into each other by \Cref{lem:SPBAtoWPBA} and \Cref{lem:WPBAtoSPBA}.
Hence, for showing upper bounds for these models it is sufficient to argue for reset~PA.
However, the proof of \Cref{lem:limitToLPAReg} showing that every limit~PA recognizable $\omega$-language is a member of $\LPAReg$ constructs a number of finite word~PA that is exponentially in the dimension of the limit~PA we start with. Hence, we argue for limit~PA separately.
Contrary, we can translate reachability-regular~PA efficiently into limit~PA; hence, for showing lower bounds of all newly introduced models it is sufficient to argue for reachability-regular~PA.

We begin by showing that emptiness for reset~PA is $\coNP$-complete, where the upper bound even holds if we allow $\varepsilon$-transitions (which does not increase their expressiveness but our $\varepsilon$-elimination procedure constructs an equivalent reset~PA of super-polynomial size).
Hence, reset~PA (even with $\varepsilon$-transitions) are a powerful model that can still be used for algorithmic applications.

We present an algorithm solving non-emptiness for reset~PA in $\NP$ by exploiting that $\omega$-languages accepted by reset~PA are ultimately periodic.
As a consequence, we can reduce non-emptiness for $\varepsilon$-reset~PA to the finite word case, as clarified in the following lemma.
\begin{lemma}
\label{lem:nonemptiness}
    Emptiness for reachability-regular~PA, strong reset~PA and weak reset~PA is $\coNP$-complete.
\end{lemma}
\begin{proof}
    The $\coNP$-hardness for reachability-regular~PA (and hence all of these models) follows immediately from the $\coNP$-hardness of finite word~PA~\cite{emptynp}. 
    Hence, we focus on the membership in $\coNP$ by presenting an $\NP$-algorithm for non-emptiness that works even for strong reset~PA with $\varepsilon$-transitions. 
    Let $\Amc$ be a strong $\varepsilon$-reset~PA.
    By \Cref{lem:resetUlti} it suffices to check whether $\Amc$ accepts 
    an ultimately periodic infinite word $uv^\omega$. If such a word exists, we may assume that there is an accepting run $r_u r_v^\omega$ of $\Amc$ on $uv$ where neither~$r_u$ nor~$r_v$ visit the same accepting state twice (otherwise we simply remove such cycles in the run).
    For any $p,q \in Q$ we
    define $\Amc_{p \Rightarrow q} = (Q \cup \{q_0'\}, \Sigma, q_0', \Delta', \Emc', \{q\}, C)$,
    where $\Delta' = \{(q_1, a, \vbf, q_2) \mid (q_1, a, \vbf, q_2) \in \Delta, q_1 \notin F\}
    \cup \{(q_0', a, \vbf, q_2) \mid (p, a, \vbf, q_2) \in \Delta\}$ and, analogously,
    \mbox{$\Emc' = \{(q_1, \varepsilon, \vbf, q_2) \mid (q_1, \varepsilon, \vbf, q_2) \in \Emc, q_1 \notin F\}
    \cup \{(q_0', \varepsilon, \vbf, q_2) \mid (p, \varepsilon, \vbf, q_2) \in \Delta\}$}.

  	Hence, if we interpret $\Amc_{p\Rightarrow q}$ as a finite word~PA, then it accepts all words accepted by the finite word~PA $\Amc$ when starting in $p$, ending in $q$, and not visiting an accepting state in-between. In other words, if $p,q\in F$, then $\Amc_{p \Rightarrow q}$ accepts all finite infixes that the strong reset~PA $\Amc$ may read when the last reset was in $p$, and the next reset is in $q$
    
    Now, the following $\NP$ algorithm solves non-emptiness:
    \begin{enumerate}
    \item Guess a sequence $f_1, \ldots, f_k$ of accepting states with $k \leq 2|F|$ such that $f_i = f_k$ for some $i \leq k$.
    \item Verify that $L(\Amc_{q_0 \Rightarrow f_1}) \neq \varnothing$ and $L(\Amc_{f_j \Rightarrow f_{j+1}}) \neq \varnothing$ for all $1 \leq j < k$ (interpreted as~PA over finite words).
    \item Verify that $L(\Amc_{f_i \Rightarrow f_{i+1}}) \cdot \ldots \cdot L(\Amc_{f_{k-1} \Rightarrow f_k}) \not\subseteq \{\varepsilon\}$.
    \end{enumerate}
    
    The third step can be done by adding a fresh symbol (say $e$) to the automata and replacing every $\varepsilon$-transition with an $e$-transition (observe that this does construction does not change the emptiness behavior, and is, in contrast to the $\varepsilon$-elimination procedure in \cite{klaedtkeruess} computable in polynomial time).
    Afterwards we use the NP-algorithm for non-emptiness for~PA~\cite{emptynp}.
    
    The third step essentially states that not all $L(\Amc_{f_j \Rightarrow f_{j+1}})$ for $j \geq i$ may only accept the empty word, as we require $v \neq \varepsilon$. To check this property, we can construct a~PA\footnote{This is possible in polynomial time by a standard construction very similar to the one of \Cref{lem:concatenation}.} recognizing $L(\Amc_{f_i \Rightarrow f_{i+1}}) \cdot \ldots \cdot L(\Amc_{f_{k-1} \Rightarrow f_k})$, and again replace every $\varepsilon$-transition with an $e$-transition. Finally, we build the product automaton with the~PA (NFA) that recognizes the language $\{w \in (\Sigma \cup \{e\})^* \mid w \text{ contains at least 1 symbol from } \Sigma\}$, which is possible in polynomial time~\cite{klaedtkeruess} and test non-emptiness for the resulting~PA.
\end{proof}

Let us quickly explain how to modify the algorithm such that we obtain an $\NP$-algorithm for non-emptiness for limit~PA.
\begin{lemma}
\label{lem:nonemptiness-limit}
Emptiness for limit~PA is $\coNP$-complete.
\end{lemma}
\begin{proof}
Let $\Amc$ be a limit~PA of dimension $d$.
Recall the proof of \Cref{lem:limitToLPAReg} showing $L_\omega(\Amc)$ is a member of $\LPAReg$ by iterating over all subsets $D \subseteq [d]$.
Intuitively, such a subset $D$ indicates the counters which we expect to be $\infty$ when processing an infinite word $\alpha \in L_\omega(\Amc)$.
For every fixed $D$, we utilize \Cref{lem:semi-linear-inf} to compute a sequence of finite word~PA, say $\Amc_1, \Amc'_1, \dots \Amc_n, \Amc'_n$ for some $n \geq 1$ in polynomial time such that
$\alpha \in \bigcup_{i \leq n} L(\Amc_i) \cdot L(\Amc'_i)^\omega = L$. Given these automata, we can compute a reachability-regular~PA recognizing $L$ in polynomial time by \Cref{cor:LPARegToReachReg}.
In particular, for every fixed $D$ we can compute a reachability-regular~PA (and hence a reset~PA by the comment above) accepting all words that are accepted by $\Amc$ such that there is an accepting run $r$ of~$\Amc$ on $\alpha$ with $\mathsf{Inf}(\rho(r)) = D$ (recall that $\Inf(\vbf)$ for some $\vbf \in \Nbb^d$ denotes the positions of all $\infty$-entries in $\vbf$).
Hence, we can use the $\NP$-algorithm in the previous proof by first guessing a good set $D$ and turning only the matching ``relevant part'' of $\Amc$ into a reset~PA.
\end{proof}

We will now turn our attention to the membership problem.
Note that, given finite words $u$ and $v$, we can always construct a safety automaton that recognizes $uv^\omega$ and no other infinite word with $|uv|$ many states. Recall that every state of a safety automaton is accepting. We show that the intersection of a reset~PA recognizable $\omega$-language and a safety automaton-recognizable $\omega$-language remains reset~PA recognizable using a product construction which is computable in polynomial time. Hence, we can reduce the membership problem to the non-emptiness the standard way.

\begin{lemma}
\label{lem:resetsafety}
The class of reset~PA recognizable $\omega$-languages is closed under intersection with safety automata-recognizable $\omega$-languages.
\end{lemma}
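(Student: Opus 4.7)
The plan is to use a straightforward product construction. Let $\Amc = (Q, \Sigma, q_0, \Delta, F, C)$ be a strong reset PA of dimension $d$ and let $\Bmc = (Q', \Sigma, q'_0, \Delta', Q')$ be a safety automaton (recall every state of a safety automaton is accepting). We define the product $\Amc \times \Bmc = (Q \times Q', \Sigma, (q_0, q'_0), \Delta'', F \times Q', C)$, where
\[
\Delta'' = \{((p,p'), a, \vbf, (q,q')) \mid (p, a, \vbf, q) \in \Delta \text{ and } (p', a, q') \in \Delta'\}.
\]
The dimension is unchanged, so we keep the same semi-linear set $C$.

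The key observation is that because every state of $\Bmc$ is accepting, a state $(p, p')$ of the product is accepting in $\Amc \times \Bmc$ if and only if $p \in F$. Hence the reset positions in any run of the product are in bijective correspondence with the reset positions in the $\Amc$-component of that run, and the Parikh contributions between consecutive reset positions are exactly those of the $\Amc$-component. This ensures that a run of the product satisfies the strong reset condition iff its projection to $\Amc$ does, while its projection to $\Bmc$ is automatically safe (every reachable configuration is in $Q'$).

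From this the two inclusions are immediate. If $\alpha \in SR_\omega(\Amc) \cap S_\omega(\Bmc)$, pick witnessing runs $r$ of $\Amc$ and $r'$ of $\Bmc$ on $\alpha$; combining them coordinate-wise yields a run of $\Amc \times \Bmc$ on $\alpha$ whose accepting/reset behaviour is that of $r$, and which is therefore accepting. Conversely, any accepting run of the product projects to an accepting run of $\Amc$ and a (trivially accepting) run of $\Bmc$ on the same input. Finally, $|Q \times Q'|$ and $|\Delta''|$ are polynomial in the sizes of $\Amc$ and $\Bmc$, and $C$ is copied verbatim, so the construction is computable in polynomial time.

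There is no real obstacle here; the only point that needs to be checked carefully is that the safety automaton may be incomplete, so not every input admits a $\Bmc$-run, but this is exactly what the product construction encodes by requiring both component transitions to exist simultaneously.
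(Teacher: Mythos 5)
Your proposal is correct and uses essentially the same product construction as the paper: pair the states, synchronize on letters, keep $C$ and make $(p,p')$ accepting iff $p\in F$, so that reset positions coincide with those of the $\Amc$-component while the $\Bmc$-component merely filters which transitions are enabled. The only difference is that the paper carries out the construction for $\varepsilon$-reset PA as well (letting the safety component stay put on $\varepsilon$-transitions), which it needs so that the construction remains polynomial without first performing the super-polynomial $\varepsilon$-elimination.
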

\begin{proof}
We show a construction for strong $\varepsilon$-reset~PA that is computable in polynomial time.
Let $\Amc_1 = (Q_1, \Sigma, q_1, \Delta_1, \Emc_1, F_1, C_1)$ be a strong $\varepsilon$-reset~PA and $\Amc_2 = (Q_2, \Sigma, q_2, \Delta_2, Q_2)$ be a safety automaton.
Consider the product automaton
\[\Amc = (Q_1 \times Q_2, \Sigma, (q_1, q_2), \Delta, \Emc, F_1 \times Q_2, C_1)\] with 
\[\Delta = \{((p,q), a, \vbf, (p',q') \mid (p, a, \vbf, p') \in \Delta_1 \text{ and } (q, a, q') \in \Delta_2\}\] and 
\[\Emc = \{((p,q), \varepsilon, \vbf, (p',q)) \mid (p, \varepsilon, \vbf, p') \in \Emc_1 \text{ and } q \in Q_2\}.\]

As every state of $\Amc_2$ is accepting, we need to take care that $\Amc$ does not use a transition that is not enabled in $\Amc_2$ while mimicking the behavior of $\Amc_1$.
Hence, it is easily verified that $SR_\omega(\Amc) = SR_\omega(\Amc_1) \cap L_\omega(\Amc_2)$.
\end{proof}

Combining the previous two results, we obtain that membership for all our models is in~$\NP$. For the lower bound, we reduce from the membership problem for semi-linear sets, that is, given a semi-linear set $C \subseteq \Nbb^d$ and a vector $\vbf \in \Nbb^d$, deciding membership of~$\vbf$ in $C$. This problem is known to be $\NP$-complete, which follows immediately from the $\NP$-algorithms for integer programming~\cite{semilinUpper1,semilinUpper2} and the \NP-hardness of (a variant of) subset sum~\cite{intract,karp}; see also~\cite{haase} for a short discussion.
\begin{cor}
\label{cor:membership}	
    Membership for limit~PA, reachability-regular~PA, strong reset~PA and weak reset~PA is $\NP$-complete.
\end{cor}

We will now turn our attention to the universality and inclusions problems, the latter being the core of solving universal model checking.
Note that we can always reduce universality to inclusion, as an automaton $\Amc$ is universal if and only if $\Sigma^\omega$ is a subset of the $\omega$-language of $\Amc$.
Observe however that universality (and hence inclusion) remain undecidable for our models, as these problems are already undecidable for reachability~PA \cite{infiniteZimmermann} which can effectively be translated into reachability-regular~PA.
This implies that the universal model checking problems are undecidable for our models.
\begin{cor}
Universality and inclusion are undecidable for limit~PA, reachability-regular~PA, strong reset~PA and weak reset~PA.
\end{cor}

Contrary, the existential safety model checking for all our models, that is, intersection emptiness for~PA with safety automata is $\coNP$-complete.
Hardness follows immediately from the hardness of emptiness, while containment in $\coNP$ is as an immediate consequence of~\Cref{lem:resetsafety}.
\begin{cor}
Intersection-emptiness for limit~PA, reachability-regular~PA, strong reset~PA and weak reset~PA with safety automata is $\coNP$-complete.
\end{cor}

We continue with the existential~PA model checking problem. This problems is $\coNP$ complete for limit~PA and reachability-regular~PA.
Again, hardness follows from the hardness of emptiness, while containment follows from the closure under intersection, as witnessed by the product construction in \Cref{lem:intersectionNondet} which is computable in polynomial time
\begin{lem}
\label{lem:interLimitReachreg}	
Intersection-emptiness for limit~PA and reachability-regular~PA is $\coNP$-complete.
\end{lem}

We continue with Büchi~PA, as their intersection-emptiness problems has not been studied in the literature before. Their intersection-emptiness problem remains $\coNP$-complete; however, we need more sophisticated methods to proof this statement.
\begin{lemma}
\label{lem:interBuchi}
    Intersection-emptiness for Büchi~PA is $\coNP$-complete.
\end{lemma}
\begin{proof}
    The lower bound follows again from their $\coNP$-complete emptiness problem.
    
    We give a proof sketch showing that intersection non-emptiness for Büchi~PA is in $\NP$ by utilizing a recent result essentially stating that Ramsey-quantifiers in Presburger formulas can be eliminated in polynomial time~\cite{ramsey}. The authors show how to use the Ramsey-quantifier to check liveness properties for systems with counters. In particular, the existence of an accepting run of a Büchi~PA (answering the question whether the accepted $\omega$-language is non-empty) can be expressed with a Presburger formula with a Ramsey-quantifier.
    Hence, checking if the intersection of the two $\omega$-languages recognized by two Büchi~PA can be tested by intersecting two Presburger-formulas and moving the quantifiers to the front. We refer to Sections 4.1 and 8.2 in~\cite{ramsey} for more information.
\end{proof}

We conclude by showing that intersection emptiness is undecidable for strong reset~PA.
The result relies on the fact that the intersection of two such languages can encode non-terminating computations of two-counter machines.

A two-counter machine $\Mmc$ is a finite sequence of instructions
\[(1 : \Isf_1)(2 : \Isf_2) \dots (k-1 : \Isf_{k-1})(k : \mathsf{STOP})\]
where the first component of a pair $(\ell, \Isf_\ell)$ is the line number, and the second component is the instruction in line $\ell$.
An instruction is of one of the following forms:
\begin{itemize}
    \item $\mathsf{Inc}(Z_i)$, where $i = 0$ or $i = 1$.
    \item $\mathsf{Dec}(Z_i)$, where $i = 0$ or $i = 1$.
    \item $\mathsf{If}\ Z_i = 0\ \mathsf{goto}\ \ell'\ \mathsf{else}\ \ell''$, where $i = 0$ or $i = 1$, and $\ell', \ell'' \leq k$.
\end{itemize}

Instructions of the first or second form are called increments resp.\ decrements, while instructions of the latter form are called zero-tests.
A configuration of $\Mmc$ is a tuple $c = (\ell, z_0, z_1)$, where $\ell \leq k$ is the current line number, and $z_0, z_1 \in \Nbb$ are the current counter values of $Z_0$ and $Z_1$ respectively. We say $c$ \emph{derives} into its unique successor configuration $c'$, written $c \vdash c'$, as follows.

\begin{itemize}
    \item If $\Isf_\ell = \mathsf{Inc}(Z_0)$, then $c' = (\ell + 1, z_0 + 1, z_1)$.
    \item If $\Isf_\ell = \mathsf{Inc}(Z_1)$, then $c' = (\ell + 1, z_0, z_1 + 1)$.
    \item If $\Isf_\ell = \mathsf{Dec}(Z_0)$, then $c' = (\ell + 1, \max\{z_0 - 1, 0\}, z_1)$.
    \item If $\Isf_\ell = \mathsf{Dec}(Z_1)$, then $c' = (\ell + 1, z_0, \max\{z_1 - 1, 0\})$.
    \item If $\Isf_\ell = \mathsf{If}\ Z_0 = 0\ \mathsf{goto}\ \ell'\ \mathsf{else}\ \ell''$, then $c' = (\ell', z_0, z_1)$ if $z_0 = 0$, and $c' = (\ell'', z_0, z_1)$ if $z_0 > 0$.
    \item If $\Isf_\ell = \mathsf{If}\ Z_1 = 0\ \mathsf{goto}\ \ell'\ \mathsf{else}\ \ell''$, then $c' = (\ell', z_0, z_1)$ if $z_1 = 0$, and $c' = (\ell'', z_0, z_1)$ if $z_1 > 0$.
    \item If $\Isf_\ell = \mathsf{STOP}$, then $c$ has no successor configuration.
\end{itemize}

The unique computation of $\Mmc$ is a finite or infinite sequence of configurations $c_0 c_1 c_2 \dots$ such that $c_0 = (1, 0, 0)$ and $c_i \vdash c_{i+1}$ for all $i \geq 0$.
Observe that the computation is finite if and only if the instruction $(k : \mathsf{STOP})$ is reached. If this is the case, we say $\Mmc$ terminates.
Given a two-counter machine $\Mmc$, it is undecidable to decide whether $\Mmc$ terminates~\cite{minsky}.

In the following we assume without loss of generality that our two-counter machines satisfy the guarded-decrement property~\cite{infiniteZimmermann}, which guarantees that every decrement does indeed change a counter value: every decrement $(\ell : \mathsf{Dec}(Z_i))$ is preceded by a zero-test of the form $(\ell - 1, \mathsf{If}\ Z_i = 0\ \mathsf{goto}\ \ell+1\ \mathsf{else}\ \ell)$.
Note that this modification does not change the termination behavior of a two-counter machine, as decrementing a counter whose value is already zero does not have an effect.

\begin{lemma}
\label{lem:interReset}
    The intersection emptiness problem for reset~PA is undecidable.
\end{lemma}
\begin{proof}
We can encode infinite computations of two-counter machines as infinite words over $\Sigma = \{a,b, 1, 2, \dots, k\} \cup \Sigma_\Isf$, where $\Sigma_\Isf = \{I_a, I_b, D_a, D_b, Z_a, Z_b, \bar{Z}_a, \bar{Z}_b\}$. The idea is as follows. Let $c = (\ell, z_0, z_1)$ be a configuration of $\Mmc$. We encode $c$ as a finite word $w_c = \ell u x \in \Sigma^*$, where $\ell \in \{1,2, \dots, k\}$ encodes the current line number, $u \in \{a,b\}^*$ with $|u|_a = z_0$ and $|u|_b = z_1$ encodes the current counter values, and $x \in \Sigma_\Isf$ encodes the instruction $\Isf_\ell$ of line $\ell$ as follows: 
\begin{itemize}
    \item If $\Isf_\ell = \mathsf{Inc}(Z_0)$, then $x = I_a$, and if $\Isf_\ell = \mathsf{Inc}(Z_1)$, then $x = I_b$.
    \item If $\Isf_\ell = \mathsf{Dec}(Z_0)$, then $x = D_a$, and if $\Isf_\ell = \mathsf{Dec}(Z_1)$, then $x = D_b$.
    \item If $\Isf_\ell = \mathsf{If}\ Z_0 = 0\ \mathsf{goto}\ \ell'\ \mathsf{else}\ \ell''$, and the line number of the unique successor configuration of $c$ is $\ell'$, then $x = Z_a$ (that is, the zero-test is successful). Analogously with $x = Z_b$.
    \item If $\Isf_\ell = \mathsf{If}\ Z_0 = 0\ \mathsf{goto}\ \ell'\ \mathsf{else}\ \ell''$, and the line number of the unique successor configuration of $c$ is $\ell''$, then $x = \bar{Z}_a$ (that is, the zero-test fails). Analogously with $x = \bar{Z}_b$.
\end{itemize}

Let $w_c, w_{c'} \in \Sigma^*$ be two words encoding two configurations of~$\Mmc$. We call $w_c \cdot w_{c'}$ \emph{correct} if $c \vdash c'$. Hence, we can encode a unique infinite computations $c_0 c_1 c_2 \dots$ as an infinite word $w_{c_0} w_{c_1} w_{c_2} \dots$. We show that the $\omega$-language $L = \{w_{c_0} w_{c_1} w_{c_2} \dots \}$ can be written as the intersection of two deterministic strong reset~PA $\omega$-languages. Let
\begin{align*}
L_1 &= \{w_{c_0} w_{c_1} w_{c_2} \dots \mid w_{c_{2i}} w_{c_{2i+1}}\text{ is correct for every } i \geq 0\}, \text{ and} \\
L_2 &= \{w_{c_0} w_{c_1} w_{c_2} \dots \mid w_{c_{2i+1}} w_{c_{2i+2}} \text{ is correct for every } i \geq 0\}. 
\end{align*}

Observe that $L_1 \cap L_2 = L$, and $L$ is empty if and only if the unique computation of $\Mmc$ terminates.
Hence, it remains to show that $L_1$ and $L_2$ are recognized by deterministic strong reset~PA.
We argue for~$L_1$; the argument for $L_2$ is very similar.
The idea is as follows: We construct a deterministic strong reset~PA $\Amc_1$ with five counters that tests the correctness of two consecutive encodings of configurations, say $w_{c_{2i}} \cdot w_{c_{2i + 1}} = \ell_1 u_1 x_1 \cdot \ell_2 u_2 x_2$ with $\ell_1, \ell_2 \in \{1, 2, \dots, k\}$, $u_1 u_2 \in \{a,b\}^*$ and $x_1, x_2 \in \Sigma_\Isf$.
First observe that checking whether $\ell_2$ is indeed the correct line number (that is, the correct successor of~$\ell_1$) can be hard-coded into the state space of $\Amc_1$: if $x_1$ encodes an increment or decrement, we expect $\ell_2 = \ell_1 + 1$, and if $x_1$ encodes a successful or failing zero-test $\mathsf{If}\ Z_i = 0\ \mathsf{goto}\ \ell'\ \mathsf{else}\ \ell''$, we expect $\ell_2 = \ell'$ or $\ell_2 = \ell''$, respectively.
Four counters of $\Amc_1$ are used to count the numbers of $a$'s and $b$'s in~$u_1$ and~$u_2$, respectively. Then, if $x_1 = I_a$, we expect $|u_2|_a = |u_1|_a + 1$ and $|u_2|_b = |u_1|_b$, and so on.
To be able to perform the correct check, we also encode $x_1$ into the state space as well as the fifth counter by counting modulo~$|\Sigma_\Isf|$.
Observe that the guarded-decrement property ensures that decrements are handled correctly.
Hence, $\Amc_1$ has two sets of states counting the numbers of $a$'s and $b$'s of~$u_1$ and~$u_2$, accordingly, as well as a set of accepting states that is used to check the counter values. 

After such a check, $\Amc_1$ resets, and continues with the next two (encodings of) configurations. The automaton for $L_2$ works in the same way, but skips the first configuration. 
\end{proof}

\section{Deterministic Parikh Automata on Infinite Words}
\label{sec:det}
In this section, we study the deterministic variants of Parikh automata on infinite words.
First, we study the closure properties of the newly introduced models, yielding the foundation in order to investigate the expressiveness of the models.
The main part of this section is devoted to the decision problems; in particular we focus on the core problems for model checking that, in contrast to the non-deterministic variants, are decidable for the deterministic variants and present algorithms for these problems.

\subsection{Closure Properties}
\label{sec:closure}

%\textcolor{red}{We can include the closure properties of the non-deterministic models.}

We now study the closure properties of the deterministic variants of the models introduced by Grobler et al.~\cite{grobler2023remarks}, that is, deterministic limit~PA, deterministic reachability-regular~PA, deterministic strong reset~PA, and deterministic weak reset~PA.

It is well known that semi-linear sets over $\Nbb^d$ are closed under complement~\cite{semilin}; see also~\cite{haase}.
Before we study deterministic limit automata we show that this is also true for semi-linear sets enriched with $\infty$. 

\begin{lemma}\label{lem:complement-semi-linear-with-infty}
	Let $C \subseteq \Nbbinfty^d$ be a semi-linear set. Then the complement $\bar{C} = \Nbbinfty^d \setminus C$ is semi-linear.
\end{lemma}
\begin{proof}
	Let $f : \Nbbinfty \rightarrow\Nbb$ be the bijection with $f(\infty) = 0$ and $f(i) = i+1$ for $i \in \Nbb$.
	We extend $f$ to vectors \mbox{$\vbf = (v_1, \dots, v_d) \in \Nbbinfty^d$} and sets of vectors $C \subseteq \Nbbinfty^d$ component-wise: $f(v_1, \dots, v_d) = (f(v_1), \dots, f(v_d))$ and $f(C) = \{f(\vbf) \mid \vbf \in C\}$. Note that \mbox{$f(C)\subseteq \Nbb^d$}. 
	
	Now we observe that $f(C)$ is semi-linear if and only if $C$ is semi-linear. First assume that $C$ is semi-linear. We may assume that $C = C(\bbf, P)$ is linear, as we can carry out the following procedure for every linear set individually. Assume $\bbf = (b_1, \dots, b_d)$. We define $D_\infty(\bbf) = \{i \mid b_i = \infty\}$.
	For a set $D \subseteq \{1, \dots, d\}$ with $D_\infty(\bbf) \subseteq D$ and a vector $\vbf = (v_1, \dots, v_d) \in \Nbbinfty^d$, let $\vbf^D = (v_1^D, \dots, v_d^D)$ with $v_i^D = 0$ if $i \in D$ and $v_i^D = v_i$ if $i \notin D$. 
	Furthermore, we call a subset $P' \subseteq P$ of period vectors \emph{$D$-compatible} if for every $i \in D \setminus D_\infty(\bbf)$, the set $P'$ contains at least one vector where the $i$th component is~$\infty$, and if for every $i \notin D$, the set $P'$ contains no vector where the $i$th component is $\infty$. 
	Observe that this definition ensures that for every vector $\vbf \in P'$ we have $\vbf^D \in \Nbb^d$, that is, no component in $\vbf^D$ is $\infty$. 
	Let $\Pmc^D_\infty \subseteq 2^P$ be the collection of $D$-compatible subsets of~$P$.
	%Then we have $f(C) = \bigcup_{D_\infty(\bbf) \subseteq D \subseteq \{1, \dots, d\}} \bigcup_{P \in \Pmc^D_\infty} \{\bbf^D + \1^D + \sum_{\pbf \in P} \pbf^D + \sum_{\pbf \in P} \pbf^D z_\pbf \mid z_\pbf \in \Nbb, \pbf\in P\}$ (where $\1$ denotes the all-one vector),
	Then we have 
	\[f(C) = \bigcup_{D_\infty(\bbf) \subseteq D \subseteq \{1, \dots, d\}} \bigcup_{P' \in \Pmc^D_\infty} C(\bbf^D + \1^D + \sum_{\pbf \in P'} \pbf^D, \{\pbf^D \mid \pbf \in P'\}),\]
	which is semi-linear by definition.
	
	For the other direction, we may again assume that $f(C) = C(\bbf, P)$ is linear.
	Similar to above, assume $\bbf = (b_1, \dots, b_d)$ and define $D_0(\bbf) = \{i \mid b_i = 0\}$.
	For a set $D \subseteq D_0(\bbf)$ we call a subset $P' \subseteq P$ of period vectors \emph{$D$-safe} if for every $i \in D$ the set $P'$ contains no vector where the $i$th component is not 0, and if for every $i \notin D$ the set $P'$ contains at least one vector where $i$th component is not 0.
	Let $\Pmc^D_0 \subseteq 2^P$ be the collection of $D$-safe subsets of $P$.
	%For every subset $D \subseteq D'$ we define $C_D$ to be the subset of $C$ such that all vectors in $C_D$ have $\infty$ at position $i$ for all $i \in D$.
	%Similar to above, for a subset $D \subseteq D'$ let $B^D_{>0}$ be the collection of sets of period vectors such that every set contains at least one period vector with a non-zero entry at position $i$ for every $i \in D$. Furthermore, let $B_{|D}$ be the subset of period vectors with only zero entries at position $i$ for every $i \in D$. 
	Let $\ibf_D = (v_1, \dots, v_d)$ with $v_i = \infty$ if $i \in D$ and $v_i = 0$ if $i \notin D$.
	Then we have 
	\[C = \bigcup_{D \subseteq D_0(\bbf)} \bigcup_{P' \in \mathcal{P}^D_0} C(\ibf_D + \bbf - \1 + \sum_{\pbf \in P'} \pbf, P'),\] 
	which is semi-linear by definition (we assume $\infty-1=\infty$). Observe that every component in $\ibf_D + \bbf + \sum_{\pbf \in P'} \pbf$ is strictly greater 0, hence we can subtract $\1$ without getting negative.
	
	As semi-linear sets over $\Nbb^d$ are closed under complement, we have 
	\begin{align*}
		\text{$C$ is semi-linear} & \Leftrightarrow \text{$f(C)$ is semi-linear} \\
		& \Leftrightarrow \text{$\Nbb^d \setminus f(C)$ is semi-linear}\\
		& \Leftrightarrow \text{$f^{-1}(\Nbb^d \setminus f(C)) = \overline{C}$ is semi-linear}.
	\end{align*}
\end{proof}

\begin{lemma}\label{lem:closure-limit}
	The class of deterministic limit~PA recognizable languages is closed under union, intersection and complement. 
\end{lemma}
\begin{proof}
	First observe that we can always assume that every state of a (deterministic) limit~PA is accepting, as we can check the existence of an accepting state being visited infinitely often in the semi-linear set. To achieve this, we introduce one new counter and increment it at every transition that points to an accepting state. In the semi-linear set we enforce that this counter is $\infty$.
	
	Hence, we can show the closure under union and intersection by a standard product construction. In case of union, we test whether at least one automaton has good counter values, and we can show the closure under intersection by testing whether both automata have good counter values.
	Closure under complement follows immediately from \Cref{lem:complement-semi-linear-with-infty}.
\end{proof}

Following the standard proof showing that $L_{a<\infty} = \{\alpha \in \{a,b\}^\omega \mid |\alpha|_a < \infty\}$ is not deterministic $\omega$-regular, we make 
%and make constantly use of 
the following observation. 

%The proof that these models do not recognize $L_{a<\infty}$ mimics the standard proof showing that this $\omega$-language is not deterministic $\omega$-regular, see \mbox{\eg~\cite{thomasinfinite}}.

\begin{observation}
	\label{lem:not_all_reg}
	There is no deterministic reachability-regular~PA, deterministic weak reset~PA or deterministic strong reset~PA recognizing the $\omega$-regular language $L_{a<\infty}$.
	%$ = \{\alpha \in \{a,b\}^\omega \mid |\alpha|_a < \infty\}$.
\end{observation}

Observe however, that the complement $L_{a=\infty} = \{\alpha \in \{a,b\}^\omega \mid |\alpha|_a = \infty\}$ of $L_{a<\infty}$ is recognized by all of these models.
%\pagebreak

\begin{lemma}\label{lem:non-closure-reach-reg}
	The class of deterministic reachability-regular~PA recognizable languages is not closed under union, intersection or complement. 
\end{lemma}
\begin{proof}
	First we show non-closure under union.
	Let 
	$$L_1 = \{u c \alpha \mid u \in \{a,b,c\}^*, |u|_a = |u|_b, |\alpha|_c = \infty\}$$ and 
	$$L_2 = \{v a \beta \mid v \in \{a,b,c\}^*, |v|_b = |v|_c, |\beta|_a = \infty\}.$$
	
	Both languages are deterministic reachability-regular~PA recognizable, as witnessed by the automaton in \Cref{fig:limitunion} and the fact that $L_2$ can be obtained from $L_1$ by shuffling symbols. 
	We show that the language $L_1 \mathop{\cup} L_2$ is not deterministic reachability-regular~PA recognizable.
	Assume there is an $n$-state deterministic reachability-regular~PA $\Amc$ recognizing $L_1 \cup L_2$.
	Then there is a unique accepting run $r_1 r_2 \dots$ of $\Amc$ on $abc^\omega$. In particular, at some point the automaton verifies the Parikh condition, say after using the transition $r_i$. Let $m = \max\{n+1, i\}$ and consider the infinite word $abc^m b^{m-1} a^\omega$
	with the unique accepting run $r'_1 r'_2 \dots$ of $\Amc$. Due to determinism, we have $r_j = r'_j$ for all $j \leq m+2$; in particular, the automaton verifies the Parikh condition within this run prefix. As $m-1 \geq n$, the automaton visits a state, say $q$ twice while reading~$b^{m-1}$. Hence, we can pump this $q$-cycle and obtain an accepting run of~$\Amc$ on $abc^m b^{m-1+k} a^\omega$ for some $k > 0$, a contradiction. 
	
	For the non-closure under intersection, define 
	\[L_1 = \{\alpha \mid |\alpha[1:i]|_a = |\alpha[1:i]|_b \text{ for some }i \}\] 
	and 
	\[L_2 = \{\alpha \mid |\alpha[1:i]|_a = |\alpha[1:i]|_c \text{ for some }i \}.\]
	Suppose there is an $n$-state deterministic reachability-regular~PA recognizing $L_1 \cap L_2$. Let $\alpha = a(a^n b^n)^{n+1} c^{n(n+1) + 1} a^\omega$. The unique run~$r$ of $\Amc$ on $\alpha$ is not accepting, as $\alpha$ has no balanced $a$-$b$ prefix.
	Observe that~$\Amc$ visits at least one state twice while reading a $b^n$-block.
	Furthermore, there is a state, say~$q$, such that~$\Amc$ visits~$q$ twice while reading two different $b^n$-blocks, as there are $n+1$ different $b^n$-blocks.
	Hence, we can swap the latter~$q$-cycle to the front and obtain an infinite word, say $\alpha'$ in $L_1 \cap L_2$, with a unique accepting run~$r'$ of $\Amc$. This run verifies the Parikh condition at some point. We distinguish two cases.
	If~$r'$ verifies the Parikh condition before reading the first $c$, we can depump the $c^{n(n+1)+1}$-block and obtain an accepting run on an infinite word without a balanced $a$-$c$-prefix, a contradiction.
	Hence assume that~$r'$ verifies the Parikh condition after reading at least one~$c$, say at position~$k$. However, then we have $\rho(r[1:k]) = \rho(r'[1:k])$, and hence $\Amc$ also accepts $\alpha$, a contradiction.
	
	The non-closure under complement follows immediately from \Cref{lem:not_all_reg}.
	\begin{figure}
		\centering
		\begin{tikzpicture}[->,>=stealth',shorten >=1pt,auto,node distance=3.5cm, semithick]
			\tikzstyle{every state}=[minimum size=1cm]
			
			\node[state, initial, initial text={}] (q0) {$q_0$};	
			\node[state, accepting] (q1) [right of=q0] {$q_1$};
			
			\path
			(q0) edge [loop above, align=center] node {$a, \begin{pmatrix}1\\0\end{pmatrix};$\ \ $b, \begin{pmatrix}0\\1\end{pmatrix}$} (q0)
			(q0) edge [bend left]  node {$c, \0$} (q1)
			(q1) edge [loop above] node {$c, \0$} (q1)
			(q1) edge [bend left, align=center] node {$a, \begin{pmatrix}1\\0\end{pmatrix}$;\ \ $b, \begin{pmatrix}0\\1\end{pmatrix}$} (q0)	
			;
		\end{tikzpicture}
		\caption{The deterministic reachability-regular~PA with $C = C(\0, \{\1\})$ \\ for $L_1 = \{u c \alpha \mid u \in \{a,b,c\}^*, |u|_a = |u|_b, |\alpha|_c = \infty\}$.}
		\label{fig:limitunion}
	\end{figure}
\end{proof}

\begin{lemma}\label{lem:non-closure-weak-reset}
	The class of deterministic weak reset~PA recognizable languages is not closed under union, intersection or complement. 
\end{lemma}
\begin{proof}
	We begin with the non-closure under union. Let 
	\[L_{a=b} = \{\alpha \mid |\alpha[1:i]|_a = |\alpha[1:i]|_c \text{ for $\infty$ many }i \}\] and similarly define 
	\[L_{a=c} = \{\alpha \mid |\alpha[1:i]|_a = |\alpha[1:i]|_c \text{ for $\infty$ many }i.\]
	We show that $L_{a=b} \cup L_{a=c}$ is not deterministic weak reset~PA recognizable.
	Assume there is a deterministic weak reset~PA $\Amc$ recognizing $L_{a=b} \cup L_{a=c}$. Consider the unique accepting run $r$ of $\Amc$ on $\alpha = (ab)^\omega$ and let $i,j$ be two positions with $i+1 < j$ such that~$r$ resets after reading $\alpha[1:i]$ and $\alpha[1:j]$ in the same state (such a pair of positions does always exists by the infinite pigeonhole principle).
	Now consider the infinite word $\alpha[1:i] c^{|\alpha[1:i]|_a} (ac)^\omega$, which is also accepted by $\Amc$. However, this implies that $\Amc$ also accepts $\alpha[1:j] c^{|\alpha[1:i]|_a} (ac)^\omega$, as~$\Amc$ is in the same (accepting) state after reading $\alpha[1:i]$ as well as $\alpha[1:j]$, but this infinite word is not contained in $L_{a=b} \cup L_{a=c}$, as $\alpha[1:j]$ contains at least one more $a$ than $\alpha[1:i]$, a contradiction.    
	
	The argument for the non-closure under intersection is the same as for the non-deterministic setting, as the $\omega$-languages considered in~\Cref{lem:intersectionNondet} are indeed deterministic weak reset~PA recognizable.
	
	The non-closure under complement is an immediate consequence of~\Cref{lem:not_all_reg}.
\end{proof}

\begin{lemma}\label{lem:non-closure-strong-reset}
	The class of deterministic strong reset~PA recognizable languages is not closed under union, intersection or complement. 
\end{lemma}
\begin{proof}
	We begin with the non-closure under union. Let $L = \{c^* a^n c^* b^n \mid n > 0\}^\omega$ and $L_{c=\infty} = \{\alpha \mid |\alpha|_c = \infty\}$.
	Observe that $a^n c^\omega \in L \cup L_{c=\infty}$ for every $n \geq 0$.
	Assume there is a deterministic strong reset~PA recognizing $L \cup L_{c=\infty}$. Let $n_1 \neq n_2$ be such that the unique accepting runs of $\Amc$ on $\alpha_1 = a^{n_1} c^\omega$ resp.\ $\alpha_2 = a^{n_2} c^\omega$ reset in the same state the first time they reset after reading at least one~$c$, say after reading $\alpha_1[1:i_1]$ resp.\ $\alpha_2[1:i_2]$ (with $i_1 > n_1$ and $i_2 > n_2$).
	As $a^{n_1} c^{i_1-n_1} b^{n_1} (ab)^\omega$ is also accepted by $\Amc$, the infinite word $a^{n_2} c^{i_2 - n_2} b^{n_1} (ab)^\omega$ is also accepted by $\Amc$, a contradiction.
	
	To show the non-closure under intersection, we use an argument similar to the non-deterministic setting. Let \mbox{$L_1 = \{a^n b^n \mid n > 0\}^\omega$} and $L_2 = \{a\} \{b^n a^{2n} \mid n > 0\}$. Then $L_1 \cap L_2$ contains only one infinite word, namely $ab a^2 b^2 a^4 b^4 \dots$. Hence $L_1 \cap L_2$ is not ultimately periodic and hence not accepted by any strong reset~PA~\cite{grobler2023remarks}.
	
	The non-closure under complement again follows from~\Cref{lem:not_all_reg}.
\end{proof}

\subsection{Expressiveness}
\label{sec:expressiveness}

In this section we study the expressiveness of deterministic~PA on infinite words for those models whose deterministic variants were not studied before in the literature.

First we remark that deterministic reachability-regular~PA, deterministic limit~PA, deterministic strong reset~PA and deterministic weak reset~PA are strictly weaker than their non-deterministic counterparts. This follows immediately from their different closure properties: reachability-regular~PA, weak reset~PA (and hence strong reset~PA) are closed under union (see \Cref{cor:union}), and limit~PA are not closed under complement (see \Cref{lem:intersectionNondet}).
Hence, from the results of the previous section we obtain the following corollary.

\begin{corollary}
	The following strict inclusions hold.
	\begin{itemize}
		\item Deterministic reachability-regular~PA $\subsetneq$ Reachability-regular~PA.
		\item Deterministic limit~PA $\subsetneq$ Limit~PA.
		\item Deterministic strong reset~PA $\subsetneq$ Reset~PA.
		\item Deterministic weak reset~PA $\subsetneq$ Reset~PA.
	\end{itemize}
\end{corollary}

In the following,
when we show non-inclusions, we always give the strongest separation, \eg, when showing that a deterministic strong reset~PA cannot simulate a deterministic weak reset~PA, we show that it can not even simulate a deterministic reachability~PA, which is weaker than a deterministic weak reset~PA. We refer to \Cref{fig:inclusions} for an overview of the results in this section. We simply write that a model is a strict/no subset of another model; by that we mean that the class of $\omega$-languages recognized by the first model is a strict/no subset of the class of $\omega$-languages recognized by the second model.

\begin{figure}
	\centering
	\begin{tikzpicture}[%
		node distance=27mm,>=Latex,
		initial text="", initial where=below left,
		every state/.style={rectangle,rounded corners,draw=black,thin,fill=black!5,inner sep=1mm,minimum size=6mm},
		every edge/.style={draw=black,thin}
		]
		\node[state] (reach) {rechability~PA};
		\node[state, right = 2cm of reach, align=center]  (reachreg) {reachability-regular~PA};
		\node[state, right= 2cm of reachreg] (wreset) {weak reset~PA};
		
		\node[state,below = 1cm of reach] (reg) {(non-det.) $\omega$-regular};
		\node[state,right = 2cm of reg] (limit) {limit~PA};
		\node[state,right = 2cm of limit] (sreset) {strong reset~PA};
		
		\node[state,below = 1cm of reg] (safety) {safety~PA};
		\node[state,right = 2cm of safety] (cobuchi) {co-Büchi~PA};
		\node[state,right = 2cm of cobuchi] (buchi) {Büchi~PA};
		
		\path[-{Latex}]
		(reach) edge (reachreg)
		(reachreg) edge (wreset)
		(sreset) edge (wreset)
		(reg) edge (limit)
		;

	\end{tikzpicture}        
	\caption{Inclusion diagram of the studied \emph{deterministic} models. Arrows indicate strict inclusions while no connections mean incomparability.}
	\label{fig:inclusions}
\end{figure}

\subsubsection{\boldmath\texorpdfstring{$\omega$}{omega}-regular languages}

We begin by showing that every $\omega$-regular language is deterministic limit~PA recognizable. 

\begin{lemma}
	%    The class of $\omega$-regular languages is a strict subclass of the class of deterministic limit~PA recognizable $\omega$-languages.
	$\omega$-regular $\subsetneq$ deterministic limit~PA.
\end{lemma}
\begin{proof}
	Let $L \subseteq \Sigma^\omega$ be $\omega$-regular and let $\Amc = (Q, \Sigma, q_0, \Delta, \Fmc)$ be a deterministic Muller automaton recognizing $L$.
	The idea is to construct an equivalent deterministic limit~PA $\Amc' = (Q, \Sigma, q_0, \Delta', Q, C)$ of dimension $|Q|$, where every state is accepting, while encoding the sets in $\Fmc$ into the semi-linear set $C$.
	Let $f : Q \rightarrow \{1, \dots, |Q|\}$ be a bijection associating every state with a counter. Hence, we define $\Delta' = \{(p, a, \ebf_{f(q)}^{|Q|}, q) \mid (p, a, q) \in \Delta\}$.
	%where $\ebf_{f(q)}^{|Q|}$ is the $|Q|$-dimensional vector where the $f(q)$-th component is $1$, and every other entry is 0. 
	%We define $C$ as follows.
	%For every $F \in \Fmc$, define $C_F = \{ \sum_{q \in F} \ibf_{f(q)} + \sum_{q \notin F} \ebf_{f(q)} z_q \mid z_q \in \Nbb, q \notin F\}$, where $\ibf_{f(q)}$ is the vector where the $f(q)$-th component is $\infty$, and every other entry is 0. 
	For every $F \in \Fmc$, we define $C_F = C(\sum_{q \in F} \ibf_{f(q)}, \{\ebf_{f(q)} \mid q \notin F\})$.
	That is, for every state in $F$ we expect its counter value to be $\infty$, while we expect every other counter value to be a finite number. We choose $C = \bigcup_{F \in \Fmc}C_F$ and hence obtain an equivalent deterministic limit~PA.
	
	The strictness is witnessed by the $\omega$-language $\{a^n b^n c^\omega \mid n > 0\}$, which is obviously deterministic limit~PA recognizable, but not $\omega$-regular.
\end{proof}

\Cref{lem:not_all_reg} immediately yields the following result.

\begin{cor}
	%    The class of $\omega$-regular languages is not a subclass of any of the the classes of $\omega$-languages recognized by deterministic reach-reg.~PA, deterministic Büchi~PA, deterministic strong reset~PA, or deterministic weak reset~PA.
	%$\omega$-regular $\not\subseteq$ deterministic reach-reg.~PA, deterministic Büchi~PA, deterministic strong reset~PA, deterministic weak reset~PA
	\mbox{}\\[1mm]
	\begin{tikzpicture}
		\node[anchor=west] at (-0.45,0) {$\omega$-regular};

		\node at (1.6, 0) {$\not\subseteq$};
		
		\draw [decorate, thick,
		decoration = {calligraphic brace, raise = 2pt, amplitude = 4pt,mirror}] (2.1,0.8) --  (2.1,-0.8);
		
		\node[anchor=west] at (2.2,0.65) {Deterministic reachability-regular~PA};
		\node[anchor=west] at (2.2,0.25) {Deterministic Büchi~PA};
		\node[anchor=west] at (2.2,-0.25) {Deterministic strong reset~PA};
		\node[anchor=west] at (2.2,-0.65) {Deterministic weak reset~PA};
	\end{tikzpicture}
\end{cor}

%\begin{proof}
%    None of these~PA recognize $L_{a<\infty} = \{\alpha \in \{a,b\}^\omega \mid |\alpha|_a < \infty\}$, which is obviously $\omega$-regular. The proof that these models do not recognize $L_{a<\infty}$ mimics the standard proof showing that this $\omega$-language is not deterministic $\omega$-regular, see \mbox{\eg~\cite{thomasinfinite}}.
%\end{proof}
Observe however that these models generalize deterministic Büchi automata. This is not true for deterministic reachability~PA, deterministic safety~PA nor deterministic \mbox{co-Büchi~PA}, as shown in the next lemma. 

\begin{lemma}\label{lem:safety-cobucki}
	%The class of deterministic $\omega$-regular languages is not a subclass of any of the the classes of $\omega$-languages recognized by deterministic reach~PA, deterministic safety~PA or deterministic \mbox{co-Büchi~PA}.
	%Deterministic $\omega$-regular $\not\subseteq$ deterministic reach~PA, deterministic safety~PA, deterministic \mbox{co-Büchi~PA}
	\mbox{}\\[1mm]
	\begin{tikzpicture}
		\node[anchor=west] at (-1,0) {Deterministic $\omega$-regular};

		\node at (3.4, 0) {$\not\subseteq$};
		
		\draw [decorate, thick,
		decoration = {calligraphic brace, raise = 2pt, amplitude = 4pt,mirror}] (3.9,0.66) --  (3.9,-0.66);
		
		\node[anchor=west] at (4,0.47) {Deterministic reachability~PA};
		\node[anchor=west] at (4,0) {Deterministic safety~PA};
		\node[anchor=west] at (4,-0.4) {Deterministic co-Büchi~PA};
	\end{tikzpicture}
\end{lemma}
\begin{proof}
	As an immediate consequence of \Cref{lem:char-det-reach} (proved below) we have that no deterministic reachability~PA recognizes the deterministic \mbox{$\omega$-regular} language $a^* b^\omega$.
	
	The two other claims follow from~\cite[Proof of Theorem 3]{infiniteZimmermann}, where the authors have shown that (even non-deterministic) safety~PA do not recognize the deterministic $\omega$-regular language \mbox{$\{a,b\}^\omega \setminus \{a\}^\omega$} and that no co-Büchi~PA recognizes the deterministic $\omega$-regular language $L_{a=\infty}=\{\alpha \in \{a,b\}^\omega \mid |\alpha|_a = \infty\}$.
\end{proof}

\subsubsection{Deterministic Safety~PA and co-Büchi~PA}

As a consequence of \Cref{lem:safety-cobucki} we obtain the following corollary. 

\begin{cor}
	\mbox{}\\[1mm]
	\begin{tikzpicture}
		\node[anchor=west] at (-.7,-.2) {Deterministic reachability-regular~PA};
		\node[anchor=west] at (-.7,-0.6) {Deterministic limit~PA};
		%\node[anchor=west] at (0,-0.8) {Deterministic Büchi~PA};
		\node[anchor=west] at (-.7,-1.1) {Deterministic strong reset~PA};
		\node[anchor=west] at (-.7,-1.5) {Deterministic weak reset~PA};
		
		\draw [decorate,thick, 
		decoration = {calligraphic brace, raise = 2pt, amplitude = 4pt}] (5.9,0) --  (5.9,-1.7);
		
		\node at (6.4, -0.85) {$\not\subseteq$};
		
		\draw [decorate, thick,
		decoration = {calligraphic brace, raise = 2pt, amplitude = 4pt,mirror}] (6.9,-0.4) --  (6.9,-1.3);
		
		\node[anchor=west] at (6.9,-0.6) {Deterministic safety~PA};
		\node[anchor=west] at (6.9,-1.1) {Deterministic co-Büchi~PA};
	\end{tikzpicture}
\end{cor}

As shown in \cite{infiniteZimmermann} also deterministic reachability~PA $\not\subseteq$ deterministic safety~PA and deterministic reachability~PA $\not\subseteq$ deterministic co-Büchi~PA as well as deterministic Büchi~PA $\not\subseteq$ deterministic safety~PA and deterministic Büchi~PA $\not\subseteq$ deterministic co-Büchi~PA. Furthermore, the classes of deterministic safety~PA and deterministic co-Büchi~PA are themselves incomparable as shown in~\cite{infiniteZimmermann}. 

Vice versa, deterministic safety~PA $\not\subseteq$ non-deterministic weak reset~PA and deterministic co-Büchi~PA $\not\subseteq$ non-deterministic weak reset~PA~\cite{grobler2023remarks}. Hence, these classes are no subclasses of any of the other studied classes. 

Overall, deterministic safety~PA and deterministic co-Büchi~PA are incomparable with all other studied models. 

% As a consequence of this result and the results in~\cite{grobler2023remarks}, namely, that not even non-deterministic strong (equivalently non-determin\-istic weak) reset~PA generalize safety~PA nor co-Büchi~PA, and the separations shown in~\cite{infiniteZimmermann}, we obtain that deterministic safety~PA are incomparable to every other model, and determinisic co-Büchi are also incomparable to every other model.

\subsubsection{Deterministic Reachability~PA}

We begin by characterizing the class of deterministic reachability~PA recognizable $\omega$-languages.
\begin{lemma}\label{lem:char-det-reach}
	An $\omega$-language $L$ is deterministic reachability~PA recognizable if and only if $L = U \Sigma^\omega$, where $U \subseteq \Sigma^*$ is recognized by a deterministic~PA.
\end{lemma}
\begin{proof}
	Let $\Amc$ be a deterministic reachability~PA recognizing $L$. Then we have $L(\Amc) = U$. Likewise, if $\Amc$ is a~PA recognizing~$U$, then \mbox{$R_\omega(\Amc) = L$} (recall that \Amc is complete by the definition of determinism). 
\end{proof}

We have the following strict inclusion. 

\begin{lemma}
	\label{lem:reachToReachReg}
	%    The class of deterministic reach~PA recognizable $\omega$-lan\-guages is a strict subclass of the class of deterministic reach-reg.~PA recognizable \mbox{$\omega$-languages}.
	Deterministic reachability~PA $\subsetneq$ deterministic reachability-regular~PA.
\end{lemma}
\begin{proof}
	Let $\Amc$ be a deterministic reachability~PA. We may assume that every state of $\Amc$ is accepting, as we can project the current state into the semi-linear set. To be precise, we introduce two new counters for each state of $\Amc$, counting the number of visits and exits. Then, the current state is the (unique) state with one more visit than exit, or in case that the number of visits and exits is the same for every state, then the current state is the initial state of $\Amc$.
	As these statements can be encoded into a semi-linear set, we can assume that every state is equipped with its own semi-linear set, and can hence make every state accepting (and assign the empty set if we want to simulate a non-accepting state).
	If every state is accepting, then $\Amc$ is an equivalent deterministic reachability-regular~PA.
	
	The strictness is witnessed \eg, by the $\omega$-language $\{a^nb^na^\omega\mid n>0\}$, which is deterministic reachability-regular~PA recognizable and by \Cref{lem:char-det-reach} not deterministic reachability~PA recognizable.
\end{proof}

It remains to show the following incomparability results. 

\begin{lemma}
	\label{lem:reach-no-limit}
	%    The class of deterministic reachability~PA recognizable $\omega$-languages is no subclass of the class of deterministic limit~PA recognizable $\omega$-languages.
	Deterministic reachability~PA $\not\subseteq$ deterministic limit~PA.
\end{lemma}
\begin{proof}
	We show that the deterministic reachability~PA recognizable $\omega$-language
    \[L = \{\alpha \mid |\alpha[1:i]|_a = |\alpha[1:i]|_b \text{ for some }i > 0\}\]
    is not deterministic limit~PA recognizable. The proof is similar \cite[Lemma 3]{infiniteZimmermann}.
	Assume there is an $n$-state deterministic limit~PA $\Amc$ recognizing~$L$. Consider the unique non-accepting run of $\Amc$ on $a (a^n b^n)^\omega$.
	Observe that $\Amc$ visits at least one state twice while reading a $b^n$-block, and there are at least two of the (infinitely many) \mbox{$b^n$-blocks} such that $\Amc$ visits the same state, say $q$, twice while reading them. Hence, we can shift one such \mbox{$q$-cycle} to the front and obtain the unique run on an infinite word that is in~$L$. However, this run is still non-accepting, as the extended Parikh image and number of visits of an accepting state do not change.
\end{proof}

\begin{lemma}
	\label{lem:reach-no-strong-reset}
	%    The class of deterministic reachability~PA recognizable $\omega$-languages is no subclass of the class of deterministic strong reset~PA recognizable $\omega$-languages.
	Deterministic reachability~PA $\not\subseteq$ deterministic strong reset~PA.
\end{lemma}
\begin{proof}
	We show that the deterministic reachability~PA recognizable $\omega$-language 
	\[L = \{a^n b^n \mid n \geq 1\} \cdot \{a,b\}^\omega\]
	is not deterministic strong reset~PA recognizable. Assume there is a deterministic strong reset~PA~$\Amc$ with $n$ states recognizing $L$. Let $\alpha = a^n b^\omega$ with unique accepting run $r = r_1 r_2 r_3 \dots$ of $\Amc$ on~$\alpha$.
	Let $f_1, f_2, \dots$ be the sequence of reset positions of $r$ and let $i > n$ be minimal with $i = f_{i'}$ for some $i' \geq 1$ (that is, $f_{i'}$ is the first reset position after reading a~$b$).
	
	First observe that $i < 2n$. Assume that this is not the case. As~$\Amc$ visits at least one state twice while reading $b^n$, say state~$q$, we observe that $\Amc$ is caught in a $q \dots q$ cycle while reading $b^\omega$ due to determinism. That is, every state that is visited while reading~$b^\omega$ is already visited while reading the first $n$ many $b$s. Hence we have $i < 2n$.
	Now let $j \geq 2n$ be minimal such that $j = f_{j'}$ for some $j' > i'$ is a reset position in $r$ such that the state at position $f_{j'}$ is the same state as the one at position $f_{i'}$ (which exists by the same argument).
	
	Now let $\alpha' = a^n b^{j-n}a^\omega$ with unique accepting run $r' = r'_1 r'_2 r'_3 \dots$ of~$\Amc$ on $\alpha'$. 
	Observe that $\alpha[1:j] = \alpha'[1:j]$, and hence $r[1:j] = r'[1:j]$. 
	As the partial runs $r[1:i]$ and $r[1:j]$ reachability the same accepting state, the run $r[1:i]r'_{j+1}r'_{j+2} \dots$ is an accepting run of~$\Amc$ on $a^n b^{i-n} a^\omega$.
	However, as $i-n < n$, this infinite word is not contained in $L$, a contradiction.
\end{proof}

\subsubsection{Deterministic Reachability-Regular~PA}

We begin by showing that every deterministic reachability-regular~PA (and hence every deterministic reachability~PA) can be translated into an equivalent deterministic weak reset~PA.
\begin{lemma}
	\label{lem:reachRegToWeak}
	%The class of deterministic reach-reg.~PA recognizable $\omega$-lan\-guages is a strict subclass of the class of deterministic weak reset~PA recognizable $\omega$-languages.
	Deterministic reachability-regular~PA $\subsetneq$ deterministic weak reset~PA.
\end{lemma}
\begin{proof}
	Let $\Amc = (Q, \Sigma, q_0, \Delta, F, C)$ be a deterministic reachability-regular~PA and let $\Amc' = (Q \cup \{q_0'\}, \Sigma, q_0', \Delta', F, C')$ be a copy of $\Amc$ with a new fresh initial state $q_0'$ inheriting all outgoing transitions of $q_0$ (observe that this modification preserves determinism). We add one new counter that is incremented at every transition leaving $q_0'$, and not modified otherwise, that is, 
	\[
		\Delta' = \{(p, a, \vbf \cdot 0, q) \mid (p, a, \vbf, q) \in \Delta\}
		\cup \{(q_0', a, \vbf \cdot 1, q) \mid (q_0, a, \vbf, q) \in \Delta\}.
	\]
	
	We choose $C' = C \cdot \{1\} \cup \Nbb^d \cdot \{0\}$ and obtain an equivalent weak reset~PA $\Amc'$.
	
	The strictness is witnessed by the $\omega$-language \mbox{$\{a^n b^n \mid n > 0\}^\omega$}, which is obviously deterministic weak reset~PA recognizable, but not even recognized by (non-deterministic) Büchi~PA~\cite{infiniteZimmermann}, which are more expressive than reachability-regular~PA.
\end{proof}

\subsubsection{Deterministic Strong Reset~PA}

\begin{lemma}
\label{lem:detStrongResetVsdetWeakReset}
	%    The class of deterministic strong reset~PA recognizable $\omega$-lan\-guages is a strict subclass of the class of deterministic weak reset~PA recognizable $\omega$-lan\-guages.
	Deterministic strong reset~PA $\subsetneq$ deterministic weak reset~PA.
\end{lemma}
\begin{proof}
	The inclusion follows from \Cref{lem:SPBAtoWPBA}, as the construction preserves determinism.
	The strictness follows from the fact that $\{a^n b^n \mid n \geq 1\} \cdot \{a,b\}^\omega$ is deterministic reachability~PA recognizable, and hence deterministic weak reset~PA recognizable (by \Cref{lem:reachToReachReg} and \Cref{lem:reachRegToWeak}), but not recognized by any deterministic strong reset~PA, as shown in \Cref{lem:reach-no-strong-reset}.
\end{proof}

\begin{lemma}
	%    The class of deterministic strong reset~PA recognizable $\omega$-languages is no subclass of the class of deterministic Büchi~PA recognizable $\omega$-languages.
	Deterministic strong reset~PA $\not\subseteq$ deterministic Büchi~PA.
\end{lemma}
\begin{proof}
	The argument is as in \Cref{lem:reachRegToWeak}. The $\omega$-language $\{a^n b^n \mid n > 0\}^\omega$ is deterministic strong reset~PA recognizable, but there is no Büchi~PA recognizing it~\cite{infiniteZimmermann}.
	%This is witnessed by the $\omega$-language $\{a^n b^n \mid n > 0\}^\omega$, which is deterministic strong reset~PA recognizable but not even non-deterministic Büchi~PA recognizable. The latter claim follows from the fact that this language is not recognized by a blind counter automaton~\cite[Lemma 3.3]{blindcounter} and their equivalence to Büchi~PA~\cite[Lemma 21, Theorem 22]{grobler2023remarks} \textcolor{red}{we basically have this in Lemma 4.7}
\end{proof}

\begin{lemma}
	\label{lem:strong-reset-no-limit}
	%   The class of deterministic strong reset~PA recognizable $\omega$-languages is no subclass of the class of deterministic limit~PA recognizable $\omega$-languages.
	Deterministic strong reset~PA $\not\subseteq$ deterministic limit~PA.
\end{lemma}
\begin{proof}
	This follows from the previous proof as limit~PA are less expressive than Büchi~PA.
\end{proof}

\subsubsection{Deterministic Büchi~PA}
We show that $\omega$-languages recognized by deterministic Büchi~PA can be characterized in a similar way as deterministic $\omega$-regular languages.
\begin{lemma}
	An $\omega$-language $L$ is deterministic Büchi~PA recognizable if and only of $L = \vec{P}$ where $P$ is recognized by a deterministic~PA.
\end{lemma}
\begin{proof}
	Let $\Amc$ be a deterministic Büchi~PA recognizing $L$ and let $\alpha \in B_\omega(\Amc)$ with accepting run $r$. As $r$ has infinitely many accepting hits by definition, we have $\alpha \in \vec{L(\Amc)}$.
	Similarly, let $\Amc$ be a deterministic~PA recognizing $P$ and let $\alpha \in \vec{P}$. As $\Amc$ is deterministic, the unique run of $\Amc$ on $\alpha$ has infinitely many accepting hits, hence we have $\alpha \in B_\omega(\Amc)$.
\end{proof}

\begin{lemma}
	%   The class of deterministic Büchi~PA recognizable $\omega$-languages is no subclass of the class of deterministic limit~PA recognizable $\omega$-languages.
	Deterministic Büchi~PA $\not\subseteq$ deterministic limit~PA.
\end{lemma}
\begin{proof}
	The proof is almost identical to the proof of \Cref{lem:reach-no-limit}, but this time we consider the $\omega$-language $L_{a=b} = \{\alpha \mid |\alpha[1:i]|_a = |\alpha[1:i]|_b \text{ for $\infty$ many }i \}$. Then we can re-use the same argument as the constructed infinite word has indeed infinitely many balanced $a$-$b$ prefixes. 
\end{proof}

\begin{lemma}
	\label{lem:bukkiweak}
	%  The class of deterministic Büchi~PA recognizable $\omega$-languages is no subclass of the class of deterministic weak reset~PA recognizable $\omega$-languages.
	Deterministic Büchi~PA $\not\subseteq$ deterministic weak reset~PA.
\end{lemma}
\begin{proof}
	As shown in \Cref{lem:non-closure-weak-reset}, the $\omega$-language $L_{a=b} \cup L_{a=c}$ is not deterministic weak reset~PA recognizable. However, it is obviously deterministic Büchi~PA recognizable.
\end{proof}

We note however that the class of $\omega$-languages recognized by deterministic Büchi~PA with a \emph{linear} set form a subclass of the class of $\omega$-languages recognized by deterministic weak reset~PA with a linear set, as clarified in the following lemma.

\begin{lemma}
	Let $\Amc$ be a deterministic Büchi~PA with a \emph{linear set}~$C(\bbf, P)$. Then there is an equivalent deterministic weak reset~PA.
\end{lemma}
\begin{proof}
	First we observe that if $\bbf = \0$, then we have $B_\omega(\Amc) = WR_\omega(\Amc)$. To see this, let $\alpha \in B_\omega(\Amc)$ with (unique) accepting run~$r$. By Dickson's Lemma~\cite{dickson}, the run $r$ contains an infinite monotone sequence $s_1 < s_2 < \dots$ of accepting hits, that is, for all $i \geq 0$ we have $\rho(r[1:s_i]) \in C(\bbf , P)$ and for all $j > i$ we have $\rho(r[s_i+1: s_j]) \in C(\bbf, P)$. Hence, the run~$r$ also satisfies the weak reset condition.
	For the other direction let $\alpha \in WR_\omega(\Amc)$ with (unique) accepting run $r$ and reset positions $0 = k_0 < k_1 < k_2 \dots$. As we assume $\bbf = \0$, it is immediate that $\rho(r[1:k_i]) \in C(\bbf, P)$ for all $i \geq 1$. Hence, the run~$r$ also satisfies the Büchi condition.
	
	Finally we argue that we can always assume that $\bbf = \0$. Indeed, we can always encode~$\bbf$ into the state space of $\Amc$.
\end{proof}

\subsection{Decision Problems and Model Checking}
\label{sec:decision}

%\subsection{Decision Problems}
In this section, we study classical decision problems as well as the core problems for model checking for the deterministic variants. We repeat the problems for convenience. For an overview of the results in this section we refer to \Cref{tab:decision} and \Cref{tab:mc}.
\begin{itemize}
	\item Emptiness: given a~PA $\Amc$, is the $\omega$-language of $\Amc$ empty?
	\item Membership: given a~PA $\Amc$ and finite words $u, v$, does $\Amc$ accept $uv^\omega$?
	\item Universality: given a~PA $\Amc$, does $\Amc$ accept every infinite word?
	\item Inclusion: given a safety automaton or a~PA $\Amc_1$, and a~PA $\Amc_2$, is the $\omega$-language of~$\Amc_1$ a subset of the $\omega$-language of $\Amc_2$?
	\item Intersection-emptiness: given a safety automaton or a~PA $\Amc_1$, and a~PA $\Amc_2$, is the $\omega$-language of $\Amc_1$ disjoint from the $\omega$-language of $\Amc_2$?
\end{itemize}

The techniques employed for showing \NP-completeness for non-emptiness and membership are identical to the non-deterministic case, see \Cref{lem:nonemptiness}, \Cref{lem:nonemptiness-limit} and \Cref{cor:membership}.
\begin{cor}
Emptiness for deterministic limit~PA, deterministic reachability-regular~PA, deterministic weak reset~PA and deterministic strong reset~PA is $\coNP$-complete.   
\end{cor}
\begin{cor}
	Membership for deterministic limit~PA, deterministic reachability-regular~PA, deterministic weak reset~PA and deterministic strong reset~PA is $\NP$-complete.   
\end{cor}

However, showing undecidability and completeness results for universality and the model checking problems require more sophisticated methods. Hence, we devote most of this section to the latter problems,  being the core of solving universal model checking.
Recall that we can always reduce universality to inclusion, as an automaton $\Amc$ is universal if and only if $\Sigma^\omega$ is a subset of the $\omega$-language of $\Amc$. Hence, we show all undecidability results and lower bounds for universality and all decidability results and upper bounds for inclusion. We begin with the undecidability results.

\begin{lemma}
	Universality for deterministic reachability-regular~PA and deterministic weak reset~PA is undecidable.   
\end{lemma}
\begin{proof}
	As shown in~\cite{infiniteZimmermann}, universality is already undecidable for deterministic reachability~PA. As we can effectively construct equivalent deterministic reachability-regular~PA and deterministic weak reset~PA from deterministic reachability~PA by \Cref{lem:reachToReachReg} and \Cref{lem:reachRegToWeak}, the lemma follows.
\end{proof}

Recall our strategy for showing that universality for finite word~PA is $\Pi_2^\P$-hard by a reduction from the irrelevance problem for~PA, see \Cref{cor:uniHardnessFinite}.
We conclude $\Pi_2^\P$-hardness for universality for deterministic limit and deterministic strong reset~PA using a simplified variant of this reduction.

\begin{cor}
	\label{cor:universalityLimit}
	Universality and inclusion for deterministic limit~PA and deterministic strong reset~PA is $\Pi_2^\P$-hard.
\end{cor}

Now we focus on the decidability results and upper bounds. Let $\Amc_1$ and $\Amc_2$ be two (deterministic limit)~PA. Note that $L_\omega(\Amc_1) \subseteq L_\omega(\Amc_2)$ holds if and only if \mbox{$L_\omega(\Amc_1) \cap \overline{L_\omega(\Amc_2)} =\varnothing$}. As deterministic limit~PA are effectively closed under complement and intersection, and emptiness is $\coNP$-complete (and hence decidable) for them, we obtain the following result.
\begin{cor}
	Universality and inclusion for deterministic limit~PA are decidable.    
\end{cor}
Unfortunately, we do not obtain tight bounds and conjecture that these problem are $\Pi_2^\P$-complete for them. 
However, we make the following observation.
The relatively high~$\Pi_2^\P$ lower bound of universality for deterministic limit~PA and deterministic strong reset~PA (and also of irrelevance for finite word~PA)
can be explained by the cost of (implicitly) complementing semi-linear sets. In fact, if we have the guarantee that the semi-linear set of the second~PA can be complemented in polynomial time, the universality and inclusion problems become $\coNP$-complete.
We start with deterministic limit~PA.

\begin{lemma}
	\label{lem:limitcoNP}
	Let $\Amc_1$ and $\Amc_2$ be deterministic limit~PA with the guarantee that the semi-linear set of $\Amc_2$ can be complemented in polynomial time. Then the following questions are $\coNP$-complete.
	\begin{enumerate}
		\item Is $L_\omega(\Amc_2) = \Sigma^\omega$?
		\item Is $L_\omega(\Amc_1) \subseteq L_\omega(\Amc_2)$?
	\end{enumerate}
\end{lemma}
\begin{proof}
	Containment in $\coNP$ of both questions follows immediately from a reduction to emptiness for deterministic limit~PA, as the guarantee allows us to complement deterministic limit~PA in polynomial time, and we can construct the product automaton of two deterministic limit~PA in polynomial time.
	
	The partition problem is defined as follows: given a multiset $M$ of positive integers, is there a subset $M'$ of $M$ such that $\sum_{n \in M} n = \sum_{n \in M \setminus M'} n$?. This problem is one of the classical \NP-complete problems~\cite{intract}. We reduce from its complement.
	
	Let $M = \{n_1, \dots n_k\}$ be a multiset of positive integers. We construct a deterministic limit~PA~$\Amc$ over the alphabet $\{a,b\}$ of dimension 2 as follows. The state set of $\Amc$ is $\{q_0, q_1, \dots q_k\}$ where~$q_0$ is the initial state and $q_k$ is the only accepting state. For every $1 \leq i \leq k$, there is an $a$-transition from $q_{i-1}$ to ${q_i}$ labeled with $(n_i, 0)$ as well as a $b$-transition labeled with~$(0, n_i)$. Finally, we add an $a$-loop and a $b$-loop to the accepting state $q_k$, both labeled with $\0$. The semi-linear set of $\Amc$ is 
	\[C = \{(z,z') \mid z \neq z' \} = C((1,0), \{(1,0), (1,1)\}) \cup C((0,1), \{(0,1), (1,1)\}),\]
	whose size does not depend on the size of $M$. Furthermore, the complement of $C$ is 
	\[\overline{C} = \{(z,z) \mid z \in \Nbb\} = C(\0, \{(1,1\}),\]
	and can hence be computed in polynomial time.
	
	Now we have that $\Amc$ is universal if and only if $M$ is a negative instance of partition. To see this, observe that every prefix $w_1 \dots w_k \in \{a,b\}^k$ of every word $\alpha \in L_\omega(\Amc)$ represents a subset $M'$ of $M$ with $n_i \in M'$ if and only if $w_i = a$, and hence $n_i \in M \setminus M'$ if and only if $w_i = b$. The semi-linear set $C$ of $\Amc$ states that $M'$ and $M \setminus M'$ are not a partition of $M$. Hence, the claim follows.
\end{proof}

Now we show that this is also the case for deterministic strong reset~PA, presenting in parallel the strategy that we will finally adapt to show that inclusion is $\Pi_2^\P$-complete in the general case.
We show that complements of deterministic strong reset~PA recognizable $\omega$-languages are reachability-regular~PA recognizable, and given that we can complement the semi-linear set in polynomial time, we can construct such a reachability-regular~PA in polynomial time. Subsequently, we show how to test intersection emptiness of a deterministic strong reset~PA and a reachability-regular~PA in $\coNP$. We begin by proving the first main ingredient.

\begin{lemma}
	\label{lem:complSR}
	Let $\Amc = (Q, \Sigma, q_0, \Delta, F, C)$ be a deterministic strong reset~PA. Then there is a (non-deterministic) reachability-regular~PA $\Amc'$ recognizing $\overline{SR_\omega(\Amc)}$. Furthermore, if $C$ can be complemented in polynomial time, then we can compute $\Amc'$ in polynomial time.
\end{lemma}
\begin{proof}
	We assume that every state of $\Amc$ is reachable from the initial state $q_0$ (as we can safely remove unreachable states in polynomial time).
	Observe that $\Amc$ rejects an infinite word $\alpha$ whenever one of the following two conditions is met:
	\begin{enumerate}
		\item The unique run of $\Amc$ on $\alpha$ visits every accepting state just finitely often.
		\item The unique run of $\Amc$ on $\alpha$ visits an accepting state with bad counter values.
	\end{enumerate}
	
	The first condition is $\omega$-regular, while the second one can be tested with a reachability~PA. 
	We begin by clarifying the first point. Testing whether there is an infinite word rejected by~$\Amc$ because its unique run visits every accepting only finitely often boils down to testing whether the complement $\omega$-language of the underlying Büchi automaton of $\Amc$ is non-empty.
	Hence, let $\Bmc$ be the Büchi automaton obtained from~$\Amc$ by forgetting all vectors and let $\overline{\Bmc}$ be a Büchi automaton recognizing the complement of~$L_\omega(\Bmc)$. As $\Bmc$ is deterministic, we can construct $\overline{\Bmc}$ in polynomial time~\cite{detBuchiCompl}. Then $\overline{\Bmc}$ accepts all words rejected by $\Amc$ due to the first condition.

	Second, we show how to construct a (non-deterministic) reachability~PA accepting all infinite words that are rejected by $\Amc$ due the second condition. Recall the following definition from \Cref{lem:nonemptiness}. For any two states $p, q \in Q$ let $\Amc_{p \Rightarrow q} = (Q \cup \{q_0'\}, \Sigma, q_0', \Delta_{p \Rightarrow q}, \{q\}, C)$ where 
	\[\Delta_{p \Rightarrow q} = \{(q_1, a, \vbf, q_2) \mid (q_1, a, \vbf, q_2) \in \Delta, q_1 \notin F) \cup \{(q_0', a, \vbf, q_2) \mid (p, a, \vbf, q_2) \in \Delta\}.\]

	For every pair of accepting states $p,q \in F$, we consider the~PA $\overline{\Amc}_{p \Rightarrow q}$, which is defined as $\Amc_{p \Rightarrow q}$ but with the complement semi-linear set $\overline{C}$ of $C$. Observe that we can compute~$\overline{\Amc}_{p \Rightarrow q}$ in polynomial time if $C$ can be complemented in polynomial time. Hence, $\overline{\Amc}_{p \Rightarrow q}$ accepts all finite words collecting a vector not in $C$ when starting in $p$, ending in $q$, and not visiting other accepting states in-between.
	Now let $\Amc^\circ_{p \Rightarrow q}$ the~PA obtained from $\Amc$ and $\overline{\Amc}_{p \Rightarrow q}$ as follows. We start with a copy of $\Amc$ but every state is not accepting and every vector is replaced by~$\0$. Then, we identify the state $p$ in $\Amc^\circ_{p \Rightarrow q}$ with the initial state of $\overline{\Amc}_{p \Rightarrow q}$, that is, we remove every outgoing transition of $p$ in $\Amc^\circ_{p \Rightarrow q}$ and replace them with the outgoing transitions of the initial state of $\overline{\Amc}_{p \Rightarrow q}$. Furthermore, the accepting state $q$ of $\overline{\Amc}_{p \Rightarrow q}$ is the only accepting state of $\Amc^\circ_{p \Rightarrow q}$. Observe that $q$ has no outgoing transitions by construction. Finally, we add a trivial self-loop to $q$ and choose $\overline{C}$ to be the semi-linear set of $\Amc^\circ_{p \Rightarrow q}$.
	Hence, $\Amc^\circ_{p \Rightarrow q}$ accepts all infinite words of the form $uv\beta$ with $\beta \in \Sigma^\omega$ such that $\Amc$ is in state $p$ after reading $u$, then in state $q$ after further reading~$v$, and collects a vector not in~$C$ while reading~$v$ and hence rejects every infinite word with prefix $uv$.
	We compute the~PA $\Amc^\circ_{p \Rightarrow q}$ for every $p, q \in F$ and let $\Amc^\circ$ be the reachability~PA recognizing the union of all these $\Amc^\circ_{p \Rightarrow q}$ by taking the disjoint union and connecting them with a fresh initial state (see \cite[Lemma 6]{infiniteZimmermann} for details). In contrast to an iterated product construction, the presented construction allows us to compute $\Amc^\circ$ in polynomial time, albeit not preserving determinism.
	
	Finally, let $\Amc'$ be a reachability-regular~PA accepting $L_\omega(\overline{\Bmc}) \cup R_\omega(\Amc^\circ)$. Note that $\Amc'$ can be computed in polynomial time in the sizes of $\overline{\Bmc}$ and $\Amc^\circ$ by turning both automata into reachability-regular~PA and again taking their disjoint union with a fresh initial state. Now we have $RR_\omega(\Amc') = \overline{SR_\omega(\Amc)}$.
\end{proof}

Before we proceed, we show two auxiliary lemmas.
\begin{lemma}
	Let $\Amc_1 = (Q_1, \Sigma, p_I, \Delta_1, F_1, C)$ be a deterministic strong reset~PA and $\Bmc = (Q_2, \Sigma, q_I, \Delta_2, F_2)$ be a Büchi automaton. Then $SR_\omega(\Amc_1) \cap L_\omega(\Bmc)$ is ultimately periodic.
\end{lemma}
\begin{proof}
	Assume $SR_\omega(\Amc_1) \cap L_\omega(\Bmc) \neq \varnothing$ and let $\alpha$ be an infinite word accepted by both automata, $\Amc_1$ and $\Bmc$.
	If $\alpha = uv^\omega$ for some $u, v \in \Sigma^*$, we are done. Hence assume that this is not the case.
	Let $\Amc = (Q_1 \times Q_2, \Sigma, (p_I, q_I), \Delta, F_1 \times Q_2, C)$ with $\Delta = \{((p, q), a, \vbf, (p', q')) \mid (p, a, \vbf, p') \in \Delta_1, (q, a, q') \in \Delta_2\}$ be the product automaton\footnote{We note that $\Amc$ interpreted as a strong reset~PA does not recognize $SR_\omega(\Amc_1) \cap L_\omega(\Bmc)$. Instead, it accepts all infinite words $\alpha \in SR_\omega(\Amc_1)$ such that $\Bmc$ has an infinite but not necessarily accepting run on~$\alpha$.} of $\Amc_1$ and $\Bmc$.
	As $\alpha \in SR_\omega(\Amc_1)$, the unique accepting run of $\Amc_1$ on $\alpha$, say $(p_0, \alpha_1, \vbf_1, p_1)$ $(p_1, \alpha_2, \vbf_2, p_2) \dots$ with $p_0 = p_I$, is accepting. Likewise, as $\alpha \in L_\omega(\Bmc)$, there is an accepting run $(q_0, \alpha_1, q_1) (q_1, \alpha_2, q_2) \dots$ with $q_0 = q_I$ of $\Bmc$ on $\alpha$.
	
	Hence, $r = ((p_0, q_0), \alpha_1, \vbf_1, (p_1, q_1)) ((p_1, q_1), \alpha_2, \vbf_2, (p_2, q_2)) \dots$ is a run of $\Amc$ on $\alpha$
	with the following properties:
	\begin{itemize}
		\item there is $p_f \in F_1$ such that for infinitely many $i$ we have $p_i = p_f$. Let $f_1, f_2, \dots$ denote the positions of all occurrences of a state of the form $(p_f, \cdot)$ in $r$.
		\item there is $q_f \in F_2$ such that for infinitely many $i$ we have $q_i = q_f$. 
		%Let $f'_1, f'_2, \dots$ denote the positions of all occurrences of a state of the form $(\cdot, q_f)$ in $r$.
	\end{itemize}
	Let $j \geq f_1$ be minimal such that $q_j = q_f$.
	Now let $k \leq j$ be maximal such that $k = f_\ell$ for some $\ell \geq 1$.
	Then $r[1:f_\ell]r[f_\ell+1 : f_{\ell+1}]^\omega$ is an accepting run of $\Amc$ on an ultimately periodic word, say $uv^\omega$, that visits an accepting state of $\Bmc$ infinitely often. Hence $uv^\omega \in SR_\omega(\Amc_1) \cap L_\omega(\Bmc)$.
\end{proof}
\begin{lemma}
	\label{lem:interSRBukki}
	Let $\Amc_1 = (Q_1, \Sigma, p_0, \Delta_1, F_1, C)$ be a deterministic strong reset~PA and let $\Bmc = (Q_2, \Sigma, q_0, \Delta_2, F_2)$ be a Büchi automaton. The question $SR_\omega(\Amc_1) \cap L_\omega(\Bmc) = \varnothing$ is \coNP-complete.
\end{lemma}
\begin{proof}
	The lower bound follows immediately from the $\coNP$-hardness of emptiness; hence, we focus on the containment in $\coNP$ by showing that testing intersection non-emptiness is in \NP.
	By the previous lemma it is sufficient to check the existence of an ultimately periodic word.
	%Let $\Amc_1 = (Q_1, \Sigma, p_0, \Delta_1, F_1, C)$ and $\Bmc = (Q_2, \Sigma, q_0, \Delta_2, F_2)$.
	Recall the algorithm in \ref{lem:nonemptiness} that decides the non-emptiness problem for reset~PA in~$\NP$ by exploiting that $\omega$-languages accepted by a strong reset~PA are ultimately periodic.
	We modify the algorithm as follows.
	
	First, let $\Amc$ be the product automaton of $\Amc_1$ and $\Bmc$ (as in the previous proof; however, the set of accepting states is not important for the algorithm).
	We guess state $p_f \in F_1$ and $q_f \in F_2$ that we expect to be seen infinitely often to satisfy the acceptance conditions of~$\Amc_1$ resp.\ $\Bmc$. Furthermore, similar to the algorithm above we guess a sequence of distinct states $(~p_1, q_1) (p_2, q_2) \dots (p_n, q_n)$ of $\Amc$ such that for some $\ell \leq n$ we have $q_\ell = q_f$, for some $k \leq \ell$ we have $p_k = p_f$, and for all $j \neq \ell$ we have $p_j \in F_1$.
	If $p_\ell \in F_1$, we proceed exactly as in the original $\NP$-algorithm, that is, for every $0 \leq i < n$, we test the finite word~PA $\Amc_{(p_i, q_i) \Rightarrow (p_{i+1}, q_{i+1})}$, as well as the finite word~PA $\Amc_{(p_n, q_n) \Rightarrow (p_{k}, q_{k})}$ for non-emptiness. 
	
	Now assume $p_\ell \notin F_1$.
	Let $\Amc_{p_{\ell-1} \Rightarrow q_\ell \Rightarrow p_{\ell + 1}}$ be the finite word~PA that accepts all finite infixes accepted by the product automaton $\Amc$ when starting in $(p_{\ell-1}, q_{\ell-1})$, visiting $(p_\ell, q_\ell)$ at some point, and ending in $(p_{\ell+1}, q_{\ell+1})$ such that for all internal states $(p_i, q_i)$ we have $p_i \notin F_1$.
	Two achieve this, we take two copies of $\Amc$, where all accepting states in the first copy are not reachable, all accepting states in the second copy have no outgoing transitions, and the second copy can only be reached from the first copy via $(p_\ell, q_\ell)$.
	
	Formally, let 
	$$\Amc_{p_{\ell-1} \Rightarrow q_\ell \Rightarrow p_{\ell + 1}} = (Q_1 \times Q_2 \times \{1,2\} \cup \{q_0'\}, \Sigma, q_0', \Delta', \{(p_{\ell+1}, q_{\ell+1})\}, C)$$ 
	with 
	\begin{align*}
		\Delta' =&\ \{((p,q,1), a, \vbf, (p', q', 1)) \mid ((p,q), a, \vbf, (p', q')) \in \Delta, p, p' \notin F_1\} \\
		\cup&\ \{((p,q,2), a, \vbf, (p', q', 2)) \mid ((p,q), a, \vbf, (p', q')) \in \Delta, p \notin F_1\} \\
		\cup&\ \{((p,q,1), a, \vbf, (p_\ell, q_\ell, 2) \mid ((p, q), a, \vbf, (p_\ell, q_\ell)) \in \Delta \} \\
		\cup&\ \{(q'_0, a, \vbf, (p', q', 1) \mid ((p_{\ell-1}, q_{\ell-1}), a, \vbf, (p', q' )) \in \Delta, p' \notin F_1 \} \\
		\cup&\ \{(q'_0, a, \vbf, (p_\ell, q_\ell, 2) \mid ((p_{\ell-1}, q_{\ell-1}), a, \vbf, (p_\ell, q_\ell)) \in \Delta \}.
	\end{align*}
	
	Now the algorithm is similar the first case with the addition that we also test this automaton for non-emptiness. Hence, for every $0 \leq j < \ell-1$ and $\ell+1 \leq j < n$ we test $\Amc_{(p_j, q_j) \Rightarrow (p_{j+1}, q_{j+1})}$ as well as $\Amc_{(p_n, q_n) \Rightarrow (p_k, q_k)}$ for non-emptiness. Furthermore, we test $\Amc_{p_{\ell-1} \Rightarrow q_\ell \Rightarrow p_{\ell + 1}}$ for non-emptiness. 
	
	If all these tests succeed, we conclude $SR_\omega(\Amc_1) \cap L_\omega(\Bmc) \neq \varnothing$, as they witness the existence of an ultimately periodic word $\alpha$ accepted by $\Amc_1$ with the property that there is a run an accepting run of $\Bmc$ on $\alpha$ that visits $q_f$ infinitely often.
\end{proof}

Let $\Amc$ be a deterministic strong reset~PA whose semi-linear set can by complemented in polynomial time. By \Cref{lem:complSR} we can compute a Büchi automaton $\overline{\Bmc}$ and a reachability~PA~$\Amc^\circ$ such that $\overline{SR_\omega}(\Amc) = L_\omega(\overline{\Bmc}) \cup R_\omega(\Amc^\circ)$.
By the previous lemma we can test $SR_\omega(\Amc) \cap L_\omega(\overline{\Bmc}) = \varnothing$ in \coNP.
Hence, it remains to show that testing intersection emptiness of a deterministic strong reset~PA and a reachability~PA is decidable in $\coNP$. To achieve that, we use the \NP-algorithm in~\cite{zvassnz} deciding the reachability problem for $\Zbb$-$\mathsf{VASS}$ with~$k$ nested zero-tests ($\Zbb$-$\mathsf{VASS}^\mathsf{nz}_k$).
A $\Zbb$-$\mathsf{VASS}^\mathsf{nz}_k$ (of dimension $d \geq 1$) is a tuple $V = (Q, Z, E)$ where $Q$ is a finite set of states, 
$Z \subseteq \{0, 1, \dots, d\}$ is its set of zero tests with $|Z \setminus \{0\}| = k$, and $E \subseteq Q \times \Zbb^d \times Z \times Q$ is a finite set of transitions. 
A configuration of $V$ is a pair $(p, \vbf) \in Q \times \Zbb^d$. Assume $\vbf = (v_1, \dots, v_d)$. We write $(p, \vbf) \vdash_V (p', \vbf')$ if there is a transition $(p, \ubf, \ell, p') \in E$ such that $\vbf' = \vbf + \ubf$ and $v_1 = \dots = v_\ell = 0$. Furthermore, we write $(p, \vbf) \vdash^*_V (p', \vbf')$ if there is a sequence $(p_1, \vbf_1) \vdash_V \dots \vdash_V (p_n, \vbf_n)$ for some $n \geq 1$ such that $(p, \vbf) = (p_1, \vbf_1)$ and $(p', \vbf') = (p_n, \vbf_n)$. 
Intuitively, a $\Zbb$-$\mathsf{VASS}^\mathsf{nz}_k$ is a counter machine with zero-tests that can only zero-test the top-most $\ell$ counters at once, for $k$ different values of $\ell$.

The \emph{reachability} problem for $\Zbb$-$\mathsf{VASS}^\mathsf{nz}_k$ is defined as follows: given a $\Zbb$-$\mathsf{VASS}^\mathsf{nz}_k$ $V$ and two configurations $(p, \vbf), (p', \vbf')$, does $(p, \vbf) \vdash_V^* (p', \vbf')$ hold? As shown in~\cite{zvassnz}, this problem is $\NP$-complete\footnote{We note that our definition of $\Zbb$-$\mathsf{VASS}^\mathsf{nz}_k$ differs slightly from the definition in~\cite{zvassnz}, as we allow $E\subseteq Q\times \Zbb^d\times Z\times Q$ instead of $E\subseteq Q\times \{-1,0,1\}^d\times Z \times Q$ only. However, this difference does not change the mentioned complexity for the reachability problem~\cite{zvassnz}, see also~\cite[Section A.1]{logspaceconversion}.} for any fixed~$k$.
We are now ready to show that universality and inclusion for deterministic strong reset~PA are $\coNP$-complete if we can complement their semi-linear sets in polynomial time.

\begin{lemma}
	Let $\Amc_1$ and $\Amc_2$ be deterministic strong reset~PA with the guarantee that the semi-linear set of $\Amc_2$ can be complemented in polynomial time. Then the following questions are $\coNP$-complete.
	\begin{enumerate}
		\item Is $SR_\omega(\Amc_2) = \Sigma^\omega$?
		\item Is $SR_\omega(\Amc_1) \subseteq SR_\omega(\Amc_2)$?
	\end{enumerate}
\end{lemma}
\begin{proof}
	Hardness follows again from a reduction from partition, very similar to the reduction in \Cref{lem:limitcoNP}. Hence, we focus on the containment in $\coNP$ of the second question.
	
	As mentioned above, we can compute a Büchi automaton $\overline{\Bmc}$ and a reachability~PA $\Amc^\circ$ such that $\overline{SR_\omega}(\Amc_2) = L_\omega(\overline{\Bmc}) \cup R_\omega(\Amc^\circ)$ by \Cref{lem:complSR}, and test $SR_\omega(\Amc_1) \cap L_\omega(\overline{\Bmc}) = \varnothing$ in $\coNP$ by \Cref{lem:interSRBukki}.
	
	It remains to show how to solve $SR_\omega(\Amc_1) \cap R_\omega(\Amc^\circ) = \varnothing$ in $\coNP$.
	In order to do so we use the \NP-algorithm in~\cite{zvassnz} solving reachability for $\Zbb$-$\mathsf{VASS}^\mathsf{nz}_2$ to decide $SR_\omega(\Amc) \cap R_\omega(\Amc^\circ) \neq \varnothing$ in $\NP$. 
	Let $\Amc_1 = (Q_1,\Sigma,p_0, \Delta_1, F_1, C_1)$ be of dimension $d_1$ and $\Amc^\circ = (Q_2,\Sigma, q_0, \Delta_2, F_2, C_2)$ be of dimension $d_2$.
	Assume $SR_\omega(\Amc) \cap R_\omega(\Amc^\circ) \neq \varnothing$ and let $\alpha$ be an infinite word accepted by both automata. In particular, there is a finite prefix $u$ of $\alpha$ and an accepting run of $\Amc^\circ$ on $\alpha$ satisfying the Parikh condition after processing $u$, say in the accepting state $q_f \in F_2$. Recall that all outgoing transitions of every accepting state of $\Amc^\circ$ are self-loops, hence we may assume that $\Amc^\circ$ does not leave $q_f$ anymore after processing $u$.
	Now let $p_f \in F_1$ be the first accepting state visited by the unique accepting run of $\Amc_1$ on $\alpha$ after processing $u$, say after processing the prefix $uv$. Note that $\Amc^\circ$ is still in $q_f$ after processing $uv$.
	
	Our strategy is as follows.
	First, we guess states $p_f$ and $q_f$ with the mentioned properties. We then verify these properties by building a product $\Zbb$-$\mathsf{VASS}^\mathsf{nz}_2$ with the property that $((p_0, q_0), \0) \vdash^*_V ((p_f, q_f), \0)$ if and only there is a finite prefix $uv$ as described above. Then we need to test whether $\Amc_1$ can continue a partial run from $p_f$ to an accepting run, that is, whether $\Amc_2$ with initial state $p_f$ accepts at least one infinite word, say $\beta$. If all these tests succeed, the infinite word $uv\beta$ witnesses non-emptiness.

	We build a product $\Zbb$-$\mathsf{VASS}^\mathsf{nz}_2$ $V$ with $d_1 + d_2$ many counters. The idea is as follows. We use the first $d_1$ counters with to simulate $\Amc_1$ ensuring that every visit of an accepting state of~$\Amc_1$ is with good counter values by zero-testing them.
	Likewise, we use a second set of $d_2$ counters to simulate $\Amc^\circ$.
	Let us give some more details on how to verify that every visit of an accepting state implies good counter values. Let $C_1 = C(\bbf_1, P_1) \cup \dots \cup C(\bbf_k, P_k)$ for some $k \geq 1$. 
	For every accepting state $f \in F_1$ we insert $k$ states, say $f^{(1)}, \dots, f^{(k)}$, and a copy of~$f$ itself. We connect~$f$ to $f^{(i)}$ with a transition subtracting $\bbf_i$ and no zero-test.
	Then, for every period vector $\pbf_i \in P_i$, we insert a loop on~$f^{(i)}$ subtracting~$\pbf_i$. 
	Finally, every outgoing transition of $f^{(i)}$ is equipped with a zero-test on the first $d_1$ counters. 
	This construction allows us to test membership of the current counter values in $C_1$, while resetting the counters in parallel.
	Finally, when reaching $(p_f, q_f)$, we use the same idea to check whether the vector induced by the last $d_2$ counters yields a vector in $C_2$. As $p_f$ is accepting, we expect all counters to be zero at this point, which we check by zero-testing them all.
	Hence, if $((p_0, q_0), \0) \vdash^*_V ((p_f, q_f), \0)$ holds (where we assume that $(p_f, q_f)$ is the state of~$V$ reached after the described zero-tests), this implies that there is a finite prefix $uv$ such that~$\Amc_2$ rejects every infinite word with prefix $uv$ due to bad counter values, while the unique partial run of~$\Amc_1$ on $uv$ respects the strong reset acceptance condition.
	Finally, we need to check whether the $\omega$-language of $\Amc_1$ with initial state $p_f$ is not-empty using the $\NP$-algorithm for testing emptiness for strong reset~PA (see \Cref{lem:nonemptiness})
	If these tests succeed, there is an infinite word $\beta$ such that $uv\beta \in SR_\omega(\Amc_1) \setminus SR_\omega(\Amc_2)$, witnessing non-inclusion.
\end{proof}

We are now ready to combine and generalize the results in the previous lemmas to arbitrary deterministic strong reset~PA.

\begin{lemma}
	Universality and inclusion for deterministic strong reset~PA are $\Pi_2^\P$-complete.
\end{lemma}
\begin{proof}
	Hardness follows from~\Cref{cor:universalityLimit}; hence, we show that non-inclusion is in $\Sigma_2^\P$, yielding the desired result.
	
	Let $\Amc_1 = (Q_1, \Sigma, p_0, \Delta_1, F_1, C_1)$ and $\Amc_2 = (Q_2, \Sigma, q_0, \Delta_2, F_2, C_2)$ be deterministic strong reset~PA of dimensions $d_1$ and $d_2$, resp. 
	As in the previous lemma, we can compute a Büchi automaton~$\overline{\Bmc}$ accepting all infinite words that are rejected by~$\Amc_2$ because every accepting state of $\Amc_2$ is visited only finitely often. 
	By \Cref{lem:interSRBukki}, we can test $SR_\omega(\Amc_1) \cap L_\omega(\overline{\Bmc}) \neq \varnothing$ in~$\NP$. 
	If the intersection is indeed non-empty, we conclude $SR_\omega(\Amc_1) \not\subseteq SR_\omega(\Amc_2)$.
	Otherwise, this non-inclusion might hold as there is an infinite word $\alpha \in SR_\omega(\Amc_1)$ rejected by~$\Amc_2$ due to a reset with bad counter values. To test this case, we proceed as follows.
	
	We guess two accepting states $q_2$, $q_3$ of $\Amc_2$ and
	utilize the non-irrelevance algorithm in \Cref{lem:irrelevance} to test the existence of such a finite infix with bad counter values, \ie, an infix whose unique partial run of $\Amc_2$ yields a vector $\vbf \notin C_2$. We then construct a $\Zbb$-$\mathsf{VASS}^\mathsf{nz}_2$ to test whether there is an infinite word accepted by $\Amc_1$ that is rejected by $\Amc_2$ because the partial run between $q_2$ and $q_3$ yields the bad vector $\vbf$, ensuring that $\Amc_1$ respects all resets, similar to the proof of the previous lemma.
	
	Let $\Amc = (Q_1 \times, Q_2, \Sigma, (p_0, q_0), \Delta, F_1 \times Q_2, C_1)$ with \[\Delta = \{((p,q), a, \vbf, (p',q')) \mid (p,a,\vbf,p') \in \Delta_1, (q,a,\cdot,q') \in \Delta_2\}\] be the product automaton of $\Amc_1$ and $\Amc_2$ where we only keep the vectors and accepting states from $\Amc_1$. We guess two states $(p_2, q_2), (p_3, q_3) \in Q_1 \times F_2$ such that there is $\alpha \in SR_\omega(\Amc_1) \setminus SR_\omega(\Amc_2)$ with the property that $\Amc_2$ rejects $\alpha$ because the unique rejecting run of $\Amc_2$ on $\alpha$ visits $q_2$ at some position, say $f_2$ (and hence resets), the next reset is in $q_3$ at position $f_3$, and the vector collected in this time is not a member of $C_2$. Furthermore, $p_2$ and $p_3$ are the corresponding states of $\Amc_1$ at positions $f_2$ and $f_3$.
	Additionally, we guess two states $(p_1, q_1), (p_4, q_4) \in F_1 \times Q_2$ such that the unique accepting run of $\Amc_1$ on $\alpha$ resets the last time before reaching position $f_2$ say at position $f_1 \leq f_2$, namely in $p_1$; and $\Amc_1$ resets the first time after reaching position $f_3$ say at position $f_4 \geq f_3$, namely in $p_4$. Furthermore,~$q_1$ and $q_4$ are the states of $\Amc_2$ at positions $f_1$ and $f_4$. Observe that $f_1 = f_2$ and $f_3 = f_4$ are possible.
	
	First, we test whether $(p_1, q_1)$ is reachable in $\Amc$ in the sense that there is a finite prefix~$u$ of~$\alpha$ such that $\Amc$ is in state $(p_1, q_1)$ after reading $v$ and $\Amc_1$ resets with good counter values whenever an accepting state of $\Amc_1$ is seen. In order to do so, we modify $\Amc$ such that $(p_1, q_1)$ is the only accepting state, and replace its outgoing transitions with trivial self-loops. Then we use the $\NP$-algorithm in~\Cref{lem:nonemptiness} to test non-emptiness of the resulting automaton.
	
	Second, we test whether it is possible to successfully continue the run from $(p_4, q_4)$ in the sense that there is an infinite word $\beta$ accepted by $\Amc$ when starting in $(p_4, q_4)$ (and hence by~$\Amc_1$ when starting in $p_4)$. In order to do so, we modify $\Amc$ such that $(p_4, q_4)$ is the initial state and test non-emptiness of the resulting automaton, again using the \NP-algorithm in~\Cref{lem:nonemptiness}.
	
	Let $\Amc'$ be defined as $\Amc$ but this time we only keep the vectors from $\Amc_2$ (instead of~$\Amc_1$) and its semi-linear set is $C_2$ (instead of $C_1$).
	Third, we test whether there is indeed a finite prefix of $w$ of $\alpha$ such that $\Amc'$ is in state $(p_3, q_3)$ after processing $w$ when starting in $(p_2, q_2)$, and the vector collected by the unique partial run of $\Amc'$ (and hence $\Amc_2$) is not a member of~$C_2$.
	To achieve that we compute the~PA $\Amc'_{(p_2, q_2) \Rightarrow (p_3, q_3)}$ and test for non-irrelevance using the algorithm in \Cref{lem:irrelevance} in $\Sigma_2^\P$ (observe that the construction of this~PA preserves determinism up to completeness, and we can always complete the~PA by adding a non-accepting sink). Recall that a part of this algorithm guesses a vector $\vbf$ not contained in~$C_2$ such that there is a run ending in the accepting state (here $(p_3, q_3)$) collecting $\vbf$. We need to keep this vector for the next step.
	
	Finally, we need to verify that there is a non-rejecting partial run of $\Amc$ on an infix~$vwx$, starting in $(p_1, q_1)$, visiting $(p_2, q_2)$ and $(p_3, q_3)$ in between, and ending in $(p_4, q_4)$ such that $w$ is processed between the visits of $(p_2, q_2)$ and $(p_3, q_3)$, witnessing that $\Amc_2$ is indeed rejecting. 
	To achieve that, we construct a product $\Zbb$-$\mathsf{VASS}^\mathsf{nz}_2$ $V$ of dimension $d_1 + d_2$ in a similar way as in the proof of the previous lemma. We give a high level description of $V$. The state set of~$V$ consists of three copies of the product $Q_1 \times Q_2$, and the transitions keep the vectors of the transitions of both automata, $\Amc_1$ and $\Amc_2$. 
	We can move from the first copy to the second copy upon reaching $(p_2, q_2)$, and from the second copy to the third copy  upon reaching~$(p_3, q_3)$. 
	The $d_2$ counters belonging to $\Amc_2$ are frozen (that is $\0$) in the first and third copy. Contrary, all counters are used in the second copy. Furthermore, we remove every state in $Q_1 \times F_2$ in the second copy besides $(p_2, q_2)$ and $(p_3, q_3)$.
	We verify that $\Amc_1$ always resets with good counter values using zero tests as in the proof of the previous lemma.
	Finally, upon reaching~$(p_4, v_4)$, we subtract $\0^{d_1} \cdot \vbf$ (where $\vbf$ is the vector guessed in the previous step) and test whether all counters are zero. As $p_4$ is accepting, we expect the first~$d_1$~counters to be zero. 
	Furthermore, as $\vbf$ is a vector witnessing bad counter values with respect to $\Amc_2$, all-zero counters imply that there is indeed such an infix $w$ of $\alpha$ breaking the run of $\Amc_2$. Let $(p_f, q_f)$ be the state of $V$ reached after this zero test.
	Then we use the $\NP$-algorithm in~\cite{zvassnz} to test $((p_1, q_1), \0) \vdash^*_V ((p_f, q_f), \0)$. If the answer is positive, we conclude that $SR_\omega(\Amc_1) \not\subseteq SR_\omega(\Amc_2)$, as witnessed by $\alpha = uvwxy\beta$.
	
	We conclude with a technical remark: the bit size of the vector~$\vbf$ guessed in the third step might polynomially depend on $C_1$ and $C_2$. In particular, the bit size of (a suitable encoding of) $C_1$ might be arbitrary larger than the bit size of $C_2$. Hence, when calling the irrelevance algorithm with $\Amc'$ (where only $C_2$ is present) we need to take in account that the bit size of~$\vbf$ might not be polynomial in $C_2$ but in $C_1$ and $C_2$.
\end{proof}

Finally, we study the intersection-emptiness problems, being the core of solving existential model checking. We observe that all results translate from the non-deterministic setting.
Hence, for deterministic limit~PA, deterministic reachability-regular~PA and deterministic Büchi~PA we conclude $\coNP$-completeness by \Cref{lem:interLimitReachreg} and \Cref{lem:interBuchi}.
\begin{cor}
Intersection-emptiness for deterministic limit~PA, deterministic reachability-regular~PA, and deterministic Büchi~PA is $\coNP$-complete.
\end{cor}

For deterministic strong reset~PA (and hence for deterministic weak reset~PA) we obtain undecidability, as the~PA constructed in the proof of \Cref{lem:interReset} is indeed deterministic.
\begin{cor}
Intersection-emptiness for deterministic strong and weak reset~PA is undecidable.
\end{cor}

\section{Conclusion and Future Work}
We have studied the expressiveness, closure properties and classical decision problems of the non-deterministic and determinstic variants of the newly introduced models of PA operating on infinite words, namely reachability-regular PA, limit~PA, weak reset~PA, and strong reset~PA. Notably, deterministic limit PA are closed under the Boolean operations and hence all common decision problems are decidable for them, including the classical model checking problems.
Additionally, (strong) reset PA, being a very expressive model, enjoy a decidable emptiness problems as well as a decidable existential safety model checling problem. 
Closely related to model checking problems are synthesis problems. Here, the problem is to generate a model from a system specification (which is correct by construction). Gale-Stewart games play a key role in solving such synthesis problems~\cite{GaleStewart}. However, these games are undecidable when winning conditions are specified by automata whose emptiness or universality problem is undecidable. 
Our decidability results for deterministic limit PA raise the interesting and important question whether Gale-Stewart games can be solved when their winning condition is expressed by these automata. 

In future work we further plan to study the regular separability problem for these models, that is, given two $\omega$-languages $L_1, L_2$ recognized by~PA operating on infinite words, is there an $\omega$-regular language~$L$ with $L_1 \subseteq L$ and $L_2 \cap L = \varnothing$. 
Solving this problem can be used as an alternative approach to solving existential PA model checking, as (regular) separability implies intersection emptiness. It has already been studied for some related models, \eg, PA on finite words~\cite{parikhsep} and Büchi $\mathsf{VASS}$~\cite{buechiVASSsep}.
Furthermore, it remains to classify the intersections of all incomparable models and thereby to provide a fine grained ``map of the universe'' for Parikh recognizable $\omega$-languages.

\bibliographystyle{alphaurl}
\bibliography{lit}

@inproceedings{FiliotGM19,
  author    = {Emmanuel Filiot and
               Shibashis Guha and
               Nicolas Mazzocchi},
  title     = {Two-Way {P}arikh Automata},
  booktitle = {39th {IARCS} Annual Conference on Foundations of Software Technology
               and Theoretical Computer Science ({FSTTCS} 2019)},
  series    = {Leibniz International Proceedings in Informatics (LIPIcs)},
  volume    = {150},
  pages     = {40:1--40:14},
  publisher = {Schloss Dagstuhl - Leibniz-Zentrum für Informatik},
  year      = {2019}}

@article{mcnaughton1966testing,
  title={Testing and generating infinite sequences by a finite automaton},
  author={McNaughton, Robert},
  journal={Information and control},
  volume={9},
  number={5},
  pages={521--530},
  year={1966},
  publisher={Elsevier}
}

@book{thomas2002automata,
	author = {Gr\"{a}del, Erich and Thomas, Wolfgang and Wilke, Thomas},
	title = {Automata logics, and infinite games: a guide to current research},
	year = {2002},
	isbn = {3540003886},
	publisher = {Springer}
}

@inbook{thomasinfinite,
author = {Thomas, Wolfgang},
title = {Automata on Infinite Objects},
year = {1991},
isbn = {0444880747},
publisher = {MIT Press},
booktitle = {Handbook of Theoretical Computer Science (Vol. B): Formal Models and Semantics},
pages = {133–191},
numpages = {59}
}

@book{intract,
  author    = {Michael R. Garey. and David S. Johnson},
  edition   = {1st},
  isbn      = {0716710455},
  publisher = {W. H. Freeman},
  title     = {Computers and Intractability: A Guide to the Theory of NP-Completeness},
  year      = {1979}
}

@article{haase,
  author    = {Haase, Christoph},
  title     = {A Survival Guide to Presburger Arithmetic},
  year      = {2018},
  publisher = {Association for Computing Machinery},
  volume    = {5},
  number    = {3},
  journal   = {ACM SIGLOG News},
  pages     = {67--82},
  numpages  = {16}
}

@article{blindcounter,
  author  = {Fernau, Henning and Stiebe, Ralf},
  year    = {2008},
  pages   = {51--64},
  title   = {Blind Counter Automata on omega-Words.},
  volume  = {83},
  journal = {Fundamenta Informaticae}
}

@inproceedings{emptynp,
  author    = {Figueira, Diego and Libkin, Leonid},
  title     = {Path Logics for Querying Graphs: Combining Expressiveness and Efficiency},
  year      = {2015},
  isbn      = {9781479988754},
  booktitle = {Proceedings of the 2015 30th Annual ACM/IEEE Symposium on Logic in Computer Science (LICS 2015)},
  pages     = {329--340}
}

@InProceedings{klaedtkeruess,
  author    = {Klaedtke, Felix and Ruess, Harald},
  title     = {Monadic Second-Order Logics with Cardinalities},
  booktitle = {Automata, Languages and Programming},
  year      = {2003},
  publisher = {Springer},
  pages     = {681--696},
  isbn      = {978-3-540-45061-0}
}

@article{DBLP:journals/ijfcs/CadilhacFM12,
  author    = {Micha{\"{e}}l Cadilhac and
               Alain Finkel and
               Pierre McKenzie},
  title     = {Bounded {P}arikh Automata},
  journal   = {International Journal of Foundations of Computer Science},
  volume    = {23},
  number    = {8},
  pages     = {1691--1710},
  year      = {2012},
  timestamp = {Sat, 27 May 2017 14:23:38 +0200}
}

@article{DBLP:journals/ita/CadilhacFM12,
  author    = {Micha{\"{e}}l Cadilhac and
               Alain Finkel and
               Pierre McKenzie},
  title     = {Affine {P}arikh automata},
  journal   = {{RAIRO} Theoretical Informatics and Applications},
  volume    = {46},
  number    = {4},
  pages     = {511--545},
  year      = {2012},
  timestamp = {Thu, 04 Jun 2020 19:44:45 +0200}
}

@inproceedings{DBLP:conf/fossacs/DartoisFT19,
  author    = {Luc Dartois and
               Emmanuel Filiot and
               Jean{-}Marc Talbot},
  title     = {Two-Way {P}arikh Automata with a Visibly Pushdown Stack},
  booktitle = {Foundations of Software Science and Computation Structures -- 22nd
               International Conference ({FOSSACS} 2019)},
  series    = {Lecture Notes in Computer Science},
  volume    = {11425},
  pages     = {189--206},
  publisher = {Springer},
  year      = {2019},
  timestamp = {Fri, 31 Jan 2020 21:32:20 +0100}
}

@article{buechi,
author = {Büchi, Julius R.},
title = {Weak Second-Order Arithmetic and Finite Automata},
journal = {Mathematical Logic Quarterly},
volume = {6},
number = {1‐6},
pages = {66--92},
year = {1960}
}

@InProceedings{infiniteZimmermann,
  author =	{Guha, Shibashis and Jecker, Isma\"{e}l and Lehtinen, Karoliina and Zimmermann, Martin},
  title =	{{{P}arikh Automata over Infinite Words}},
  booktitle =	{42nd IARCS Annual Conference on Foundations of Software Technology and Theoretical Computer Science (FSTTCS 2022)},
  pages =	{40:1--40:20},
  series =	{Leibniz International Proceedings in Informatics (LIPIcs)},
  ISBN =	{978-3-95977-261-7},
  ISSN =	{1868-8969},
  year =	{2022},
  volume =	{250},
  publisher =	{Schloss Dagstuhl -- Leibniz-Zentrum für Informatik}
}

@article{infiniteOurs,
  title={{P}arikh Automata on Infinite Words},
  author={Grobler, Mario and Sabellek, Leif and Siebertz, Sebastian},
  journal={arXiv preprint arXiv:2301.08969},
  year={2023}
}

@inproceedings{cadilhac2011expressiveness,
  author       = {Micha{\"{e}}l Cadilhac and
                  Alain Finkel and
                  Pierre McKenzie},
  title        = {On the Expressiveness of {P}arikh Automata and Related Models},
  booktitle    = {Third Workshop on Non-Classical Models for Automata and Applications ({NCMA} 2011)},
  volume       = {282},
  pages        = {103--119},
  publisher    = {Austrian Computer Society},
  year         = {2011}
}

@InProceedings{erlich2022history,
	author =	{Erlich, Enzo and Guha, Shibashis and Jecker, Isma\"{e}l and Lehtinen, Karoliina and Zimmermann, Martin},
	title =	{{History-Deterministic Parikh Automata}},
	booktitle =	{34th International Conference on Concurrency Theory (CONCUR 2023)},
	pages =	{31:1--31:16},
	series =	{Leibniz International Proceedings in Informatics (LIPIcs)},
	ISBN =	{978-3-95977-299-0},
	ISSN =	{1868-8969},
	year =	{2023},
	volume =	{279},
	publisher =	{Schloss Dagstuhl -- Leibniz-Zentrum für Informatik}
}

@article{karianto2004parikh,
  title={{P}arikh automata with pushdown stack},
  author={Karianto, Wong},
  journal={Diplomarbeit, RWTH Aachen},
  year={2004}
}

@article{greibach,
  title = {Remarks on blind and partially blind one-way multicounter machines},
  journal = {Theoretical Computer Science},
  volume = {7},
  number = {3},
  pages = {311--324},
  year = {1978},
  issn = {0304-3975},
  author = {Sheila A. Greibach},
}

@article{groupautomata,
  title = {Extended finite automata over groups},
  journal = {Discrete Applied Mathematics},
  volume = {108},
  number = {3},
  pages = {287--300},
  year = {2001},
  issn = {0166-218X},
  author = {Victor Mitrana and Ralf Stiebe},
  keywords = {Finite automata over groups, Closure properties, Accepting capacity, Interchange lemma, Free groups},
}

@InProceedings{valenceRevisited,
  author="Hoogeboom, Hendrik J.",
  title="Context-Free Valence Grammars - Revisited",
  booktitle="Developments in Language Theory",
  year="2002",
  publisher="Springer",
  pages="293--303",
  isbn="978-3-540-46011-4"
}

@article{valence,
  title = {Sequential grammars and automata with valences},
  journal = {Theoretical Computer Science},
  volume = {276},
  number = {1},
  pages = {377--405},
  year = {2002},
  issn = {0304-3975},
  author = {Henning Fernau and Ralf Stiebe}
}

@inproceedings{tgba,
  title={From States to Transitions: Improving Translation of {LTL} Formulae to {B\"u}chi Automata},
  author={Dimitra Giannakopoulou and Flavio Lerda},
  booktitle={Formal Techniques for (Networked and) Distributed Systems},
  year={2002}
}

@book{Clarke,
  author = {Clarke, Edmund M. and Grumberg, Orna and Peled, Doron A.},
  title = {Model checking},
  year = 1999,
  publisher = {The MIT Press}
}

@book{Baier2008,
  author = {Baier, Christel and Katoen, Joost-Pieter},
  publisher = {The MIT Press},
  title = {Principles of Model Checking},
  year = 2008
}

@book{ClarkeHandbook,
  author = {Clarke, Edmund M. and Henzinger, Thomas A. and Veith, Helmut and Bloem, Roderick},
  title = {Handbook of Model Checking},
  year = {2018},
  publisher = {Springer},
  edition = {1st}
}

@article{dickson,
 author = {Leonard E. Dickson},
 journal = {American Journal of Mathematics},
 number = {4},
 pages = {413--422},
 publisher = {Johns Hopkins University Press},
 title = {Finiteness of the Odd Perfect and Primitive Abundant Numbers with n Distinct Prime Factors},
 volume = {35},
 year = {1913}
}

@InProceedings{georgBlindCounter,
  author="Zetzsche, Georg",
  title="Silent Transitions in Automata with Storage",
  booktitle="Automata, Languages, and Programming",
  year="2013",
  publisher="Springer",
  pages="434--445"
}

@article{ibarra,
author = {Ibarra, Oscar H.},
title = {Reversal-Bounded Multicounter Machines and Their Decision Problems},
year = {1978},
issue_date = {Jan. 1978},
publisher = {Association for Computing Machinery},
volume = {25},
number = {1},
issn = {0004-5411},
journal = {J. ACM},
pages = {116-–133}
}

@article{bakerbook,
title = {Reversal-bounded multipushdown machines},
journal = {Journal of Computer and System Sciences},
volume = {8},
number = {3},
pages = {315--332},
year = {1974},
issn = {0022-0000},
author = {Brenda S. Baker and Ronald V. Book}
}

@article{latteux,
title = {Cônes rationnels commutatifs},
journal = {Journal of Computer and System Sciences},
volume = {18},
number = {3},
pages = {307--333},
year = {1979},
author = {Michel Latteux}
}

@InProceedings{logspaceconversion,
author={Baumann, Pascal and D'Alessandro, Flavio and Ganardi, Moses and Ibarra, Oscar and McQuillan, Ian and Sch{\"u}tze, Lia and Zetzsche, Georg},
title={Unboundedness Problems for Machines with Reversal-Bounded Counters},
booktitle={Foundations of Software Science and Computation Structures -- 26th
International Conference ({FOSSACS} 2023)},
year={2023},
publisher={Springer},
pages={240--264}
}

@article{parikh1966context,
  title={On context-free languages},
  author={Parikh, Rohit J.},
  journal={Journal of the ACM (JACM)},
  volume={13},
  number={4},
  pages={570--581},
  year={1966},
  publisher={ACM New York, NY, USA}
}

@book{minsky,
  author = {Minsky, Marvin L.},
  publisher = {Prentice-Hall},
  series = {Prentice-Hall Series in Automatic Computation},
  title = {Computation: Finite and Infinite Machines},
  year = 1967
}

@InProceedings{grobler2023remarks,
	author =	{Grobler, Mario and Sabellek, Leif and Siebertz, Sebastian},
	title =	{{Remarks on Parikh-Recognizable Omega-languages}},
	booktitle =	{32nd EACSL Annual Conference on Computer Science Logic (CSL 2024)},
	pages =	{31:1--31:21},
	series =	{Leibniz International Proceedings in Informatics (LIPIcs)},
	ISBN =	{978-3-95977-310-2},
	ISSN =	{1868-8969},
	year =	{2024},
	volume =	{288},
	publisher =	{Schloss Dagstuhl -- Leibniz-Zentrum für Informatik},
}

@article{ramsey,
author = {Bergstr\"{a}\ss{}er, Pascal and Ganardi, Moses and Lin, Anthony W. and Zetzsche, Georg},
title = {Ramsey Quantifiers in Linear Arithmetics},
year = {2024},
publisher = {Association for Computing Machinery},
volume = {8},
doi = {10.1145/3632843},
journal = {Proceedings of the {ACM} on Programming Languages}
}

@InProceedings{buechiVASSsep,
	author = {Baumann, Pascal and Meyer, Roland and Zetzsche, Georg},
	title =	{{Regular Separability in B\"{u}chi VASS}},
	booktitle =	{40th International Symposium on Theoretical Aspects of Computer Science (STACS 2023)},
	pages =	{9:1--9:19},
	series =	{Leibniz International Proceedings in Informatics (LIPIcs)},
	ISBN =	{978-3-95977-266-2},
	ISSN =	{1868-8969},
	year =	{2023},
	volume =	{254},
	publisher =	{Schloss Dagstuhl -- Leibniz-Zentrum für Informatik}
}

@inproceedings{zvassnz,
author = {Haase, Christoph and Zetzsche, Georg},
title = {Presburger arithmetic with stars, rational subsets of graph groups, and nested zero tests},
year = {2021},
publisher = {IEEE Press},
booktitle = {Proceedings of the 34th Annual ACM/IEEE Symposium on Logic in Computer Science (LICS 2019)},
articleno = {58},
numpages = {14}
}

@article{landweber,
  title={Decision problems for $\omega$-automata},
  author={Lawrence H. Landweber},
  journal={Mathematical systems theory},
  year={1969},
  volume={3},
  pages={376-384},
}

@InProceedings{parikhsep,
	author =	{Clemente, Lorenzo and Czerwinski, Wojciech and Lasota, Slawomir and Paperman, Charles},
	title =	{{Regular Separability of Parikh Automata}},
	booktitle =	{44th International Colloquium on Automata, Languages, and Programming (ICALP 2017)},
	pages =	{117:1--117:13},
	series =	{Leibniz International Proceedings in Informatics (LIPIcs)},
	year =	{2017},
	volume =	{80},
	publisher =	{Schloss Dagstuhl -- Leibniz-Zentrum für Informatik}
}

@InProceedings{GaleStewart,
title = {Infinite Games with Perfect Information},
booktitle = {Contributions to the Theory of Games (AM-28), Volume II},
author = {David Gale and Frank M. Stewart},
publisher = {Princeton University Press},
pages = {245--266},
isbn = {9781400881970},
year = {1953}
}

@misc{parikhComplexity,
	title={Parikh Images of Regular Languages: Complexity and Applications}, 
	author={Anthony W. To},
	year={2010},
	eprint={1002.1464},
	archivePrefix={arXiv},
	primaryClass={cs.LO}
}

@article{semilin,
	ISSN = {00029947},
	author = {Seymour Ginsburg and Edwin H. Spanier},
	journal = {Transactions of the American Mathematical Society},
	number = {2},
	pages = {333--368},
	publisher = {American Mathematical Society},
	title = {Bounded Algol-Like Languages},
	volume = {113},
	year = {1964}
}

@article{semilinUpper1,
	author = {Itshak Borosh and Leon B. Treybig},
	journal = {Proceedings of the American Mathematical Society},
	number = {2},
	pages = {299--304},
	publisher = {American Mathematical Society},
	title = {Bounds on Positive Integral Solutions of Linear Diophantine Equations},
	urldate = {2024-04-10},
	volume = {55},
	year = {1976}
}

@article{semilinUpper2,
	author = {Joachim von zur Gathen and Malte Sieveking},
	journal = {Proceedings of the American Mathematical Society},
	number = {1},
	pages = {155--158},
	publisher = {American Mathematical Society},
	title = {A Bound on Solutions of Linear Integer Equalities and Inequalities},
	volume = {72},
	year = {1978}
}

@incollection{karp,
	author = {Karp, Richard},
	booktitle = {Complexity of Computer Computations},
	pages = {85--103},
	publisher = {Plenum Press},
	title = {Reducibility among combinatorial problems},
	year = 1972
}

@inproceedings{schwentikHorn,
	author = {Verma, Kumar N. and Seidl, Helmut and Schwentick, Thomas},
	title = {On the complexity of equational horn clauses},
	year = {2005},
	publisher = {Springer},
	booktitle = {Proceedings of the 20th International Conference on Automated Deduction (CADE 2020)},
	pages = {337--352},
	numpages = {16}
}

@inproceedings{huynhUpper,
	author = {Dung T. Huynh},
	title = {The Complexity of Semilinear Sets},
	year = {1980},
	publisher = {Springer},
	booktitle = {Proceedings of the 7th Colloquium on Automata, Languages and Programming (ICALP 1980)},
	pages = {324--337},
	numpages = {14}
}

@article{huynhUpperSimplified,
	author       = {Dung T. Huynh},
	title        = {{A Simple Proof for the $\Sigma_2^\P$ Upper
	Bound of the Inequivalence Problem for Semilinear Sets}},
	journal      = {Journal of Information Processing and Cybernetics},
	volume       = {22},
	number       = {4},
	pages        = {147--156},
	year         = {1986}
}

@inproceedings{intexpr,
	author = {Stockmeyer, Larry J. and Meyer, Albert R.},
	title = {Word problems requiring exponential time (Preliminary Report)},
	year = {1973},
	publisher = {Association for Computing Machinery},
	url = {https://doi.org/10.1145/800125.804029},
	booktitle = {Proceedings of the Fifth Annual ACM Symposium on Theory of Computing (STOC 1973)},
	pages = {1--9},
	numpages = {9},
}

@article{polyhierarchy,
	title = {The polynomial-time hierarchy},
	journal = {Theoretical Computer Science},
	volume = {3},
	number = {1},
	pages = {1--22},
	year = {1976},
	author = {Larry J. Stockmeyer}
}

@article{detBuchiCompl,
	title = {Complementing deterministic Büchi automata in polynomial time},
	journal = {Journal of Computer and System Sciences},
	volume = {35},
	number = {1},
	pages = {59-71},
	year = {1987},
	author = {Robert P. Kurshan}
}

@inproceedings{haaseNexp,
	author = {Haase, Christoph},
	title = {Subclasses of presburger arithmetic and the weak EXP hierarchy},
	year = {2014},
	isbn = {9781450328869},
	publisher = {Association for Computing Machinery},
	booktitle = {Proceedings of the Joint Meeting of the Twenty-Third EACSL Annual Conference on Computer Science Logic (CSL 2014) and the Twenty-Ninth Annual ACM/IEEE Symposium on Logic in Computer Science (LICS 2014)},
	articleno = {47},
	numpages = {10}
}

@phdthesis{cadilhac2013automates,
    author = {Micha{\"e}l Cadilhac},
    title = {Automata with a semilinear constraint},
    school = {Universit{\'e} de Montr{\'e}al},
    year = {2013}
}

@book{libkinElements,
  author = {Libkin, Leonid},
  interhash = {d4a455a8c271c336f6b705c96f3f074a},
  intrahash = {28dfdc9b3069689082610e47d708a243},
  isbn = {3540212027},
  publisher = {Springer},
  title = {Elements of Finite Model Theory},
  year = 2004
}

@InProceedings{haaseTaming,
  author =	{Chistikov, Dmitry and Haase, Christoph},
  title =	{{The Taming of the Semi-Linear Set}},
  booktitle =	{43rd International Colloquium on Automata, Languages, and Programming (ICALP 2016)},
  pages =	{128:1--128:13},
  series =	{Leibniz International Proceedings in Informatics (LIPIcs)},
  ISBN =	{978-3-95977-013-2},
  ISSN =	{1868-8969},
  year =	{2016},
  volume =	{55},
  publisher =	{Schloss Dagstuhl -- Leibniz-Zentrum für Informatik}
}

@book{chang2013model,
  title={Model Theory: Third Edition},
  author={Chang, Chen C. and J. Keisler, Howard},
  isbn={9780486310954},
  series={Dover Books on Mathematics},
  url={https://books.google.de/books?id=sRi0AAAAQBAJ},
  year={2013},
  publisher={Dover Publications}
}

\end{document}